\pgfplotsset{compat=1.10}
\newtheorem{corollary}{Corollary}
\newtheorem{lemma}{Lemma}
\theoremstyle{definition}
\begin{document}
\title{Rate Splitting Multiple Access for RIS-aided URLLC MIMO Broadcast Channels
}
\author{Mohammad Soleymani, \emph{Member, IEEE},  
Ignacio Santamaria, \emph{Senior Member, IEEE}, 
Eduard Jorswieck, \emph{Fellow, IEEE}, Marco Di Renzo, \emph{Fellow, IEEE},
Robert Schober, \emph{Fellow, IEEE}, and
Lajos Hanzo, \emph{Life Fellow, IEEE}%\vspace{-.4cm}
 \\ \thanks{ 
Mohammad Soleymani is with the Signal and System Theory Group, University of Paderborn, 33098 Paderborn, Germany (email: \protect\url{mohammad.soleymani@uni-paderborn.de}).  

Ignacio Santamaria is with the Department of Communications Engineering, Universidad de Cantabria, 39005 Santander, Spain (email: \protect\url{i.santamaria@unican.es}).

Eduard Jorswieck is with the Institute for Communications Technology, Technische Universit\"at Braunschweig, 38106 Braunschweig, Germany
(email: \protect\url{jorswieck@ifn.ing.tu-bs.de}).

Marco Di Renzo is with Universit\'e Paris-Saclay, CNRS, CentraleSup\'elec, Laboratoire des Signaux et Syst\`emes, 3 Rue Joliot-Curie, 91192 Gif-sur-Yvette, France. (email: \protect\url{marco.di-renzo@universite-paris-saclay.fr}), and with King's College London, Centre for Telecommunications Research -- Department of Engineering, WC2R 2LS London, United Kingdom (email: \protect\url{marco.di_renzo@kcl.ac.uk}).

%Marco Di Renzo is with Université Paris-Saclay, CNRS, CentraleSupélec, Laboratoire des Signaux et Systèmes, 91192 Gif-sur-Yvette, France (email: \protect\url{marco.di-renzo@universite-paris-saclay.fr}), and with King's College London, Center for Telecommunications Research, Department of Engineering, WC2R 2LS London, United Kingdom (\protect\url{marco.di_renzo@kcl.ac.uk}).

Robert Schober is with the Institute for Digital Communications, Friedrich Alexander University of Erlangen-Nuremberg, Erlangen 91058, Germany (email: \protect\url{robert.schober@fau.de}).

Lajos Hanzo is with the Department of Electronics and Computer Science, University of Southampton, SO17 1BJ Southampton, United Kingdom (email:
\protect\url{lh@ecs.soton.ac.uk}).

The work of I. Santamaria was funded by MICIU/AEI /10.13039/501100011033, under Grant PID2022-137099NB-C43 (MADDIE and FEDER, UE), and by European Commission’s Horizon Europe, Smart Networks and Services Joint Undertaking, research and innovation program under grant agreement 101139282, 6G-SENSES project. The work of E. Jorswieck was supported by the Federal Ministry of Education and Research (BMBF, Germany) through the Program of Souver\"an. Digital. Vernetzt. joint Project 6G-RIC, under Grant 16KISK031, and by European Union's (EU's) Horizon Europe project 6G-SENSES under Grant 101139282. The work of M. Di Renzo was supported in part by the European Union through the Horizon Europe project COVER under grant agreement number 101086228, the Horizon Europe project UNITE under grant agreement number 101129618, the Horizon Europe project INSTINCT under grant agreement number 101139161, and the Horizon Europe project TWIN6G under grant agreement number 101182794, as well as by the Agence Nationale de la Recherche (ANR) through the France 2030 project ANR-PEPR Networks of the Future under grant agreement NF-PERSEUS 22-PEFT-004, and by the CHIST-ERA project PASSIONATE under grant agreements CHIST-ERA-22-WAI-04 and ANR-23-CHR4-0003-01. Robert Schober’s work was supported by the Federal Ministry for Research, Technology and Space (BMFTR) in Germany in the program of ``Souver\"an. Digital. Vernetzt.'' joint project 6G-RIC (Project-ID 16KISK023) and the Deutsche Forschungsgemeinschaft (DFG, German Research Foundation) under projects SFB 1483 (Project-ID 442419336, EmpkinS). Lajos Hanzo would like to acknowledge the financial support of the Engineering and Physical Sciences Research Council (EPSRC) projects under grant EP/X01228X/1, EP/Y026721/1, EP/W032635/1 and EP/X04047X/1.
}}
\maketitle
\begin{abstract}
The performance of modern wireless communication systems is typically limited by interference. The impact of interference can be even more severe in ultra-reliable and low-latency communication (URLLC) use cases. A powerful tool for managing interference is rate splitting multiple access (RSMA), which encompasses many multiple-access technologies like non-orthogonal multiple access (NOMA), spatial division multiple access (SDMA), and broadcasting. Another effective technology to enhance the performance of URLLC systems and mitigate interference is constituted by reconfigurable intelligent surfaces (RISs). This paper develops RSMA schemes for multi-user multiple-input multiple-output (MIMO) RIS-aided broadcast channels (BCs) based on finite block length (FBL) coding. We show that RSMA and RISs can substantially improve the spectral efficiency (SE) and energy efficiency (EE) of MIMO RIS-aided URLLC systems. Additionally, the gain of employing RSMA and RISs noticeably increases when the reliability and latency constraints are more stringent. Furthermore, RISs impact RSMA differently, depending on the user load. If the system is underloaded, RISs are able to manage the interference sufficiently well, making the gains of RSMA small. However, when the user load is high, RISs and RSMA become synergetic.
\end{abstract}
\begin{IEEEkeywords}
Finite block length coding, low latency, max-min energy efficiency, max-min rate,
 MIMO systems,  reconfigurable intelligent surface, ultra-reliable communications.
\end{IEEEkeywords}%
%%\vspace{-.5cm}
\section{Introduction}\label{1}
%%\vspace{-.1cm}
The sixth generation of wireless communication systems (6G) has to substantially improve the spectral efficiency (SE), energy efficiency (EE), peak and average data rate, latency, and reliability \cite{wang2023road, gong2022holographic}. To reach these ambitious goals, 6G has to overcome numerous challenges. In particular, due to a shortage in the available spectrum, we may have to operate in regimes where the number of users is significantly higher than the number of available resource blocks, which leads to strong interference. The impact of interference can be even more severe in ultra-reliable low-latency communication (URLLC) systems, which are needed to guarantee the extra requirements of 6G applications such as cloud gaming or robotic aided surgery. A powerful technique to handle interference is rate splitting multiple-access (RSMA), which encompasses non-orthogonal multiple access (NOMA), spatial division multiple access (SDMA), broadcasting, and multi-casting \cite{mao2022rate}. Another promising technology conceived for managing interference is represented by reconfigurable intelligent surfaces (RISs) \cite{wu2021intelligent, di2020smart}. In this paper, we propose RSMA schemes for enhancing the SE and EE of RIS-aided multi-user (MU) multiple-input multiple-output (MIMO) URLLC systems.

\subsection{Literature Review}\label{sec-i-a}
The optimal treatment of interference critically depends on how strong it is. When the interference is \textit{weak}, treating interference as noise (TIN) is optimal from a generalized degree of freedom (GDoF) or from a sum-rate maximization point of view \cite{geng2015optimality,annapureddy2009gaussian}. By contrast, the optimal approach to deal with \textit{strong} interference is to decode, remodulate and subtract it \cite{sato1981capacity}, which is referred to as successive interference cancellation (SIC). Unfortunately, the optimal decoding strategy is unknown or is very complex for the operational regimes between these two extreme cases, which include many practical scenarios. In these cases, we may often have to utilize interference-management techniques, which are not necessarily optimal. One such effective interference-management technique is RSMA, which can exploit the benefits of both the TIN and SIC schemes \cite{mao2022rate}.  {In the downlink (DL) RSMA\footnote{ {The focus of this work is on the downlink communications. We refer the reader to \cite{mao2022rate} for a distinction between the DL and uplink (UL) RSMA.}}, the total rate of each user can be split into two types of messages, which are referred to as the common and private messages.} The common messages are decoded by a group of (or even all) the users, while the private messages are decoded only by the intended user. Moreover, each user first decodes the corresponding common messages, and then, removes them from the received signal to decode its intended private message. Indeed, to decode the private message, each user applies SIC to the common messages and treats the private messages of the other users as noise. 

 {RSMA is a highly adaptive multiple-access technique that can handle interference  effectively \cite{clerckx2016rate, mishra2022rate, joudeh2016sum, soleymani2022rate}.}
There are various RSMA schemes with different formats for the common messages. The most practical RSMA scheme is the 1-layer rate splitting (RS), which utilizes only a single common message that has to be decoded by all the users \cite{joudeh2016sum, soleymani2022rate, zhou2021rate, mao2021rate, dizdar2021rate, soleymani2023rate, li2020rate, flores2021tomlinson, soleymani2023energy}. 
 {Moreover, the most attractive RSMA scheme is the generalized RS, which features $(K-1)$ layers of message decoding that require a large number of different common messages if the number of users, $K$, is high \cite{mao2018rate}.} Indeed, the complexity of RSMA drastically increases when the number of layers grows, which may affect its practicality. This is even more important for URLLC systems, since decoding multiple layers of common messages may increase the processing latency. Therefore, in this paper, we consider only the 1-layer RS scheme.

Another technology to manage the interference and improve the coverage is RIS, which has been shown to be promising in enhancing the performance of various networks, including the broadcast channel (BC) \cite{huang2019reconfigurable,wu2019intelligent, soleymani2024energy, soleymani2022noma}, $K$-user interference channel (IC) \cite{jiang2021achievable, bafghi2022degrees, xu2023enhanced, liu2024ris, santamaria2023interference, huang2020achievable}, multiple-access channel (MAC) \cite{you2021reconfigurable,fotock2023energy}, cognitive radio systems \cite{zhang2020intelligent}, and URLLC systems \cite{soleymani2023optimization, soleymani2023spectral, soleymani2024optimization}.  {RISs are capable of ameliorating the effective channel, which can help to increase the channel gains for the intended links and/or attenuate the interfering channels. This feature enables RIS to operate as an interference-management technology, especially in ICs \cite{jiang2021achievable, bafghi2022degrees, xu2023enhanced, liu2024ris, santamaria2023interference}. 
For instance, the authors of \cite{jiang2021achievable} proposed interference-neutralizing designs for $K$-user single-input, single-output (SISO) ICs, employing RISs. Moreover, in \cite{santamaria2023interference}, an RIS is used to reduce interference leakage in a $K$-user MIMO IC.}

 {Even though RISs are capable of handling interference in specific scenarios, the performance of systems supporting high user load may be further improved when leveraging RISs along with other interference-management techniques such as improper Gaussian signaling (IGS), NOMA, and RSMA. For instance, the superiority of IGS over proper signaling in RIS-aided systems has been established in \cite{fang2023improper, soleymani2022improper, yu2021maximizing}. Moreover, the authors of \cite{soleymani2022noma, liu2023optimization, li2023ris, soleymani2023noma} showed that NOMA can improve the performance of various types of multiple-antenna RIS-aided systems. Additionally, the
benefits of RSMA in RIS-aided BCs has been investigated in \cite{soleymani2022rate, katwe2022ratetwc, niu2023active, zhang2023joint}. Specifically, the authors of \cite{soleymani2022rate} proposed an optimization framework for 1-layer RS in MIMO RIS-assisted systems and showed that RSMA can enhance the SE and EE of RIS-aided BCs. The authors of \cite{niu2023active} considered the SE and EE tradeoffs in a single-cell multiple-input single-output (MISO) BC with an active RIS and showed that RSMA is capable of enhancing the system performance.}

The studies in \cite{soleymani2022rate, katwe2022ratetwc, niu2023active, zhang2023joint} on RSMA for RIS-aided systems showed that RSMA and RISs can be mutually beneficial. However, these papers considered the asymptotic Shannon rate as the performance metric of interest.  {To realize low latency communications, finite block length (FBL) coding has to be utilized, which makes the asymptotic Shannon rate an inaccurate performance metric.} Specifically, the achievable data rate of FBL coding is lower than the Shannon rate and depends on the channel dispersion, the packet length, and the decoding error probability. Thus, the existing solutions for RSMA in RIS-aided systems based on the Shannon rate cannot be applied to URLLC systems utilizing FBL coding.
 {Unfortunately, there are only a limited number of studies on RSMA in systems employing FBL coding \cite{soleymani2023optimization, wang2023flexible, xu2023max, xu2022rate, singh2023rsma,katwe2022rate, dhok2022rate, pala2023spectral, katwe2024rsma, jorswieck2025urllc}. There are fewer treatises investigating the performance of RSMA in  RIS-aided systems relaying on FBL coding, as shown in Table \ref{table-1}. Furthermore, there is no work studying RSMA in MIMO systems with FBL coding, not even for systems operating without RISs. The authors of \cite{soleymani2023optimization} proposed 1-layer RS schemes for MISO systems assisted by beyond diagonal RISs and showed that RSMA and RISs can substantially enhance the SE and EE of multi-cell MISO BCs. The authors of \cite{katwe2022rate} showed that 1-layer RS is capable of increasing the global EE of a single-cell MISO BC aided by an RIS. In \cite{ singh2023rsma, dhok2022rate}, it was shown that RSMA can improve the performance of different RIS-assisted SISO systems. The papers in \cite{pala2023spectral, katwe2024rsma, jorswieck2025urllc} showed that RSMA can improve the DL SE of a BC with a multiple-antenna BS serving multiple single-antenna users.}

\begin{table}
\centering
\scriptsize
%\footnotesize
\caption{Comparison of most related works on RSMA in  systems with
FBL coding.}\label{table-1}
\begin{tabular}{|c|c|c|c|c|c|c|c|c|c|c|c|c|c|c|}
	\hline 
 &This paper&\cite{soleymani2023optimization}&\cite{singh2023rsma}&\cite{katwe2022rate}&\!\!\! {\cite{dhok2022rate, pala2023spectral, katwe2024rsma, jorswieck2025urllc}}&\!\!\! {\cite{wang2023flexible, xu2023max, xu2022rate}}
 \\
 \hline
 RIS&$\surd$& $\surd$& $\surd$&$\surd$&$\surd$&-
 \\
 \hline
 Ch. disp. in \cite{scarlett2016dispersion}&$\surd$& $\surd$& -&-&-&-
 \\
 \hline
 SE metrics&$\surd$&$\surd$&$\surd$&-&$\surd$&$\surd$
 \\
 \hline
 EE metrics&$\surd$&$\surd$&-&$\surd$&-&-
 \\
 \hline
 MIMO systems&$\surd$&-&-&-&-&-
 \\
\hline
Multiple streams&$\surd$&-&-&-&-&-
\\
\hline
		\end{tabular}
\normalsize
\end{table}

\subsection{Motivation}
 {MIMO systems play an essential role in modern wireless communication networks. However, \cite{soleymani2024optimization} is the only study on resource allocation for multi-stream  MU-MIMO systems using FBL coding, and there is no literature on RSMA designed for MIMO URLLC systems.} Resource allocations schemes for FBL encoding are much more difficult to handle than the asymptotic Shannon rate and the solutions obtained for the Shannon rate cannot be reused in RSMA URLLC systems. Upon utilizing FBL coding, the achievable data rate depends on the asymptotic Shannon rate, the channel dispersion, the packet length, and the reliability criterion. The channel dispersion is a fractional function of the signal-to-interference-plus-noise ratio (SINR), which makes the optimization of the FBL rate and/or the EE much more challenging than in the asymptotic case. Optimizing the rate for FBL coding is even more complex in MIMO systems, where the optimization variables are matrices.

 {In \cite{soleymani2024optimization}, the authors proposed resource allocation schemes for MU-MIMO RIS-aided URLLC systems when interference is treated as noise, and multiple data streams per user are transmitted. {It was shown that RISs can substantially improve the system performance even with SDMA. However, the benefits of RISs are reduced when the number of users grows. Indeed, the level of interference increases with the number of users, and an RIS alone may not be capable of mitigating the interference by itself.} Thus, in this paper, we aim for integrating RSMA into MIMO RIS-aided URLLC systems.}

 {As shown in Table \ref{table-1}, there is no work on RSMA designed for MIMO systems using FBL coding. Although the authors of \cite{soleymani2023optimization, singh2023rsma,katwe2022rate, dhok2022rate} provided valuable insights into the performance of RSMA in RIS-assisted systems, the algorithms developed in \cite{soleymani2023optimization, singh2023rsma,katwe2022rate, dhok2022rate} cannot be applied to MIMO systems. Indeed, resource allocation for MIMO systems requires more advanced linear algebra tools and is much more complex than that for MISO systems, especially in the FBL case.
Additionally, there are several open research questions, especially for MIMO RIS-aided URLLC systems, that have not been addressed in prior studies and require further investigation. 
In particular, our goal is to answer the following questions:}
\begin{itemize}
    \item   {How does the performance of RSMA and RISs vary with the reliability and latency constraints in MU-MIMO URLLC systems?}

\item   {How do RISs and RSMA impact each other in different operational scenarios?}

\item   {How do the benefits of RISs and RSMA vary for different objective functions, notably SE and EE?}

\item   {Considering RSMA and RIS, which is more beneficial in MU-MIMO URLLC systems when the user load is high?}

\end{itemize}
We show that RSMA and RISs provide higher gains when the reliability and latency constraints are more demanding. Moreover, RSMA further improves the gains provided by RISs when the user load is high. Furthermore, the benefits of RSMA are more significant than those of RISs if the number of users is higher than the number of transmit antennas (TAs) at the base station (BS). 

\subsection{Contributions}
In this study, we develop energy-efficient and spectral-efficient resource allocation schemes for multi-user MIMO URLLC systems by leveraging RISs and RSMA. To the best of our knowledge, this is the first paper to propose RSMA schemes for MU-MIMO systems with FBL coding, where the asymptotic Shannon rate is not a precise metric for characterizing the achievable data rate. We summarize the main contributions made by this paper, as follows: 
\begin{itemize}
    \item  {We show that RSMA substantially increases the SE and EE of MU-MIMO URLLC systems, outperforming the SDMA schemes proposed in \cite{soleymani2024optimization}.  Interestingly, an RIS-aided system employing TIN may perform worse than a system without RISs that uses RSMA. This is particularly the case for systems having high user loads, which are typically interference-limited, and RISs may not be able to fully mitigate the heavy interference.}

    \item  {We show that the performance gains offered by RSMA and RISs increase when the packets and codewords become shorter, and the maximum tolerable bit error rate becomes lower.} Therefore, RSMA can provide higher gains over TIN in URLLC systems, especially when the number of transmit antennas at the BSs is less than  the number of users.

    \item  {We show that RSMA and RISs synergize their benefits in systems with a higher user load. However, when the user load is light, the RISs reduce the benefits of RSMA, since they can effectively mitigate the interference in this case. Hence, there is no need to use sophisticated multiple access schemes in these scenarios.} Additionally, the benefits provided by RISs become more substantial in systems having a high user load, when we utilize RSMA.

    \item  {The gains provided by RSMA become more significant in URLLC RIS-aided systems, when the power budget increases.} Indeed, when the system performance is limited by interference, increasing the transmission power and/or power budget of the transmitters increases the interference level, and thus, it does not noticeably increase the achievable rate and/or EE. However, when we employ RSMA, the interference is appropriately managed by RSMA, which significantly improves the performance at high signal-to-noise ratios (SNR).

    \item  We also develop single-stream RSMA schemes and compare them to the multiple-stream RSMA ones. We show that multiple-stream RSMA significantly outperforms single-stream RSMA. Additionally, switching from MISO to MIMO systems provides significant gains even when there are only a maximum of two data streams per user. Note that even though MISO systems support single-stream data transmissions, we cannot directly use MISO schemes as single-stream MIMO schemes since these solutions are not designed for multiple-antenna receivers. Indeed, the proposed RSMA schemes conceived for MIMO systems substantially outperform an upgraded version of the schemes proposed for MISO systems such as the algorithms in  \cite{soleymani2023optimization}.

\end{itemize}

\subsection{Paper Organization}
The paper is structured as follows. Section \ref{sec=ii} presents the network scenario, the channel models associated with RISs, and the signal model. Moreover, Section \ref{sec=ii} derives the rate and EE expressions, and formulates the optimization problems of interest. Section \ref{sec-iii} develops the algorithms proposed for SE and EE maximization. Section \ref{sec-iv} provides numerical results, while Section \ref{sec-v} summarizes our findings and concludes the paper.

\section{System model}\label{sec=ii}
We consider a multicell MIMO RIS-aided DL BC having $L$ BSs, each equipped with $N_{BS}$ transmit antennas (TAs),  {as shown in Fig. \ref{Fig-sys-model}}.
Each BS serves $K$ users having $N_u$ receive antennas (RAs) each.  {Note that each user is served by only a single BS.} Moreover, $M$ reflective RISs with $N_{RIS}$ elements each aid the BSs. The data packets are assumed to have a finite block length $n$.
 {Moreover, we assume perfect instantaneous global channel state information (CSI), as is common in many studies on RISs \cite{ huang2019reconfigurable, wu2019intelligent, soleymani2022noma, pan2020multicell, soleymani2022improper, soleymani2022rate}. This assumption is also widely used in resource allocation solutions for URLLC systems \cite{nasir2020resource, wang2023flexible,  pala2022joint, ghanem2020resource, soleymani2023spectral}, specifically in systems with high channel coherence times. In such systems, the CSI estimation is easier and more accurate, allowing resource allocation solutions to be reused over multiple time slots with low pilot overhead. Additionally, studying RISs with perfect CSI reveals fundamental system tradeoffs and provides an upper performance bound.}

\begin{figure}[t!]
    \centering
\includegraphics[width=.44\textwidth]{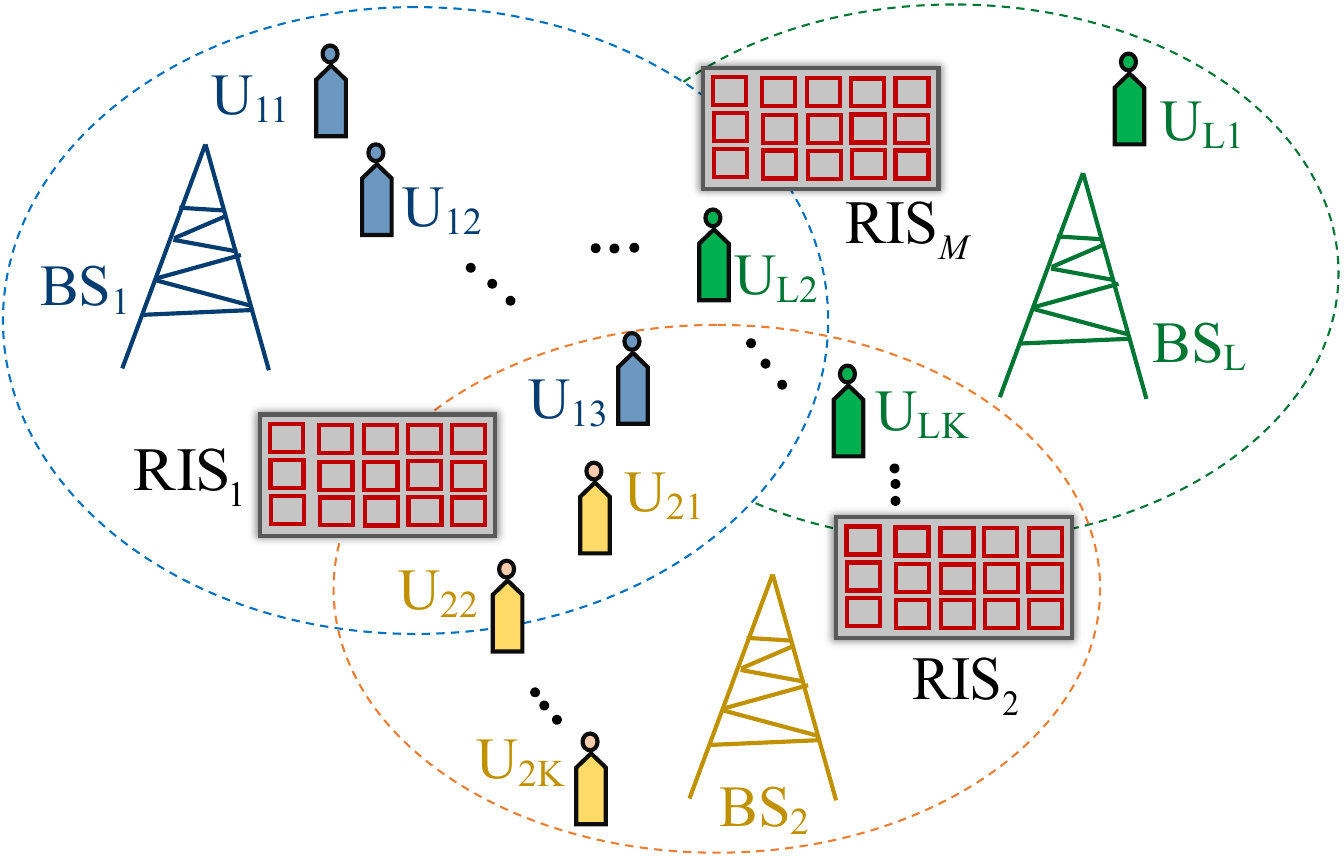}
     \caption{ {A multicell BC assisted by RISs.}}
	\label{Fig-sys-model}
 \end{figure}

\subsection{RIS Model}
We employ the RIS model of \cite{soleymani2022improper,pan2020multicell}, for the MIMO multicell BC considered. Therefore, the channel between BS $i$ and the $k$-th user in cell $l$, represented by U$_{lk}$, is given by
\begin{equation}\label{ch-equ}
\mathbf{H}_{lk,i}\!\left(\{\bf{\Upsilon}\}\right)\!\!=\!\!\! 
\underbrace{\sum_{m=1}^M\!\!\mathbf{G}_{lk,m}{\bf \Upsilon}_m\mathbf{G}_{m,i}}_{\text{Link through RIS}}
+\!\!\!\!
\underbrace{\mathbf{F}_{lk,i}}_{\text{Direct link}}
\!\!\!\!
\in\mathbb{C}^{N_{u}\times N_{BS}}
\!,
\end{equation}
where ${\bf F}_{lk,i}$ is the channel between the $i$-th BS and U$_{lk}$, ${\bf G}_{lk,m}$ is the channel between the $m$-th RIS and U$_{lk}$, ${\bf G}_{m,i}$ is the channel between the $i$-th BS and the $m$-th RIS, $\{{\bf \Upsilon}\} = \{{\bf \Upsilon}_m:\forall m\}$ groups all the RIS coefficients, where ${\bf \Upsilon}_m$ is the scattering matrix for the $m$-th RIS. In this paper, we consider diagonal RISs. Hence, we have
\begin{equation}
{\bf\Upsilon}_m=\text{diag}\left(\upsilon_{m1}, \upsilon_{m2},\cdots,\upsilon_{m{N_{RIS}}}\right),
\end{equation}
where $\upsilon_{mn}$ is the $n$-th reflection coefficient of the $m$-th RIS. The coefficients $\upsilon_{mn}$ for all $m$, $n$ are optimization variables, which control the propagation channels. When an RIS element operates in a nearly passive mode, the power of its output signal is equal to or less than the power of its input signal. Thus, we have $|\upsilon_{mn}|^2 \leq 1$. In this case, the set encompassing all the feasible $\upsilon_{mn}$ is \cite[Eq. (11)]{wu2021intelligent}
\begin{equation}
\mathcal{F}_{U}=\left\{\upsilon_{mn}:|\upsilon_{mn}|^2\leq 1 \,\,\,\forall m,n\right\}.
\end{equation}
which is a convex set. Another common assumption for the RIS elements is the unit modulus constraint $|\upsilon_{mn}| = 1$, which yields the non-convex set \cite[Eq. (47)]{soleymani2022rate}
\begin{equation}
\mathcal{F}_{I}=\left\{\upsilon_{mn}:|\upsilon_{mn}|= 1 \,\,\,\forall m,n\right\}.
\end{equation}
Hereafter, we represent the feasibility set of $\{{\bf \Upsilon}\}$ by $\mathcal{F}$, when we do not refer to a specific set. Furthermore, we drop the dependence of the channels on $\{{\bf \Upsilon}\}$ to simplify the notation.

%%\vspace{-.5cm}
\subsection{Signal Model}
  {We assume that each BS employs 1-layer RS to serve its associated users, which means that the transmit signal of BS $l$ is given by}
\begin{equation}
{\bf x}_l=
\underbrace{{\bf W}_l{\bf s}_{l,c}}_{\text{Common M.}}+
\underbrace{\sum_k{\bf W}_{lk}{\bf s}_{lk}}_{\text{Private M.}} \in \mathbb{C}^{ N_{BS}\times 1},
\end{equation}
where ${\bf s}_{l,c}\in \mathbb{C}^{ N_{BS}\times 1}$ is the common message of BS $l$, which is decoded by all of its associated users, ${\bf s}_{lk}\in \mathbb{C}^{ N_{BS}\times 1}$ is the private message intended for U$_{lk}$, and the matrices ${\bf W}_l\in \mathbb{C}^{ N_{BS}\times N_{BS}}$ and ${\bf W}_{lk}\in \mathbb{C}^{ N_{BS}\times N_{BS}}$ are, respectively, the beamforming matrices for the common message ${\bf s}_{l,c}$ and the private message ${\bf s}_{lk}$, intended for U$_{lk}$. The signals ${\bf s}_{l,c}$ and ${\bf s}_{lk}$ are assumed to be zero-mean, independent, complex, and proper Gaussian vectors for all $l, k$, with the identity matrix as their covariance matrix. Thus, the covariance matrix of ${\bf x}_l$ is \begin{equation}\label{(6)}
{\bf C}_l\!=\!\mathbb{E}\{{\bf x}_l{\bf x}_l^H\}\!
=\!{\bf W}_l{\bf W}_l^H\!\!+\!\!\sum_k\!{\bf W}_{lk}{\bf W}_{lk}^H\!\in\! \mathbb{C}^{ N_{BS}\times N_{BS}}\!,
\end{equation}%{multline}
where $\mathbb{E}\{ {\bf X}\}$ denotes the mathematical expectation of ${\bf X}$. Note that the transmit power of BS $l$ is given by $\text{Tr}({\bf C}_l)$, where $\text{Tr}({\bf X})$ denotes the trace of the square matrix ${\bf X}$.  We refer the reader to \cite[Section II.B]{mao2022rate} for more details on the signaling model of the 1-layer RS. We denote the set containing all the precoding matrices as $\{{\bf W}\}=\{{\bf W}_l,{\bf W}_{lk}:\forall l,k\}$.

The received signal at U$_{lk}$ is 
\begin{multline}
{\bf y}_{lk}=\sum_i{\bf H}_{lk,i}{\bf x}_i+{\bf n}_{lk}
=\!\!
\underbrace{
{\bf H}_{lk,l}{\bf W}_{l}{\bf s}_{l,c}
}_{\text{Common M.}}
+
\underbrace{
{\bf H}_{lk,l}{\bf W}_{lk}{\bf s}_{lk}
}_{\text{Private M.}}
\!
\\+
\underbrace{
{\bf H}_{lk,l}\sum_{j\neq k}{\bf W}_{lj}{\bf s}_{lj}
}_{\text{Intracell Interference}}
\!+\!
\underbrace{
\sum_{i\neq l}{\bf H}_{lk,i}{\bf x}_{i}
}_{\text{Intercell Interference}}
\!+\!
\underbrace{{\bf  n}_{lk}
}_{\text{Noise}}\in\mathbb{C}^{N_u\times 1},
\end{multline}
 {where ${\bf n}_{lk}\sim \mathcal{CN}({\bf 0}, \sigma^2 {\bf I} )$ is the zero-mean proper additive complex Gaussian noise at the receiver of U$_{lk}$, $\sigma^2$ is the noise variance at each receive antenna, and ${\bf I}$ is the identity matrix.}

Note that there is both intercell and intracell interference in the considered multi-cell system. The intercell interference is treated as noise, since decoding and canceling the intercell interference may require a large amount of signaling between the BSs \cite{soleymani2022rate}. Moreover, to apply SIC to the intercell interference may be very inefficient, because the BSs should transmit data at a rate that is supportable by outer-cell users. Hence, the data rate becomes limited by the links between the BS and the outercell users, which are typically much weaker than the links between the BS and the inner-cell users. 

\subsection{Channel Dispersion, Rate and EE Expressions}
We assume that the BSs employ FBL coding to support low-latency communication. Thus, the achievable data rate depends on the channel dispersion, the packet length, the reliability constraint, and on the traditional Shannon rate. In this subsection, we first formulate the achievable data rate in multi-user MIMO systems with FBL coding in Lemma \ref{lem-r} and then derive the achievable rate and EE of each user.
\begin{lemma}[\!\cite{polyanskiy2010,soleymani2024optimization}]\label{lem-r}
Upon utilizing the normal approximation (NA), the FBL rate of a user in an interference-limited multi-user MIMO system is
\begin{equation}\label{1-multi}
r=
\underbrace{\log \left|{\bf I} +{\bf D}^{-1}{\bf S} \right|}_{\text{\small Shannon Rate}}
-Q^{-1}(\epsilon)\sqrt{\frac{2\text{\em Tr}({\bf S}({\bf D}+{\bf S})^{-1})}{n}},
\end{equation}
 where $|{\bf X}|$ denotes the determinant of ${\bf X}$, $n$ is the packet length, $\epsilon$  is the maximum tolerable bit error rate for the message, $Q^{-1}$
 is the inverse of the Gaussian $Q$-function, ${\bf S}$ is the covariance  matrix of the desired signal, and ${\bf D}$ is the covariance matrix of the interference plus noise. Moreover, an achievable channel dispersion\footnote{The channel dispersion in Lemma \ref{lem-r} is suboptimal, but it is achievable by Gaussian signaling in multi-user systems in the presence of interference \cite{scarlett2016dispersion}. The optimal channel dispersion is $\text{Tr}({\bf I}- ({\bf I}+ {\bf D}^{-1}{\bf S})^{-2})$ \cite{polyanskiy2010channel}, but it is not achievable by Gaussian signals when there exists interference. Hence, in this paper, we adopt the achievable channel dispersion derived in \cite{scarlett2016dispersion}.} for the considered communication link is $2\text{\em Tr}({\bf S}({\bf D}+{\bf S})^{-1})$.
\end{lemma}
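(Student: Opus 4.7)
The plan is to treat the lemma as a packaging of two well-established ingredients: the normal approximation of Polyanskiy--Poor--Verd\'u for point-to-point channels, and Scarlett's achievable dispersion for mismatched Gaussian decoding in the presence of interference. I would begin by fixing an arbitrary user, absorbing the intercell and intracell interference together with the thermal noise into a single colored Gaussian disturbance with covariance $\mathbf{D}$, since by assumption every interfering codeword is drawn i.i.d.\ from a proper complex Gaussian ensemble. Then I would whiten the effective receive signal by pre-multiplying with $\mathbf{D}^{-1/2}$, which produces an equivalent point-to-point MIMO Gaussian channel with desired-signal covariance $\mathbf{D}^{-1/2}\mathbf{S}\mathbf{D}^{-1/2}$ and unit-variance white noise.

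Next I would invoke the normal approximation: for a point-to-point Gaussian vector channel, any achievable coding scheme with packet length $n$ and block-error probability at most $\epsilon$ satisfies
\begin{equation}
r \;\approx\; C \;-\; Q^{-1}(\epsilon)\sqrt{V/n},
\end{equation}
where $C$ is the capacity and $V$ is the operational dispersion. For the whitened channel, $C = \log|\mathbf{I} + \mathbf{D}^{-1/2}\mathbf{S}\mathbf{D}^{-1/2}| = \log|\mathbf{I} + \mathbf{D}^{-1}\mathbf{S}|$ by Sylvester's determinant identity, which immediately gives the first term in \eqref{1-multi}.

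The main obstacle is the dispersion term, because the decoder here is \emph{not} the genuine maximum-likelihood decoder of the underlying multi-user channel; it is a Gaussian-codebook, nearest-neighbor decoder that treats the (generally non-Gaussian, cross-codeword-dependent) interference as if it were i.i.d.\ Gaussian. Consequently, Polyanskiy's optimal dispersion $\operatorname{Tr}(\mathbf{I} - (\mathbf{I}+\mathbf{D}^{-1}\mathbf{S})^{-2})$ is not directly applicable. I would bridge this gap by appealing to the mismatched-decoding analysis of Scarlett \emph{et al.}\ \cite{scarlett2016dispersion}, which shows that for Gaussian codebooks under a nearest-neighbor rule with the interference-plus-noise covariance $\mathbf{D}$, the achievable second-order coding rate is governed by the variance of the information density of the mismatched decoding metric. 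Specializing their MIMO result to our covariance pair $(\mathbf{S},\mathbf{D})$ yields the achievable dispersion $V = 2\operatorname{Tr}(\mathbf{S}(\mathbf{D}+\mathbf{S})^{-1})$; plugging this into the normal approximation above completes the derivation.

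Finally, I would remark on the footnote discrepancy: the factor $2$ and the form $\mathbf{S}(\mathbf{D}+\mathbf{S})^{-1}$ (rather than $\mathbf{I}-(\mathbf{I}+\mathbf{D}^{-1}\mathbf{S})^{-2}$) are the hallmarks of the mismatched analysis, and strict inequality between the two dispersions reflects the penalty paid for using Gaussian codebooks and Gaussian-interference-assuming decoding in the multi-user setting. Since our achievability claim only needs an attainable rate, this weaker but explicit dispersion suffices for all subsequent optimization in the paper.
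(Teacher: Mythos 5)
Your proposal is correct and follows essentially the same route as the paper, whose proof is simply a pointer to \cite[Theorem 68]{polyanskiy2010} for the normal approximation and to Lemmas 1 and 2 of \cite{soleymani2024optimization} (which in turn rest on the mismatched nearest-neighbor analysis of \cite{scarlett2016dispersion}) for the achievable dispersion $2\text{Tr}({\bf S}({\bf D}+{\bf S})^{-1})$. Your whitening argument, the Sylvester identity for the Shannon term, and the explicit acknowledgment that the optimal dispersion is inapplicable because the decoder is mismatched are exactly the content those citations encapsulate.
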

\begin{proof}
    Please refer to \cite[Theorem 68]{polyanskiy2010} and Lemma 1 and
 Lemma 2 in \cite{soleymani2024optimization}.
\end{proof}
In the 1-layer RS scheme, U$_{lk}$ first decodes the common message ${\bf s}_{l,c}$ while treating the other received signals as noise. Hence, the maximum decoding rate of ${\bf s}_{l,c}$ at U$_{lk}$ is
\begin{multline}\label{eq-9}
    r_{c,lk}=
\log \left|{\bf I} +{\bf D}^{-1}_{c,lk}{\bf S}_{c,lk} \right|
\\
-Q^{-1}(\epsilon_c)\sqrt{\frac{2\text{Tr}({\bf S}_{c,lk}({\bf D}_{c,lk}+{\bf S}_{c,lk})^{-1})}{n}},
\end{multline}
 where $n$ is the packet length of the common message, $\epsilon_c$ is  the maximum tolerable bit error rate for the common message,  ${\bf S}_{c,lk}={\bf H}_{lk,l}{\bf W}_l({\bf H}_{lk,l}{\bf W}_l)^H$ is the covariance matrix of the common message at the receiver of U$_{lk}$, and ${\bf D}_{c,lk}$ is the interference-plus-noise covariance matrix  when  decoding the common message ${\bf S}_{c,lk}$, which is given by
\begin{equation}
    {\bf D}_{c,lk}\!=\!\sigma^2{\bf I}\!+\!\!\sum_j\!{\bf H}_{lk,l}{\bf W}_{lj}{\bf W}_{lj}^H{\bf H}_{lk,l}^H\!+\!\!\sum_{i\neq l}\!{\bf H}_{lk,i}{\bf C}_{i}{\bf H}_{lk,i}^H,\!
\end{equation}
 where ${\bf C}_{i}$ is a convex function of ${\bf W}_i$ and ${\bf W}_{ij}$ for all $i,j$, and is given by \eqref{(6)}.

The signal ${\bf s}_{l,c}$ should be transmitted at a rate $r_{cl}$ that can be decoded by all the users associated with BS $l$. Therefore, the maximum transmission rate of ${\bf s}_{l,c}$ is given by
\begin{equation}\label{11}
    r_{c,l}=\min_k(r_{c,lk}).
\end{equation}
The constraint in \eqref{11} is referred to as the decodability constraint for the common message. Note that a user is in outage if it cannot decode the common message. Moreover, the user is unable to decode its own private message without successfully decoding the common message.

 After decoding ${\bf s}_{l,c}$, U$_{lk}$ removes ${\bf s}_{l,c}$ from its received signal and then decodes its own private message. Hence, the maximum achievable rate of ${\bf s}_{p,lk}$ at U$_{lk}$ is given by
\begin{equation}   \label{12}
    r_{p,lk}=
\log \left|{\bf I} +{\bf D}^{-1}_{lk}{\bf S}_{lk} \right|
-Q^{-1}(\epsilon_p)\sqrt{\frac{2\text{Tr}({\bf S}_{lk}{\bf D}_{c,lk}^{-1})}{n}},
\end{equation}
where $n$ is the packet length of the private message, $\epsilon_p$ is  the maximum tolerable bit error rate for the private message,  ${\bf S}_{lk} = {\bf H}_{lk,l}{\bf W}_{lk}({\bf H}_{lk,l}{\bf W}_{lk})^H$ is the covariance matrix of  the private message for U$_{lk}$ at its receiver, and ${\bf D}_{lk}$ is the  covariance matrix of the intracell and intercell interference  plus noise, which is given by
\begin{equation}
    {\bf D}_{lk}\!=\!\sigma^2{\bf I}\!+\!\sum_{j\neq k}{\bf H}_{lk,l}{\bf W}_{lj}{\bf W}_{lj}^H{\bf H}_{lk,l}^H\! +\! \sum_{i\neq l}{\bf H}_{lk,i}{\bf C}_{i}{\bf H}_{lk,i}^H.\!
\end{equation}
Note that ${\bf D}_{c,lk} = {\bf D}_{lk} + {\bf S}_{lk}$. Thus, we can write the achievable channel dispersion of the private message for U$_{lk}$  as $2\text{Tr}\left( {\bf S}_{lk}{\bf D}_{c,lk}^{-1}\right)$ . Additionally, we assume that the common  and private messages have the same finite block length $n$.

 The achievable rate of U$_{lk}$ is equal to the sum of its rate portion for the common message and the rate of its private message as
\begin{equation}
    r_{lk}=t_{lk}+r_{p,lk},
\end{equation}
where $t_{lk}$ is the rate portion of the common message ${\bf s}_{l,c}$ that  is dedicated to U$_{lk}$. Note that the rates ${\bf t} = [t_{lk}:\forall l,k]$ are  indeed optimization variables, and we have $\sum_k t_{lk}\leq r_{c,l}$.  Depending on the choice of ${\bf t}$, 1-layer RS can switch between SDMA, NOMA, and broadcasting when $K = 2$. If  $t_{lk} = 0$ for all $l,k$, the intracell interference is treated as noise, and the 1-layer RS scheme operates as SDMA. Moreover,  when $r_{p,lk} = 0$ for all $l, k$, all the users get their data rate only  from the corresponding common message, which is known as  broadcasting. For further discussions on possible solutions for the common and private rates, please refer to \cite{clerckx2019rate, mao2022rate}.

 Finally, the EE of U$_{lk}$ is \cite{zappone2015energy}
 \begin{equation}\label{(15)}
     e_{lk}=\frac{r_{lk}}{p_c+\eta\text{Tr}\left({\bf W}_{lk}{\bf W}_{lk}^H\right)
     +\frac{\eta}{K} \text{Tr}\left({\bf W}_{l}{\bf W}_{l}^H\right)
     },
 \end{equation}
 where 
$\eta^{-1}$ is the power efficiency of the BSs, and $p_c$ is the constant power consumption of the system, when transmitting data to a user, as given by \cite[Eq. (27)]{soleymani2022improper}.
 
\subsection{Discussions on the Latency and Reliability Constraint }\label{sec-ii-d}
 {In the 1-layer RS scheme, each user should first decode a common message and then apply SIC to the common message to decode its private message. Hence, if the common message is not decoded correctly, then the private message cannot be decoded either. It means that the probability that an error occurs  is
\begin{equation}
    \epsilon=\epsilon_c+(1-\epsilon_c)\epsilon_p \leq \epsilon_c+\epsilon_p.
\end{equation}}

To achieve a latency of less than $\tau$ seconds, the achievable rate of U$_{lk}$ has to be higher than or equal to $r^{th}_{lk}=\frac{2n}{\omega\tau}$ (b/s/Hz), where $\omega$ is the channel bandwidth. Note that we assume that the packet lengths for the common and private messages are equal. Thus, we consider $2n$ as the worst case for the packet length so that $r_{lk}>r^{th}_{lk}$ ensures both the common and private messages are successfully decoded within $\tau$ seconds. Hence, the latency constraint is equivalent to an achievable rate constraint. It is worth emphasizing that this modeling of the reliability and latency constraints aligns with  other studies on URLLC systems such as \cite[Remark 1]{liu2023energy}, \cite[Sec. II.D]{soleymani2023optimization}, \cite{he2021beamforming, soleymani2024optimization}.

\subsection{Problem Statement}
We consider both SE and EE maximization  problems. To enhance the SE, we maximize the minimum weighted rate among all users, which leads to the following optimization problem
\begin{subequations}\label{17}
\begin{align}%\label{ar-opt}
\label{17a}
 \underset{r,\{\mathbf{W}\},\{\bf{\Upsilon}\}\in\mathcal{F},{\bf t}
 }{\max}\,\,\,  & 
  r \\%&
\label{17b}  \text{s.t.}\,\,   \,&  r_{lk}
  \geq \max(\alpha_{lk}r,r_{lk}^{th}),\,\forall l,k,
  \\ 
\label{17c}  & \text{Tr}(\mathbf{C}_l)\leq P_l,\,\, \forall l,
\\
\label{17d}
 &%&&
t_{lk}\geq 0,\,\,\,\,\forall l,k,
\\
\label{17e}
&
\sum_kt_{lk}\leq \min_k(r_{c,lk}),\,\,\forall l,
 \end{align}
\end{subequations}
where $P_l$ is the power budget of BS $l$, the coefficients $\alpha_{lk}$ for all $l, k$ are the weights of the users reflecting their priorities, \eqref{17b} represents the latency constraint as discussed in Section \ref{sec-ii-d}, \eqref{17c} is the power budget constraint, \eqref{17d} is due to the fact that the rates cannot be negative, and \eqref{17e} is the decodability constraint of the common message.  {Additionally, the optimization variables are $r$, $\{\mathbf{W}\}$, $\{\bf{\Upsilon}\}$, and ${\bf t}$, where $r$ is an auxiliary variable, denoting the minimum weighted rate of users.} We, moreover, maximize the minimum weighted EE among users, which leads to
\begin{subequations}\label{18}
\begin{align}%\label{ar-opt}
\label{18a}
 \underset{e,\{\mathbf{W}\},\{\bf{\Upsilon}\}\in\mathcal{F},{\bf t}
 }{\max}\,\,\,  & 
  e \\%&
\label{18b}  \text{s.t.}\,\,   \,&  e_{lk}
  \geq \lambda_{lk}e,\,\forall l,k,
  \\%&
\label{18c}  &  r_{lk}
  \geq r_{lk}^{th},\,\forall l,k,
  \\ 
&\eqref{17c},\eqref{17d},\eqref{17e},   \end{align}
\end{subequations}
 where the coefficients $\lambda_{lk}$ for all $l, k$ are the weights of the users assigned based on their priorities,  \eqref{18c} is the latency constraint, and $e$ is an auxiliary variable, denoting the minimum weighted EE of users. Note that the algorithms proposed can also be applied to other optimization problems such as maximizing sum rate, global EE and geometric mean of the rates. In this paper, we consider maximization of the minimum weighted rate/EE, since the achievable rate/EE region can be calculated by solving and employing the rate/EE profile technique \cite{zeng2013transmit, zhang2010cooperative, soleymani2019energy, buzzi2016survey, soleymani2020rate}. Additionally, the minimum rate/EE can also be viewed as a metric for the fairness among the users \cite{zhu2023max}.

Hereafter, we refer to the maximum minimum weighted rate (or EE) as the max-min weighted rate (or EE). The optimization problems \eqref{17} and \eqref{18} are complex non-convex problems, since the rates are not jointly concave functions of $\{{\bf W}\}$ and $\{{\bf \Upsilon}\}$. In the next section, we propose efficient algorithms for solving \eqref{17} and \eqref{18}.

\section{Proposed Solutions for Spectral and Energy Efficiency Maximization}\label{sec-iii}
In this section, we derive solutions for \eqref{17} and \eqref{18} by utilizing powerful numerical optimization tools such as alternating optimization (AO), majorization minimization (MM), and the generalized Dinkelbach algorithm (GDA). Our solution is iterative, and starts with a feasible initial point\footnote{ {To find a feasible initial point, we can choose a random point, satisfying the power constraint for $\{{\bf W}\}$ and the unit modulus constraint for $\{{\bf \Upsilon}\}$. If this random point that satisfies \eqref{17c} and belongs to the set $\mathcal{F}_I$ does not satisfy the latency constraint in \eqref{18c}, we can run our algorithm to maximize the minimum rate of users for a few iterations to attain an initial point, satisfying all the constraints in \eqref{17} or \eqref{18}.}} denoted as $\{{\bf W}^{(0)}\}$ and $\{{\bf \Upsilon}^{(0)}\}$.  Then, we employ AO and update either $\{{\bf W}\}$ or $\{{\bf \Upsilon}\}$ in each step. More particularly, we develop iterative algorithms in which we first update $\{{\bf W}\}$ while $\{{\bf \Upsilon}\}$ is fixed to $\{{\bf \Upsilon}^{(z-1)}\}$, where $z$ is the iteration index. Then we alternate and update $\{{\bf \Upsilon}\}$ when we fix $\{{\bf W}\}$ to $\{{\bf W}^{(z)}\}$. We iterate this operation until the algorithm converges. Below, we describe our detailed solutions for updating $\{{\bf W}\}$ and $\{{\bf \Upsilon}\}$. 

\subsection{Updating $\{{\bf W}\}$}
In this subsection, we solve \eqref{17} and \eqref{18} when $\{{\bf \Upsilon}\}$ is fixed to $\{{\bf \Upsilon}^{(z-1)}\}$. We first provide the solution for the max-min weighted rate and then solve \eqref{18}.
\subsubsection{Maximizing the minimum weighted rate} 
The max-min weighted rate for fixed RIS coefficients can be written as
\begin{subequations}\label{(19)}
\begin{align}%\label{ar-opt}
\label{(19a)}
 \underset{r,\{\mathbf{W}\},{\bf t}
 }{\max}\,\,\,  & 
  r \\%&
\label{(19b)}  \text{s.t.}\,\,   \,&  r_{lk}\!\!\left(\!\{{\bf W}\}\!,\!\{{\bf \Upsilon}^{(z-1}\}\!\!\right)\!
  \geq \!\max(\alpha_{lk}r,r_{lk}^{th}),\forall lk,
  \\ 
\label{(19c)}  & \text{Tr}(\mathbf{C}_l)\leq P_l,\,\, \forall l,
\\
\label{(19d)}
 &%&&
t_{lk}\geq 0,\,\,\,\,\forall lk,
\\
\label{(19e)}
&
\sum_kt_{lk}\leq \min_k(r_{c,lk}),\,\,\forall l.
 \end{align}
\end{subequations}
Constraints \eqref{(19c)} and \eqref{(19d)} are convex, and \eqref{(19)} is linear in ${\bf t}$, and $r$. However, \eqref{(19)} is non-convex due to \eqref{(19b)} and \eqref{(19e)}. Specifically, $r_{p,lk}$ and $r_{c,lk}$ are non-concave functions of $\{{\bf W}\}$, which makes \eqref{(19)} non-convex. To derive a suboptimal solution for \eqref{(19)}, we leverage the MM method and calculate appropriate concave surrogate functions for $r_{p,lk}$ and $r_{c,lk}$. To this end, we utilize the lower bounds in \cite[Lemma 3]{soleymani2024optimization} and \cite[Lemma 4]{soleymani2024optimization}, which are provided below to facilitate the development of the algorithm.

\begin{lemma}[\!\cite{soleymani2022improper}]\label{lem-2}
Consider the arbitrary matrices ${\bf \Gamma}\in\mathbb{C}^{m\times n}$ and $\bar{\bf \Gamma}\in\mathbb{C}^{m\times n}$, and positive definite matrices ${\bf \Omega}\in\mathbb{C}^{m\times m}$ and $\bar{\bf \Omega}\in\mathbb{C}^{m\times m}$, where $m$ and $n$ are arbitrary natural numbers. Then, we have:
\begin{multline} 
\ln \left|\mathbf{I}+{\bf \Omega}^{-1}{\bf \Gamma}{\bf \Gamma}^H\right|\geq
 \ln \left|\mathbf{I}+{\bf \Omega}^{-1}\bar{{\bf \Gamma}}\bar{{\bf \Gamma}}^H\right|
\\-
\text{{\em Tr}}\left(
\bar{{\bf \Omega}}^{-1}
\bar{{\bf \Gamma}}\bar{{\bf \Gamma}}^H
\right)
+
2\mathfrak{R}\left\{\text{{\em Tr}}\left(
\bar{{\bf \Omega}}^{-1}
\bar{{\bf \Gamma}}{\bf \Gamma}^H
\right)\right\}\\
%&\hspace{.4cm}
-
\text{{\em Tr}}\left(
(\bar{{\bf \Omega}}^{-1}-(\bar{{\bf \Gamma}}\bar{{\bf \Gamma}}^H + \bar{{\bf \Omega}})^{-1})^H({\bf \Gamma}{\bf \Gamma}^H+{\bf \Omega})
\right).
\label{lower-bound}
\end{multline}
\end{lemma}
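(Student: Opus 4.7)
The stated inequality has the shape of an MM-type concave minorant of $\log|\mathbf{I}+{\bf \Omega}^{-1}{\bf \Gamma}{\bf \Gamma}^H|$ that touches the function at $(\bar{\bf \Gamma},\bar{\bf \Omega})$. The natural route, which I would pursue, is via the WMMSE/variational representation of the log-det. The plan is to first invoke Sylvester's identity to rewrite the LHS as $\log|\mathbf{I}+{\bf \Gamma}^H{\bf \Omega}^{-1}{\bf \Gamma}|$ and then apply the equality
\begin{equation*}
\log\bigl|\mathbf{I}+{\bf \Gamma}^H{\bf \Omega}^{-1}{\bf \Gamma}\bigr|=\max_{\mathbf{W}\succ\mathbf{0},\,\mathbf{U}}\Bigl\{\log|\mathbf{W}|-\text{Tr}\bigl(\mathbf{W}\,\mathbf{E}(\mathbf{U},{\bf \Gamma},{\bf \Omega})\bigr)+n\Bigr\},
\end{equation*}
where $\mathbf{E}(\mathbf{U},{\bf \Gamma},{\bf \Omega})=(\mathbf{I}-\mathbf{U}^H{\bf \Gamma})(\mathbf{I}-\mathbf{U}^H{\bf \Gamma})^H+\mathbf{U}^H{\bf \Omega}\mathbf{U}$ is the Gaussian MSE matrix and $n$ is the number of columns of ${\bf \Gamma}$. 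Any feasible pair $(\mathbf{W},\mathbf{U})$ inside the max automatically furnishes a lower bound on the LHS.

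The next step is to evaluate the minimizing filter and its associated weight at the reference point, obtaining $\bar{\mathbf{U}}=(\bar{\bf \Omega}+\bar{\bf \Gamma}\bar{\bf \Gamma}^H)^{-1}\bar{\bf \Gamma}$ and $\bar{\mathbf{W}}=\mathbf{I}+\bar{\bf \Gamma}^H\bar{\bf \Omega}^{-1}\bar{\bf \Gamma}$, and to substitute them as constants into the variational identity. Because $\mathbf{E}(\bar{\mathbf{U}},{\bf \Gamma},{\bf \Omega})$ is a quadratic-convex function of ${\bf \Gamma}$ (a sum of affine pieces and $(\bar{\mathbf{U}}^H{\bf \Gamma})(\bar{\mathbf{U}}^H{\bf \Gamma})^H$) and linear in ${\bf \Omega}$, weighting it by the PSD matrix $\bar{\mathbf{W}}$ and negating produces a function that is concave in $({\bf \Gamma},{\bf \Omega})$, which is precisely the property required for the bound to be usable inside the outer MM procedure.

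What remains is algebraic reduction via push-through / Woodbury identities. The two key simplifications are $\bar{\mathbf{W}}\bar{\mathbf{U}}^H=\bar{\bf \Gamma}^H\bar{\bf \Omega}^{-1}$, which produces the linear term $2\mathfrak{R}\{\text{Tr}(\bar{\bf \Omega}^{-1}\bar{\bf \Gamma}{\bf \Gamma}^H)\}$ after expanding the cross-terms in $\mathbf{E}(\bar{\mathbf{U}},{\bf \Gamma},{\bf \Omega})$, and $\bar{\mathbf{U}}\bar{\mathbf{W}}\bar{\mathbf{U}}^H=\bar{\bf \Omega}^{-1}-(\bar{\bf \Omega}+\bar{\bf \Gamma}\bar{\bf \Gamma}^H)^{-1}$, which is exactly the PSD weight multiplying ${\bf \Gamma}{\bf \Gamma}^H+{\bf \Omega}$ in the final term of the stated bound. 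Collecting the constants $\log|\bar{\mathbf{W}}|-\text{Tr}(\bar{\mathbf{W}})+n$ then collapses to $\log|\mathbf{I}+\bar{\bf \Omega}^{-1}\bar{\bf \Gamma}\bar{\bf \Gamma}^H|-\text{Tr}(\bar{\bf \Omega}^{-1}\bar{\bf \Gamma}\bar{\bf \Gamma}^H)$, reproducing the first two summands on the right-hand side.

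The principal obstacle I anticipate is the push-through bookkeeping itself, in particular collapsing $(\mathbf{I}+\bar{\bf \Gamma}^H\bar{\bf \Omega}^{-1}\bar{\bf \Gamma})\bar{\bf \Gamma}^H(\bar{\bf \Omega}+\bar{\bf \Gamma}\bar{\bf \Gamma}^H)^{-1}$ down to $\bar{\bf \Gamma}^H\bar{\bf \Omega}^{-1}$ without sign or conjugate-transpose slips. The positive semidefiniteness of $\bar{\bf \Omega}^{-1}-(\bar{\bf \Gamma}\bar{\bf \Gamma}^H+\bar{\bf \Omega})^{-1}$, which is what ultimately renders the bound concave in ${\bf \Gamma}$, follows for free from operator monotonicity of matrix inversion ($\mathbf{A}\succeq\mathbf{B}\succ\mathbf{0}\Rightarrow\mathbf{A}^{-1}\preceq\mathbf{B}^{-1}$); tightness at $(\bar{\bf \Gamma},\bar{\bf \Omega})$ holds because $(\bar{\mathbf{W}},\bar{\mathbf{U}})$ is the argmax of the variational problem at that point.
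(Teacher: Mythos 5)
Your derivation is correct, and the first thing to note is that the paper does not actually prove Lemma~\ref{lem-2}: the result is imported wholesale from \cite{soleymani2022improper}, so there is no in-paper argument to compare against, and a self-contained proof such as yours is a genuine addition rather than a rederivation. I checked the two push-through steps you flag as the main risk and both hold: from $\bar{{\bf \Gamma}}^H\bar{{\bf \Omega}}^{-1}(\bar{{\bf \Omega}}+\bar{{\bf \Gamma}}\bar{{\bf \Gamma}}^H)=(\mathbf{I}+\bar{{\bf \Gamma}}^H\bar{{\bf \Omega}}^{-1}\bar{{\bf \Gamma}})\bar{{\bf \Gamma}}^H$ one gets $\bar{\mathbf{W}}\bar{\mathbf{U}}^H=\bar{{\bf \Gamma}}^H\bar{{\bf \Omega}}^{-1}$, and then $\bar{\mathbf{U}}\bar{\mathbf{W}}\bar{\mathbf{U}}^H=(\bar{{\bf \Omega}}+\bar{{\bf \Gamma}}\bar{{\bf \Gamma}}^H)^{-1}\bar{{\bf \Gamma}}\bar{{\bf \Gamma}}^H\bar{{\bf \Omega}}^{-1}=\bar{{\bf \Omega}}^{-1}-(\bar{{\bf \Omega}}+\bar{{\bf \Gamma}}\bar{{\bf \Gamma}}^H)^{-1}$ by writing $\bar{{\bf \Gamma}}\bar{{\bf \Gamma}}^H=(\bar{{\bf \Omega}}+\bar{{\bf \Gamma}}\bar{{\bf \Gamma}}^H)-\bar{{\bf \Omega}}$; the constants collapse exactly as you say via Sylvester's identity and cyclicity of the trace, and the real-part term matches because $\mathfrak{R}\{\mathrm{Tr}(\bar{{\bf \Omega}}^{-1}\bar{{\bf \Gamma}}{\bf \Gamma}^H)\}=\mathfrak{R}\{\mathrm{Tr}(\bar{{\bf \Gamma}}^H\bar{{\bf \Omega}}^{-1}{\bf \Gamma})\}$ for Hermitian $\bar{{\bf \Omega}}$. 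One discrepancy is worth recording: your construction yields $\ln|\mathbf{I}+\bar{{\bf \Omega}}^{-1}\bar{{\bf \Gamma}}\bar{{\bf \Gamma}}^H|$ as the leading constant, whereas the lemma as printed has $\ln|\mathbf{I}+{\bf \Omega}^{-1}\bar{{\bf \Gamma}}\bar{{\bf \Gamma}}^H|$ with an unbarred ${\bf \Omega}$; the instantiation of the bound in Appendix~\ref{app-a}, Eq.~\eqref{(37)}, uses the barred quantity, so the printed statement carries a typo and your version is the one the paper actually relies on. Your observations that the weight $\bar{{\bf \Omega}}^{-1}-(\bar{{\bf \Gamma}}\bar{{\bf \Gamma}}^H+\bar{{\bf \Omega}})^{-1}$ is positive semidefinite (hence the surrogate is jointly concave) and that tightness at $(\bar{{\bf \Gamma}},\bar{{\bf \Omega}})$ follows from optimality of $(\bar{\mathbf{W}},\bar{\mathbf{U}})$ are exactly the properties the MM convergence claim in Section~\ref{sec-iii} requires, so the variational route delivers everything the paper needs from this lemma.
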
 
\begin{lemma}[\cite{soleymani2024optimization}]\label{lem-3} 
 The following inequality holds for arbitrary matrices ${\bf \Gamma}\in\mathbb{C}^{m\times n}$ and $\bar{{\bf \Gamma}}\in\mathbb{C}^{m\times n}$ and positive semi-definite ${\bf \Omega}\in\mathbb{C}^{m\times m}$ and $\bar{{\bf \Omega}}\in\mathbb{C}^{m\times m}$
\begin{multline}%{equation}  
\label{eq10}
f\left({\bf \Gamma},{\bf \Omega}\right)=\text{\em Tr}\left({\bf \Omega}^{-1}{\bf \Gamma}{\bf \Gamma}^H\right)\geq 
2\mathfrak{R}\left\{\text{\em Tr}\left(\bar{\bf \Omega}^{-1}\bar{\bf \Gamma}{\bf \Gamma}^H\right)\right\}
\\-
\text{\em Tr}\left(\bar{\bf \Omega}^{-1}\bar{\bf \Gamma}\bar{\bf \Gamma}^H\bar{\bf \Omega}^{-1}{\bf \Omega}\right),
\end{multline}%{equation}
where $m$ and $n$ are arbitrary natural numbers, and $\mathfrak{R}\{x\}$ returns the real value of $x$.
\end{lemma}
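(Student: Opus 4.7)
The plan is to recognize the inequality as a standard first-order (Taylor) minorization of a jointly convex function, then match the gradient terms with those written in the lemma. The function under consideration is the \emph{matrix fractional function}
\begin{equation}
f(\mathbf{\Gamma},\mathbf{\Omega})=\text{Tr}\!\left(\mathbf{\Omega}^{-1}\mathbf{\Gamma}\mathbf{\Gamma}^{H}\right),
\end{equation}
defined on pairs $(\mathbf{\Gamma},\mathbf{\Omega})$ with $\mathbf{\Omega}\succ \mathbf{0}$. So my first step is to establish that $f$ is jointly convex in $(\mathbf{\Gamma},\mathbf{\Omega})$. This is a classical result: one may invoke the Schur-complement characterization of its epigraph, or write $\mathbf{\Gamma}\mathbf{\Gamma}^{H}=\sum_{i}\mathbf{\gamma}_{i}\mathbf{\gamma}_{i}^{H}$ (columns of $\mathbf{\Gamma}$) and use the well-known joint convexity of $(\mathbf{\gamma},\mathbf{\Omega})\mapsto \mathbf{\gamma}^{H}\mathbf{\Omega}^{-1}\mathbf{\gamma}$ on $\mathbb{C}^{m}\times \mathbb{S}_{++}^{m}$ (cf. Boyd and Vandenberghe, Example 3.4), together with the fact that a sum of jointly convex functions is jointly convex.

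Once joint convexity is in hand, the first-order supporting-hyperplane inequality guarantees that for every $(\bar{\mathbf{\Gamma}},\bar{\mathbf{\Omega}})$ with $\bar{\mathbf{\Omega}}\succ\mathbf{0}$,
\begin{equation}
f(\mathbf{\Gamma},\mathbf{\Omega})\ge f(\bar{\mathbf{\Gamma}},\bar{\mathbf{\Omega}})
+\bigl\langle \nabla_{\mathbf{\Gamma}}f|_{(\bar{\mathbf{\Gamma}},\bar{\mathbf{\Omega}})},\,\mathbf{\Gamma}-\bar{\mathbf{\Gamma}}\bigr\rangle
+\bigl\langle \nabla_{\mathbf{\Omega}}f|_{(\bar{\mathbf{\Gamma}},\bar{\mathbf{\Omega}})},\,\mathbf{\Omega}-\bar{\mathbf{\Omega}}\bigr\rangle.
\end{equation}
The next step is therefore a short differential calculation. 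From $\mathrm{d}f=\text{Tr}\!\left(\mathbf{\Omega}^{-1}\mathrm{d}\mathbf{\Gamma}\,\mathbf{\Gamma}^{H}\right)+\text{Tr}\!\left(\mathbf{\Omega}^{-1}\mathbf{\Gamma}\,\mathrm{d}\mathbf{\Gamma}^{H}\right)-\text{Tr}\!\left(\mathbf{\Omega}^{-1}\mathrm{d}\mathbf{\Omega}\,\mathbf{\Omega}^{-1}\mathbf{\Gamma}\mathbf{\Gamma}^{H}\right)$, I read off that the $\mathbf{\Gamma}$-part of the linear term produces $2\mathfrak{R}\{\text{Tr}(\bar{\mathbf{\Omega}}^{-1}\bar{\mathbf{\Gamma}}(\mathbf{\Gamma}-\bar{\mathbf{\Gamma}})^{H})\}$, while the $\mathbf{\Omega}$-part contributes $-\text{Tr}(\bar{\mathbf{\Omega}}^{-1}\bar{\mathbf{\Gamma}}\bar{\mathbf{\Gamma}}^{H}\bar{\mathbf{\Omega}}^{-1}(\mathbf{\Omega}-\bar{\mathbf{\Omega}}))$.

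The final step is pure bookkeeping: expand the constant terms at $(\bar{\mathbf{\Gamma}},\bar{\mathbf{\Omega}})$ and verify the cancellations
\begin{equation}
f(\bar{\mathbf{\Gamma}},\bar{\mathbf{\Omega}})-2\text{Tr}(\bar{\mathbf{\Omega}}^{-1}\bar{\mathbf{\Gamma}}\bar{\mathbf{\Gamma}}^{H})+\text{Tr}(\bar{\mathbf{\Omega}}^{-1}\bar{\mathbf{\Gamma}}\bar{\mathbf{\Gamma}}^{H}\bar{\mathbf{\Omega}}^{-1}\bar{\mathbf{\Omega}})=0,
\end{equation}
which leaves exactly the right-hand side of \eqref{eq10}. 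I expect the main obstacle to be conceptual rather than computational: carefully handling complex-valued gradients so that the $\mathbf{\Gamma}$-linear part appears as $2\mathfrak{R}\{\cdot\}$ rather than as a single trace, and making explicit that the lemma's statement implicitly requires $\mathbf{\Omega},\bar{\mathbf{\Omega}}\succ\mathbf{0}$ (the positive semi-definite hypothesis is only usable on the limit, via continuity, when $\mathbf{\Omega}^{-1}$ exists). With those points pinned down, the inequality follows directly from joint convexity plus the first-order condition, and no further machinery is needed.
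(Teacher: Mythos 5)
Your proposal is correct. Note first that the paper does not prove this lemma at all --- it is imported by citation from \cite{soleymani2024optimization} --- so there is no in-paper argument to match against; what you give is a self-contained and valid derivation. The two ingredients both check out: the matrix fractional function $\text{Tr}(\mathbf{\Omega}^{-1}\mathbf{\Gamma}\mathbf{\Gamma}^{H})$ is indeed jointly convex on $\mathbb{C}^{m\times n}\times\mathbb{S}_{++}^{m}$ (column-wise reduction to $\bm{\gamma}^{H}\mathbf{\Omega}^{-1}\bm{\gamma}$, or the Schur-complement epigraph), and your differential and cancellation of the constant terms are exactly right, since $f(\bar{\mathbf{\Gamma}},\bar{\mathbf{\Omega}})-2\,\text{Tr}(\bar{\mathbf{\Omega}}^{-1}\bar{\mathbf{\Gamma}}\bar{\mathbf{\Gamma}}^{H})+\text{Tr}(\bar{\mathbf{\Omega}}^{-1}\bar{\mathbf{\Gamma}}\bar{\mathbf{\Gamma}}^{H})=0$. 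Your caveat about the hypotheses is also the right one to raise: the statement says positive semi-definite but writes $\mathbf{\Omega}^{-1}$ and $\bar{\mathbf{\Omega}}^{-1}$, so positive definiteness is implicitly required (and is what holds in the paper's application, where the interference-plus-noise covariances contain $\sigma^{2}\mathbf{I}$). For completeness, the same inequality admits an even more elementary one-line proof that avoids invoking convexity: the difference between the two sides equals
\begin{equation*}
\text{Tr}\left[\left(\mathbf{\Omega}^{-1/2}\mathbf{\Gamma}-\mathbf{\Omega}^{1/2}\bar{\mathbf{\Omega}}^{-1}\bar{\mathbf{\Gamma}}\right)^{H}\left(\mathbf{\Omega}^{-1/2}\mathbf{\Gamma}-\mathbf{\Omega}^{1/2}\bar{\mathbf{\Omega}}^{-1}\bar{\mathbf{\Gamma}}\right)\right]\geq 0,
\end{equation*}
as one verifies by expanding and using cyclicity of the trace; this ``completing the square'' route is the one typically used for such surrogate bounds in the MM/WMMSE literature and also makes transparent that equality holds at $(\mathbf{\Gamma},\mathbf{\Omega})=(\bar{\mathbf{\Gamma}},\bar{\mathbf{\Omega}})$, which is the property needed for the MM convergence argument. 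Either way, the lemma stands.
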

The FBL rates $r_{p,lk}$ and $r_{c,lk}$ consist of two parts: the asymptotic Shannon rate and the term related to the channel dispersion. We utilize the quadratic, concave lower bound in Lemma \ref{lem-2} to calculate a surrogate function for the part related to the first-order Shannon rate of $r_{p,lk}$ and $r_{c,lk}$. Moreover, we use the results in Lemma \ref{lem-3} to compute a concave lower bound for the part related to the channel dispersion term. In the lemma below, we derive surrogate functions for $r_{p,lk}$ and $r_{c,lk}$, which are jointly quadratic and concave in the beamforming matrices $\{{\bf W}\}$.

\begin{lemma}\label{lem-4}
For all feasible $\{{\bf W}\}$, the following inequalities hold
\begin{multline}
\label{(22)}
r_{p,lk}\geq \tilde{r}_{p,lk}^{(z)} = a_{p,lk}
+2\sum_{ij}\mathfrak{R}\left\{\text{{\em Tr}}\left(
{\bf A}_{p,lk,ij}\mathbf{W}_{ij}^H
\bar{\mathbf{H}}_{lk,i}^H\right)\right\}
\\
+2\sum_{i\neq l}\mathfrak{R}\left\{\text{{\em Tr}}\left(
{\bf A}_{p,lk,i}\mathbf{W}_{i}^H
\bar{\mathbf{H}}_{lk,i}^H\right)\right\}
-
\text{{\em Tr}}\left(
{\bf B}_{p,lk}\mathbf{D}_{c,lk}
\right),
\end{multline}
\begin{multline}
\label{(23)}
r_{c,lk}\geq \tilde{r}_{c,lk}^{(z)} = a_{c,lk}
+2\sum_{ij}\mathfrak{R}\left\{\text{{\em Tr}}\left(
{\bf A}_{c,lk,ij}\mathbf{W}_{ij}^H
\bar{\mathbf{H}}_{lk,i}^H\right)\right\}
\\
+2\sum_{i}\mathfrak{R}\left\{\text{{\em Tr}}\left(
{\bf A}_{c,lk,i}\mathbf{W}_{i}^H
\bar{\mathbf{H}}_{lk,i}^H\right)\right\}
\\
-
\text{{\em Tr}}\left(
{\bf B}_{c,lk}(\bar{\mathbf{H}}_{lk,l}\mathbf{W}_{l}\mathbf{W}_{l}^H\bar{\mathbf{H}}_{lk,l}^H+\mathbf{D}_{c,lk})
\right),
\end{multline}
where we have:
\begin{align*}
a_{p,lk}&=\ln\left|{\bf I}+\bar{\bf D}_{lk}^{-1}\bar{\bf S}_{lk}\right|
-
\text{{\em Tr}}\left(
\bar{\bf D}_{lk}^{-1}\bar{\bf S}_{lk}
\right)\!
\nonumber\\
&\hspace{.2cm}
-\!\frac{Q^{-1}(\epsilon_p)}{2\sqrt{n}}\!\!
\left(\!\!\sqrt{\!2\text{\em Tr}\!\left(\!\bar{\bf S}_{lk}\bar{\bf D}_{c,lk}^{-1}\!\right)}\!+\!\frac{2I}{\sqrt{\!2\text{\em Tr}\!\left(\!\bar{\bf S}_{lk}\bar{\bf D}_{c,lk}^{-1}\!\right)}}\!
\right)\!\!,
\\
a_{c,lk}&=\ln\left|{\bf I}+\bar{\bf D}_{c,lk}^{-1}\bar{\bf S}_{c,lk}\right|
-
\text{{\em Tr}}\left(
\bar{\bf D}_{c,lk}^{-1}\bar{\bf S}_{c,lk}
\right)\!
\nonumber\\
&\hspace{1cm}
-\frac{Q^{-1}(\epsilon_c)}{2\sqrt{n}}
\left(\!\sqrt{
%2\text{\em Tr}\left(\bar{\bf S}_{c,lk}(\bar{\bf S}_{c,lk}+\bar{\bf D}_{c,lk})^{-1}\right)
\bar{\zeta}_{c,lk}}
+\frac{2I}{\sqrt{\bar{\zeta}_{c,lk}}}
\right),
\\
{\bf A}_{p,lk,ij}\!&\!=\!\!\left\{\!\!\!\! \begin{array}{ll}
\bar{\mathbf{D}}^{-1}_{lk}\bar{\mathbf{H}}_{lk,l}\bar{\mathbf{W}}_{lk}
&\!
\text{\em if}
\,\,i=l,%\,\,\text{and}\,
\,j=k,\\
\frac{Q^{-1}(\epsilon_p)}{\sqrt{2n\text{\em Tr}\left(\bar{\bf S}_{lk}\bar{\bf D}_{c,lk}^{-1}\right)}}
\bar{\mathbf{D}}_{c,lk}^{-1}\bar{\mathbf{H}}_{lk,i}\bar{\mathbf{W}}_{ij}
&
\!
\text{\em otherwise},
\end{array}\right.
\\
{\bf A}_{p,lk,i}\!&\!=\frac{Q^{-1}(\epsilon_p)}{\sqrt{2n\text{\em Tr}\left(\bar{\bf S}_{lk}\bar{\bf D}_{c,lk}^{-1}\right)}}
\bar{\mathbf{D}}_{c,lk}^{-1}\bar{\mathbf{H}}_{lk,i}\bar{\mathbf{W}}_{i},
%============================================================================
\\
{\bf A}_{c,lk,ij}\!&\!=\frac{Q^{-1}(\epsilon_c)}{\sqrt{n\bar{\zeta}_{c,lk}}}
(\bar{\mathbf{S}}_{c,lk}+\bar{\mathbf{D}}_{c,lk})^{-1}\bar{\mathbf{H}}_{lk,i}\bar{\mathbf{W}}_{ij},
\\
{\bf A}_{c,lk,i}\!&\!=\!\!\left\{\!\!\!\! \begin{array}{ll}
\bar{\mathbf{D}}^{-1}_{c,lk}\bar{\mathbf{H}}_{lk,l}\bar{\mathbf{W}}_{l}
&\!
\text{\em if}
\,\,i=l,\\
\frac{Q^{-1}(\epsilon_c)}{\sqrt{n\bar{\zeta}_{c,lk}}}
(\bar{\mathbf{S}}_{c,lk}+\bar{\mathbf{D}}_{c,lk})^{-1}\bar{\mathbf{H}}_{lk,i}\bar{\mathbf{W}}_{i},
&
\!
\text{\em otherwise},
\end{array}\right.
\\
{\bf B}_{p,lk}&\!=\bar{\mathbf{D}}^{-1}_{lk}\!\!-\bar{\mathbf{D}}^{-1}_{c,lk}\!\!
%\nonumber
+\!
\frac{Q^{-1}(\epsilon_p)}{\sqrt{2n\text{\em Tr}\left(\bar{\bf S}_{lk}\bar{\bf D}_{c,lk}^{-1}\right)}}
\bar{\mathbf{D}}_{c,lk}^{-1}
\bar{\mathbf{D}}_{lk}\bar{\mathbf{D}}_{c,lk}^{-1},
\\
{\bf B}_{c,lk}&\!=\bar{\mathbf{D}}^{-1}_{c,lk}\!-\!(\bar{\mathbf{S}}_{c,lk}+ \bar{\mathbf{D}}_{c,lk})^{-1}\!\!
%\nonumber
\\
&%\hspace{.6cm}
+\!
\frac{Q^{-1}(\epsilon_c)}{\sqrt{n\bar{\zeta}_{c,lk}}}
%\left(
(\bar{\mathbf{S}}_{c,lk}+ \bar{\mathbf{D}}_{c,lk})^{-1}
\bar{\mathbf{D}}_{c,lk}(\bar{\mathbf{S}}_{c,lk}+ \bar{\mathbf{D}}_{c,lk})^{-1}\!,
%\right)
\end{align*}
where $\bar{\mathbf{D}}_{lk}$, $\bar{\mathbf{D}}_{c,lk}$, $\bar{\mathbf{S}}_{lk}$, $\bar{\mathbf{S}}_{c,lk}$, $\bar{\mathbf{W}}_{ij}$, $\bar{\mathbf{W}}_{i}$, and $\bar{\mathbf{H}}_{lk,i}$, $\forall l,k,i,j$ are, respectively, the initial values of ${\mathbf{D}}_{lk}$, ${\mathbf{D}}_{c,lk}$, ${\mathbf{S}}_{lk}$, ${\mathbf{S}}_{c,lk}$, ${\mathbf{W}}_{ij}$, ${\mathbf{W}}_{i}$, and ${\mathbf{H}}_{lk,i}$ at the current step, which are obtained upon replacing $\{\mathbf{W}\}$ by $\{\mathbf{W}^{(z-1)}\}$ and $\{\bf{\Upsilon}\}$ by $\{{\bf\Upsilon}^{(z-1)}\}$. Moreover, $I = \min(N_{BS}, N_u)$ and  $\bar{\zeta}_{c,lk}=2\text{\em Tr}\left(\bar{\mathbf{S}}_{c,lk}(\bar{\mathbf{D}}_{c,lk}+\bar{\mathbf{S}}_{c,lk})^{-1}\right)$ is the initial value of the channel dispersion for decoding the common message at U$_{lk}$.
\end{lemma}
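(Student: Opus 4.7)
The plan is to bound the two constituents of $r_{p,lk}$ and $r_{c,lk}$ separately: the log-det (Shannon) term is handled by Lemma~\ref{lem-2}, while the channel-dispersion square-root term is handled by first-order concavity of $\sqrt{\cdot}$ together with Lemma~\ref{lem-3}. Since the proofs for $r_{p,lk}$ and $r_{c,lk}$ are structurally identical, I would write out the details only for $r_{p,lk}$ and indicate how the $r_{c,lk}$ case differs (essentially, treating $\bar{\mathbf{W}}_l$ as an additional desired signal component rather than as interference, because U$_{lk}$ is decoding the common message before SIC).

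For the Shannon term, I would apply Lemma~\ref{lem-2} with the identification $\mathbf{\Omega}\leftarrow\mathbf{D}_{lk}$ and $\mathbf{\Gamma}\leftarrow\mathbf{H}_{lk,l}\mathbf{W}_{lk}$, so that $\mathbf{\Gamma}\mathbf{\Gamma}^H+\mathbf{\Omega}=\mathbf{D}_{c,lk}$. This immediately yields a concave quadratic lower bound on $\log|\mathbf{I}+\mathbf{D}_{lk}^{-1}\mathbf{S}_{lk}|$ whose linear term in $\mathbf{W}_{lk}$ matches the $i{=}l,j{=}k$ entry of $\mathbf{A}_{p,lk,ij}$, and whose matrix coefficient of $\mathbf{D}_{c,lk}$ contributes the first two pieces $\bar{\mathbf{D}}_{lk}^{-1}-\bar{\mathbf{D}}_{c,lk}^{-1}$ of $\mathbf{B}_{p,lk}$.

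For the dispersion term I would proceed in two sub-steps. First, since $\sqrt{x}$ is concave in $x>0$, its tangent at $\bar{\zeta}_{p,lk}:=2\mathrm{Tr}(\bar{\mathbf{S}}_{lk}\bar{\mathbf{D}}_{c,lk}^{-1})$ gives
\begin{equation*}
\sqrt{\tfrac{2\mathrm{Tr}(\mathbf{S}_{lk}\mathbf{D}_{c,lk}^{-1})}{n}} \;\leq\; \frac{1}{2\sqrt{n}}\!\left(\sqrt{\bar{\zeta}_{p,lk}} + \frac{2\mathrm{Tr}(\mathbf{S}_{lk}\mathbf{D}_{c,lk}^{-1})}{\sqrt{\bar{\zeta}_{p,lk}}}\right),
\end{equation*}
which, after multiplication by $-Q^{-1}(\epsilon_p)$, is the lower bound we want. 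To make the remaining trace concave in $\{\mathbf{W}\}$, I would use the key identity $\mathbf{S}_{lk}\mathbf{D}_{c,lk}^{-1}=\mathbf{I}-\mathbf{D}_{lk}\mathbf{D}_{c,lk}^{-1}$, so that $2\mathrm{Tr}(\mathbf{S}_{lk}\mathbf{D}_{c,lk}^{-1})=2I-2\mathrm{Tr}(\mathbf{D}_{lk}\mathbf{D}_{c,lk}^{-1})$ where $I=\min(N_{BS},N_u)$. The constant $2I$ here is exactly what produces the $\frac{2I}{\sqrt{\bar{\zeta}_{p,lk}}}$ summand inside $a_{p,lk}$. What remains is to obtain a concave \emph{lower} bound on $\mathrm{Tr}(\mathbf{D}_{lk}\mathbf{D}_{c,lk}^{-1})$; for this I would decompose $\mathbf{D}_{lk}$ as a sum of outer products $\mathbf{H}_{lk,i}\mathbf{W}_{ij}\mathbf{W}_{ij}^H\mathbf{H}_{lk,i}^H$ (and analogously for the common-message precoders $\mathbf{W}_i$ contained in $\mathbf{C}_i$ for $i\neq l$, plus the constant $\sigma^2\mathbf{I}$), and apply Lemma~\ref{lem-3} term-by-term with $\mathbf{\Omega}\leftarrow\mathbf{D}_{c,lk}$ and $\mathbf{\Gamma}\leftarrow\mathbf{H}_{lk,i}\mathbf{W}_{ij}$ (respectively $\mathbf{H}_{lk,i}\mathbf{W}_{i}$). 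Each such application produces a linear term in $\mathbf{W}_{ij}$ (or $\mathbf{W}_i$) with coefficient proportional to $\bar{\mathbf{D}}_{c,lk}^{-1}\bar{\mathbf{H}}_{lk,i}\bar{\mathbf{W}}_{ij}$, plus a linear term in $\mathbf{D}_{c,lk}$ with coefficient $\bar{\mathbf{D}}_{c,lk}^{-1}\bar{\mathbf{D}}_{lk}\bar{\mathbf{D}}_{c,lk}^{-1}$ after summation over the outer products.

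Collecting the contributions of the two steps yields precisely the constant $a_{p,lk}$, the linear coefficients $\mathbf{A}_{p,lk,ij}$, $\mathbf{A}_{p,lk,i}$, and the matrix $\mathbf{B}_{p,lk}$ in the statement; for $r_{c,lk}$, the only differences are that the desired-signal matrix is $\mathbf{S}_{c,lk}$ (involving $\mathbf{W}_l$) and that $\mathbf{W}_l$ is not subtracted from the interference, which is why the sum in \eqref{(23)} over $i$ runs unrestricted and why $\mathbf{B}_{c,lk}$ carries the full $\bar{\mathbf{H}}_{lk,l}\mathbf{W}_l\mathbf{W}_l^H\bar{\mathbf{H}}_{lk,l}^H+\mathbf{D}_{c,lk}$ factor. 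Finally, tightness at the reference point $\{\mathbf{W}\}=\{\mathbf{W}^{(z-1)}\}$ holds because each of the component inequalities (Lemma~\ref{lem-2}, Lemma~\ref{lem-3}, and the tangent to $\sqrt{\cdot}$) is tight there. The main obstacle will be the bookkeeping needed to verify that the affine plus quadratic pieces generated by Lemmas~\ref{lem-2}--\ref{lem-3} combine cleanly into the stated closed forms; there is no deep new inequality beyond the identity $\mathbf{S}\mathbf{D}_c^{-1}=\mathbf{I}-\mathbf{D}\mathbf{D}_c^{-1}$ that converts the troublesome square-root term into a form to which Lemma~\ref{lem-3} applies.
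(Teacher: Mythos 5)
Your proposal follows essentially the same route as the paper's Appendix A: Lemma \ref{lem-2} for the log-det part, the tangent bound $\sqrt{\zeta}\leq\tfrac{\sqrt{\bar\zeta}}{2}+\tfrac{\zeta}{2\sqrt{\bar\zeta}}$ for the square root, the rewriting $\mathrm{Tr}({\bf S}{\bf D}_c^{-1})=\mathrm{Tr}({\bf I})-\mathrm{Tr}({\bf D}{\bf D}_c^{-1})$, and Lemma \ref{lem-3} applied to the remaining trace, with the other rate obtained by the same structural argument. This matches the paper's proof, so the proposal is correct and takes the same approach.
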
 
\begin{proof}
 Please refer to Appendix \ref{app-a}.
\end{proof}
We substitute $r_{p,lk}$ and $r_{c,lk}$ by $\tilde{r}_{p,lk}^{(z)}$ and $\tilde{r}_{c,lk}^{(z)}$  in \eqref{(19)}, respectively, which yields the convex problem
\begin{subequations}\label{(24)}
\begin{align}%\label{ar-opt}
\label{(24a)}
 \underset{r,\{\mathbf{W}\},{\bf t}
 }{\max}\,\,\,  & 
  r \\%&
\label{(24b)}  \text{s.t.}\,\,   \,&  \tilde{r}_{p,lk}^{(z)}+t_{lk}
  \geq \!\max(\alpha_{lk}r,r_{lk}^{th}),\forall lk,
\\
\label{(24c)}
&
\sum_kt_{lk}\leq \min_k(\tilde{r}_{c,lk}^{(z)}),\,\,\forall l,
\\ 
& 
\eqref{(19c)},\eqref{(19d)}.
 \end{align}
\end{subequations}
 {Since \eqref{(24)} is a convex optimization problem, the globally optimal solution of \eqref{(24)} can be efficiently computed by numerical optimization tools such as CVX \cite{grant2014cvx}.}

\subsubsection{Maximizing the minimum weighted EE}
 The max-min weighted EE optimization problem falls into the category of fractional programming problems, since the EE terms are fractional functions of $\{{\bf W}\}$. To solve \eqref{18} for fixed $\{{\bf \Upsilon}^{(z)}\}$, we replace $r_{p,lk}$ and $r_{c,lk}$ with $\tilde{r}_{p,lk}^{(z)}$ and $\tilde{r}_{c,lk}^{(z)}$, respectively, in \eqref{18}, which results in
\begin{subequations}\label{(25)}
\begin{align}%\label{ar-opt}
\label{(25a)}
 \underset{e,\{\mathbf{W}\},{\bf t}
 }{\max}\,\,\,  & 
  e \\%&
\label{(25b)}  \text{s.t.}\,\,   \,&  \tilde{e}_{lk}^{(z)}
  \geq \lambda_{lk}e,\,\,\forall lk,
\\
\label{(25c)}
&
\tilde{r}_{p,lk}^{(z)}+t_{lk}\geq r_{lk}^{th},\forall lk,
\\ 
& 
\eqref{(19c)},\eqref{(19d)},\eqref{(24c)},
 \end{align}
\end{subequations}
where
\begin{equation}\label{eq-e-tilde}
    \tilde{e}_{lk}^{(z)}=
    \frac{\tilde{r}_{lk}^{(z)}+t_{lk}}{p_c+\eta\text{Tr}\left({\bf W}_{lk}{\bf W}_{lk}^H\right)
     +\frac{\eta}{k} \text{Tr}\left({\bf W}_{l}{\bf W}_{l}^H\right)
     }.
\end{equation}
The constraints \eqref{(19c)}, \eqref{(19d)}, \eqref{(24c)}, and \eqref{(25c)} are convex, but \eqref{(25)} is non-convex since $\tilde{e}_{lk}$ has a fractional structure in $\{{\bf W}\}$, and it is non-concave. Since the numerator of $\tilde{e}_{lk}$ is concave for all $l, k$, and its denominator is convex, we can calculate the optimum of \eqref{(25)} by the GDA. More specifically, the optimum of \eqref{(25)} can be found by iteratively solving
\begin{subequations}\label{(27)}
\begin{align}%\label{ar-opt}
\label{(27a)}
 \underset{e,\{\mathbf{W}\},{\bf t}
 }{\max}\,\,\,  & 
  e \\%&
\label{(27b)}  \text{s.t.}\,\,   \,&  \tilde{r}_{lk}^{(z)}\!+t_{lk}-\!\mu^{(z,m)}p_{lk}\left(\{{\bf W}\}\right)
  \!\geq\! \lambda_{lk}e,\,\,\forall lk,\!
\\ 
& 
\eqref{(19c)},\eqref{(19d)},\eqref{(24c)},\eqref{(25c)},
 \end{align}
\end{subequations}
and updating the constant coefficient $\mu^{(z,m)}$ as
\begin{equation}\label{eq-mu-28}
    \mu^{(z,m)}=\min_{lk}\left\{\frac{\tilde{r}_{lk}^{(z)}\left(\{{\bf W}^{(m-1)}\}\right)}{p_{lk}\left(\{{\bf W}^{(m-1)}\}\right)}\right\},
\end{equation}
where $m$ is the iteration index of the GDA, and
\begin{equation}\label{(29)}
   p_{lk}\left(\{{\bf W}\}\right)= p_c+\eta\text{Tr}\left({\bf W}_{lk}{\bf W}_{lk}^H\right)
     +\frac{\eta}{k} \text{Tr}\left({\bf W}_{l}{\bf W}_{l}^H\right).
\end{equation}

\subsubsection{Discussions on single-stream RSMA}
Utilizing MISO or SIMO systems restricts the maximum number of streams per user to $1$. An advantage of MIMO systems is the ability to support multiple data streams per user, significantly enhancing both the diversity gain and overall system performance. Of course, these benefits can be achieved only when the system is optimized for multiple-stream data transmissions. However, one can also use the multiple transmit and receive antennas for beamforming only, thereby improving the SINR, utilizing single-stream data transmission. To implement single-stream RSMA, one can replace matrices ${\bf W}_l $ and ${\bf W}_{lk} $ with vectors ${\bf w}_l\in\mathbb{C}^{N_{BS} \times 1} $ and ${\bf w}_{lk}\in\mathbb{C}^{N_{BS} \times 1} $, respectively. Note that MISO systems are limited to single-stream data transmission. Still, the RSMA algorithms harnessed for MISO systems like the ones in \cite{soleymani2023optimization} cannot be applied to  MIMO systems since they are designed for single-antenna users. Indeed, the SE and EE optimization of MIMO systems is much more complex than in the MISO ones, even when single-stream data transmission is employed. To elaborate on this issue, we restate $r_{c,lk}$ for the case that the common message transmitted by BS $l$ is single stream:
\begin{multline}
    r_{c,lk,\text{Single-Stream}}=
\log \left({\bf I} +{\bf w}_l^H{\bf H}_{lk,l}^H{\bf D}^{-1}_{c,lk}{\bf H}_{lk,l}{\bf w}_l \right)
\\
-Q^{-1}(\epsilon_c)\sqrt{\frac{2{\bf w}_l^H{\bf H}_{lk,l}^H({\bf D}_{c,lk}+{\bf S}_{c,lk})^{-1}{\bf H}_{lk,l}{\bf w}_l}{n}},
\end{multline}
where ${\bf D}_{c,lk}$ and ${\bf S}_{c,lk}$ are defined as in \eqref{eq-9}.
To simplify $r_{c,lk}$ for the single-stream data transmission, we use the Sylvester determinant identity, stating that $|{\bf I}+ {\bf A}{\bf B} |=|{\bf I}+ {\bf B}{\bf A} |$, where ${\bf A}$ and ${\bf B}$ are arbitrary matrices. Note that  ${\bf D}_{c,lk}$ and $({\bf D}_{c,lk}+{\bf S}_{c,lk})^{-1}$ are not rank one even when all transmitted signals are single stream. Indeed, the rate expressions in MIMO systems depend on the covariance matrix of the interference plus noise, even when single-stream data transmission is utilized. On the contrary, the covariance of the interference plus noise is scalar in MISO systems, which makes reusing MISO designs impossible as single-stream MIMO schemes.

\subsection{Updating $\{\bf{\Upsilon}\}$}
In this subsection, we solve \eqref{17} and \eqref{18} when $\{{\bf W}\}$ is fixed to $\{{\bf W}^{(z)}\}$. In this case, the solution of \eqref{17} can be
derived similarly to the solution of \eqref{18}. Indeed, when we fix $\{{\bf W}\}$, the EE of a user becomes a scaled function of the rate, and it is not a fractional function in the optimization variables $\{{\bf \Upsilon}\}$ and ${\bf t}$. Therefore, we provide only the solution of \eqref{18}, which we can formulate as a max-min weighted rate problem in $\{{\bf \Upsilon}\}$ for fixed $\{{\bf W}^{(z)}\}$ as follows
\begin{subequations}\label{(30)}
\begin{align}%\label{ar-opt}
\label{(30a)}
 \underset{e,\{{\bf \Upsilon}\}\in\mathcal{F},{\bf t}
 }{\max}\,\,\,  & 
  e \\%&
\label{(30b)}  \text{s.t.}\,\,   \,&  {r}_{lk}
  \geq \max(\lambda_{lk}p_{lk}^{(z)}e,r_{lk}^{th}),\,\,\forall lk,
  \\
  & \eqref{17d},\eqref{17e},
 \end{align}
\end{subequations}
where $r_{lk} = r_{p,lk} + t_{lk}$ and  $p_{lk}^{(z)}=p_{lk}\left(\{{\bf W}^{(z)}\}\right)$ is given by inserting $\{{\bf W}^{(z)}\}$ into \eqref{(29)}. In the following, we solve \eqref{(30)} for $\mathcal{F}_U$ and $\mathcal{F}_I$ in separate parts.

\subsubsection{Updating $\{\bf{\Upsilon}\}$ for $\mathcal{F}_U$ }
Since the rates $r_{p,lk}$ and $r_{c,lk}$ are not concave in $\{{\bf \Upsilon}\}$, \eqref{(30)} is non-convex. To solve \eqref{(30)}, we first calculate appropriate surrogate functions for the rates as in Lemma \ref{lem-4}, since the structures of the rates are similar in $\{\bf{\Upsilon}\}$ and $\{{\bf W}\}$.
\begin{corollary}\label{cor-1}
For all feasible $\{\bf{\Upsilon}\}$ and fixed $\{{\bf W}\}$, the following inequalities hold
\begin{multline}
\label{(31)}
r_{p,lk}\geq \hat{r}_{p,lk}^{(z)} = a_{p,lk}
+2\sum_{ij}\mathfrak{R}\left\{\text{{\em Tr}}\left(
{\bf A}_{p,lk,ij}\bar{\mathbf{W}}_{ij}^H
{\mathbf{H}}_{lk,i}^H\right)\right\}
\\
+\!2\sum_{i\neq l}\mathfrak{R}\left\{\text{{\em Tr}}\left(
{\bf A}_{p,lk,i}\bar{\mathbf{W}}_{i}^H
{\mathbf{H}}_{lk,i}^H\right)\right\}\!
-
\text{{\em Tr}}\left(
{\bf B}_{p,lk}\mathbf{D}_{c,lk}
\right),
\end{multline}
\begin{multline}
\label{(32)}
r_{c,lk}\geq \hat{r}_{c,lk}^{(z)} = a_{c,lk}
+2\sum_{ij}\mathfrak{R}\left\{\text{{\em Tr}}\left(
{\bf A}_{c,lk,ij}\bar{\mathbf{W}}_{ij}^H
{\mathbf{H}}_{lk,i}^H\right)\right\}
\\
+2\sum_{i}\mathfrak{R}\left\{\text{{\em Tr}}\left(
{\bf A}_{c,lk,i}\bar{\mathbf{W}}_{i}^H
{\mathbf{H}}_{lk,i}^H\right)\right\}
\\
-
\text{{\em Tr}}\left(
{\bf B}_{c,lk}(\mathbf{H}_{lk,l}\bar{\mathbf{W}}_{l}\bar{\mathbf{W}}_{l}^H\mathbf{H}_{lk,l}^H+\mathbf{D}_{c,lk})
\right),
\end{multline}
where the coefficients $a_{p,lk}$, $a_{c,lk}$, ${\bf A}_{p,lk,ij}$, ${\bf A}_{c,lk,ij}$, ${\bf A}_{p,lk,i}$, ${\bf A}_{c,lk,i}$, ${\bf B}_{p,lk}$, and  ${\bf B}_{c,lk}$ are defined as in Lemma \ref{lem-4}, and $\bar{\bf W}_{l}={\bf W}_{l}^{(z)}$ and $\bar{\bf W}_{lk}={\bf W}_{lk}^{(z)}$ for all $l,k$.
\end{corollary}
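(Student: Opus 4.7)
The plan is to mirror the derivation of Lemma~\ref{lem-4} with the roles of $\{{\bf W}\}$ and $\{{\bf \Upsilon}\}$ interchanged, exploiting the bilinear structure of the effective signal and interference matrices. Each covariance appearing in $r_{p,lk}$ and $r_{c,lk}$ has the generic form $({\bf H}_{lk,i}{\bf W}_{ij})({\bf H}_{lk,i}{\bf W}_{ij})^{H}$, so the matrix argument ${\bf \Gamma}$ fed to Lemmas~\ref{lem-2} and~\ref{lem-3} in the proof of Lemma~\ref{lem-4} is precisely ${\bf \Gamma}={\bf H}_{lk,i}{\bf W}_{ij}$. Because those lemmas return concave quadratic lower bounds in ${\bf \Gamma}$ expanded around a reference $\bar{\bf \Gamma}$, the bound is neutral as to which factor of the product is held fixed. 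I would therefore freeze $\{{\bf W}\}$ at $\{{\bf W}^{(z)}\}$ and let ${\bf H}_{lk,i}$ be the free matrix, noting that~\eqref{ch-equ} makes ${\bf H}_{lk,i}$ affine in $\{{\bf \Upsilon}\}$, so concavity in ${\bf H}_{lk,i}$ descends to concavity in $\{{\bf \Upsilon}\}$.

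First, I would apply Lemma~\ref{lem-2} to the log-determinant component of each rate with reference $\bar{\bf \Gamma}=\bar{\bf H}_{lk,i}\bar{\bf W}_{ij}$ evaluated at $({\bf W}^{(z)},{\bf \Upsilon}^{(z-1)})$ and ${\bf \Omega}$ set equal to the appropriate interference-plus-noise covariance. Substituting ${\bf \Gamma}={\bf H}_{lk,i}\bar{\bf W}_{ij}$ and using the cyclic property of the trace, the cross terms $\mathfrak{R}\{\text{Tr}(\bar{\bf \Omega}^{-1}\bar{\bf \Gamma}{\bf \Gamma}^{H})\}$ yield precisely the linear-in-${\bf H}_{lk,i}$ traces $\mathfrak{R}\{\text{Tr}({\bf A}_{p,lk,ij}\bar{\bf W}_{ij}^{H}{\bf H}_{lk,i}^{H})\}$ and $\mathfrak{R}\{\text{Tr}({\bf A}_{c,lk,ij}\bar{\bf W}_{ij}^{H}{\bf H}_{lk,i}^{H})\}$ appearing in~\eqref{(31)} and~\eqref{(32)}, while the quadratic correction $\text{Tr}((\bar{\bf \Gamma}\bar{\bf \Gamma}^{H}+\bar{\bf \Omega})^{-1}({\bf \Gamma}{\bf \Gamma}^{H}+{\bf \Omega}))$ assembles into $\text{Tr}({\bf B}_{p,lk}{\bf D}_{c,lk})$ and $\text{Tr}({\bf B}_{c,lk}({\bf H}_{lk,l}\bar{\bf W}_{l}\bar{\bf W}_{l}^{H}{\bf H}_{lk,l}^{H}+{\bf D}_{c,lk}))$, respectively. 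Then I would apply Lemma~\ref{lem-3} to the trace inside the dispersion term, together with the scalar tangent inequality $\sqrt{x}\leq \tfrac{1}{2}(\sqrt{\bar x}+x/\sqrt{\bar x})$ used to upper bound the square root; since the dispersion enters the rate with a minus sign, the upper bound on the root gives a valid lower bound on the rate. The constant $2I$ inside $a_{p,lk}$ and $a_{c,lk}$ comes from the rank-based bound $\text{Tr}({\bf S}({\bf D}+{\bf S})^{-1})\leq I=\min(N_{BS},N_{u})$. Because every scalar constant depends only on the reference point, the coefficients $a_{p,lk}$, $a_{c,lk}$, ${\bf A}_{p,lk,ij}$, ${\bf A}_{c,lk,ij}$, ${\bf A}_{p,lk,i}$, ${\bf A}_{c,lk,i}$, ${\bf B}_{p,lk}$, ${\bf B}_{c,lk}$ coincide verbatim with those in Lemma~\ref{lem-4}.

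The main obstacle I anticipate is purely bookkeeping: faithfully tracking the asymmetric summation ranges, namely that $r_{p,lk}$ excludes the intended private stream $(i,j)=(l,k)$ from its intracell interference while $r_{c,lk}$ treats every private stream as interference, and then checking that these asymmetries propagate into the correct index sets $\sum_{ij}$, $\sum_{i}$ and $\sum_{i\neq l}$ in~\eqref{(31)}--\eqref{(32)}. Once this is verified, the surrogates $\hat{r}_{p,lk}^{(z)}$ and $\hat{r}_{c,lk}^{(z)}$ are affine-plus-concave-quadratic in the entries of $\{{\bf \Upsilon}\}$ through the linear map in~\eqref{ch-equ}, which is the property required to convexify the constraints of~\eqref{(30)} in the subsequent AO update.
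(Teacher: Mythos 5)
Your proposal is correct and follows essentially the same route as the paper: the paper's own proof of Corollary \ref{cor-1} simply observes that the rates have the same bilinear structure in $\{{\bf \Upsilon}\}$ and $\{{\bf W}\}$ (with ${\bf H}_{lk,i}$ affine in $\{{\bf \Upsilon}\}$ via \eqref{ch-equ}), and then reuses Lemma \ref{lem-2} for the Shannon-rate part and the bound \eqref{(38)} together with Lemma \ref{lem-3} for the dispersion part, exactly as you do. Your additional remarks on the index-set bookkeeping and on why concavity in ${\bf H}_{lk,i}$ descends to $\{{\bf \Upsilon}\}$ are consistent with, and slightly more explicit than, the paper's argument.
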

\begin{proof}
    The rates have a similar structure in $\{\bf{\Upsilon}\}$ and $\{{\bf W}\}$. Thus, this corollary can be proved similar to Appendix \ref{app-a}. Specifically, we can employ the bound in Lemma \ref{lem-2} to calculate a concave lower bound for the part of the FBL rate, related to the Shannon rate. Additionally, we can employ the bounds in \eqref{(38)} and Lemma \ref{lem-3} to derive a suitable lower bound for the part of the FBL rate, related to the channel dispersion, packet length and the reliability constraint.
\end{proof}
If we insert $\hat{r}_{p,lk}^{(z)}$ and $\hat{r}_{c,lk}^{(z)}$ into \eqref{(30)} for $\mathcal{F}_U$, we have
\begin{subequations}\label{(33)}
\begin{align}%\label{ar-opt}
\label{(33a)}
 \underset{e,\{{\bf \Upsilon}\},{\bf t}
 }{\max}\,\,\,  & 
  e \\%&
\label{(33b)}  \text{s.t.}\,\,   \,&  \hat{r}_{p,lk}^{(z)}+{t}_{lk}
  \geq \max(\lambda_{lk}p_{lk}^{(z)}e,r_{lk}^{th}),\,\,\forall lk,
  \\
  \label{(33c)} 
  & \sum_kt_{lk}\leq \min_k (\hat{r}_{c,lk}^{(z)}),\,\, \forall l,
  \\
  &
  \label{(33d)}
  t_{lk} \geq 0,  \,\,\, \forall l,k,
  \\ &
  \label{(33e)} |\upsilon_{mn}|^2 \leq 1, \,\,\, \forall m,n,
 \end{align}
\end{subequations}
which is convex and can be efficiently solved.  {For $\mathcal{F}_U$, the framework converges to a \textit{stationary point} of \eqref{18} (or \eqref{17}), since the concave lower bounds in Lemma \ref{lem-4} and Corollary \ref{cor-1} fulfill the three conditions mentioned in \cite[Section III]{soleymani2020improper}. We summarize our solution for \eqref{18} with $\mathcal{F}_U$ in Algorithm I.} 

\doublespacing %%\vspace{-.2cm}
\begin{table}[htb]
%\scriptsize
\small
%\label{alg-ii}
\begin{tabular}{l}
\hline 
  {\textbf{Algorithm I} Max-min weighted EE with $\mathcal{F}_U$.} \\
\hline
\hspace{0.2cm}{\textbf{Initialization}}\\
\hspace{0.2cm}{Set $\gamma_1$, $\gamma_2$, $z=1$,  $\{\mathbf{W}\}=\{\mathbf{W}^{(0)}\}$, and$\{{\bf \Upsilon}\}=\{{\bf \Upsilon}^{(0)}\}$ }\\
\hline %\hline
\hspace{0.2cm}
{\textbf{While} $\left(\underset{\forall lk}{\min}\,e^{(z)}_{lk}-\underset{\forall lk}{\min}\,e^{(z-1)}_{lk}\right)/\underset{\forall lk}{\min}\,e^{(z-1)}_{lk}\geq\gamma_1$ }\\ 
\hspace{.6cm}{{\bf Update} $\{\mathbf{W}\}$ {\bf for fixed} $\{{\bf \Upsilon}^{(z-1)}\}$}\\
\hspace{1.2cm}{Calculate $\tilde{r}_{lk}$ 
according to Lemma \ref{lem-4}}\\ 
\hspace{1.2cm}{Calculate $\tilde{e}_{lk}$ 
based on \eqref{eq-e-tilde}}\\ 
\hspace{1.2cm}{Compute $\{{\bf W}\}$ by solving \eqref{(25)}, i.e., by running}\\
\hspace{1.2cm}{\textbf{While} $\left(\underset{\forall lk}{\min}\,\tilde{e}^{(m)}_{lk}-\underset{\forall lk}{\min}\,\tilde{e}^{(m-1)}_{lk}\right)/\underset{\forall lk}{\min}\,\tilde{e}^{(m-1)}_{lk}\geq\gamma_2$ }\\ 
\hspace{1.8cm}{Update $\mu^{(z,m)}$ based on \eqref{eq-mu-28}}\\
\hspace{1.8cm}{Update $\{{\bf W}\}$ by solving \eqref{(27)}}\\
\hspace{.6cm}{{\bf Update} $\{{\bf\Upsilon}\}$ {\bf for fixed} $\{\mathbf{W}^{(z-1)}\}$}\\
\hspace{1.2cm}{Derive $\hat{r}_{lk}^{(z)}$  
according to Corollary 1}\\
\hspace{1.2cm}{Calculate $\{{\bf \Upsilon}^{(z)}\}$ by solving \eqref{(33)}}\\
\hspace{.6cm}{$z=z+1$}\\
\hspace{0.2cm}{\textbf{End (While)}}\\
\hspace{0.2cm}{{\bf Return} $\{\mathbf{W}^{(\star)}\}$ and $\{{\bf \Upsilon}^{(\star)}\}$.}\\
\hline 
\end{tabular} 
\end{table}
\singlespacing

\subsubsection{Updating $\{\bf{\Upsilon}\}$ for $\mathcal{F}_I$}
To update $\{\bf{\Upsilon}\}$ for $\mathcal{F}_I$, we have an additional non-convex constraint $|\upsilon_{mn}|^2\geq 1$ for all $m$ and $n$, which can be approximated as a convex constraint by using the convex-concave procedure similar to \cite[Eq. (43)]{soleymani2024optimization} as
\begin{align}\label{(34)}
|\upsilon_{mn}|^2\!\!\geq\!|\upsilon_{mn}^{(z-1)}|^2\!\!-\!2\mathfrak{R}\!\{\upsilon_{mn}^{(z-1)^*}\!(\upsilon_{mn}\!\!-\!\upsilon_{mn}^{(z-1)})\!\}\!\!\geq \!1\!-\!\delta,
\end{align} 
for all $m,n$, where $\delta>0$. In this case, we have
\begin{subequations}\label{(35)}
\begin{align}%\label{ar-opt}
\label{(35a)}
 \underset{e,\{{\bf \Upsilon}\},{\bf t}
 }{\max}\,\,\,  & 
  e \\%&
\label{(35b)}  \text{s.t.}\,\,   \,&  \!|\upsilon_{mn}^{(z-1)}|^2\!\!\!-\!2\mathfrak{R}\!\{\!\upsilon_{mn}^{(z-1)^*}\!(\upsilon_{mn}\!\!-\!\!\upsilon_{mn}^{(z-1)})\!\}\!\!\geq\!\! 1\!\!-\!\delta,\forall m,\!n,
  \\ &
  \eqref{(33b)},\eqref{(33c)},\eqref{(33d)},\eqref{(33e)}.
 \end{align}
\end{subequations}
The problem \eqref{(35)} is convex, but its solution, denoted by $\{{\bf \Upsilon}^{(\star)}\}$, may not  satisfy $|\upsilon_{mn}|=1$ for all $m,n$ because of the relaxation in \eqref{(34)}.  {Thus, we normalize $\{{\bf \Upsilon}^{(\star)}\}$ as $\hat{\upsilon}_{mn}=\frac{{\upsilon}_{mn}^{(\star)}}{|{\upsilon}_{mn}^{(\star)}|}$ for all $m,n$, and update $\{{\bf \Upsilon}\}$ for $\mathcal{F}_I$ according to
 \begin{equation}\label{(36)}
\{{\bf \Upsilon}^{(z)}\}=
\left\{
\begin{array}{lcl}
\{\hat{{\bf\Upsilon}}\}&\text{if}&
f_0\left(\left\{\mathbf{W}^{(z)}\right\},\{\hat{{\bf\Upsilon}}\}\right)\geq
\\
&&
f_0\left(\left\{\mathbf{W}^{(z)}\right\},\{{\bf\Upsilon}^{(z-1)}\}\right)
\\
\{{\bf\Upsilon}^{(z-1)}\}&&\text{otherwise},
\end{array}
\right.
\end{equation}
to ensure convergence by generating a \textit{non-decreasing} sequence of $f_0\left(\left\{\mathbf{W}\right\},\{{{\bf\Upsilon}}\}\right)$, where $f_0=\min_{lk}\{\frac{r_{lk}}{\alpha_{lk}}\}$ and $f_0=\min_{lk}\{\frac{e_{lk}}{\lambda_{lk}}\}$ when solving \eqref{17} and \eqref{18}, respectively.}

\subsection{Computational Complexity Analysis}
In this subsection, we estimate an approximate upper bound for the number of multiplications required by our algorithms to obtain a solution for \eqref{18} with the feasibility set $\mathcal{F}_U$. The computational complexity of solving \eqref{17} can be similarly approximated. In the following, we first approximate the number of multiplications to update $\{{\bf W}\}$ and $\{{{\bf\Upsilon}}\}$ in separate paragraphs.  

 As shown in Algorithm I, updating $\{{\bf W}\}$ requires solving the convex optimization problem in \eqref{eq-e-tilde}. To numerically solve such a problem,  the number of Newton iterations increases proportionally with the square root of the number of its constraints \cite[Chapter 11]{boyd2004convex}, which is equal to $L(3K+2)$ in \eqref{eq-e-tilde}. Solving each Newton iteration needs to calculate  $\tilde{r}_{c,lk}$ and $\tilde{r}_{p,lk}$ for all users. The number of multiplications to compute $\tilde{r}_{p,lk}$ is approximately in the same order as that of $\tilde{r}_{c,lk}$. These surrogate rates are quadratic in $\{{\bf W}\}$, and to compute each  approximately requires $\mathcal{O}\left[LKN_{BS}^2(2N_{BS}+N_{u})\right]$ multiplications. As we employ the GDA, \eqref{eq-e-tilde} has to be iterated to obtain the solution of \eqref{(25)}. We set the maximum number of iterations of the GDA to $J$. Thus, the total number of multiplications to update $\{{\bf W}\}$ at each iteration of our algorithm can be approximated as $\mathcal{O}\left[J L^2K^2N_{BS}^2\sqrt{L(3K+2)}(2N_{BS}+N_{u})\right]$.

 To update $\{{\bf \Upsilon}\}$ requires solving  the convex optimization problem in \eqref{(33)}, which has $L(2K+1)+MN_{RIS}$ constraints. To solve each Newton iteration, one needs to calculate $2LK$ surrogate rates and $L^2K$ equivalent channels. The number of multiplications to compute $\hat{r}_{p,lk}$ (or $\hat{r}_{c,lk}$) in Corollary \ref{cor-1}  is on the same order of that of $\tilde{r}_{p,lk}$ (or $\tilde{r}_{p,lk}$), which is approximately $\mathcal{O}\left[LKN_{BS}^2(2N_{BS}+N_{u})\right]$. 
To calculate each channel, ${\bf H}_{lk,i}$, $\forall l,k,i$, approximately needs $MN_uN_{BS}N_{RIS}$ multiplications. Therefore, to update $\{{\bf \Upsilon}\}$ approximately needs $\mathcal{O}[L^2K\sqrt{L(2K+1)+MN_{RIS}}(KN_{BS}^2(2N_{BS}+N_{u})+MN_uN_{BS}N_{RIS})]$ multiplications.

\section{Numerical results}\label{sec-iv}
In this section, we provide numerical results through Monte Carlo simulations. To this end, we consider a two-cell BC with two multiple-antenna BSs, each serving $K$ multiple-antenna users, and one RIS in each cell as shown in \cite[Fig. 2]{soleymani2022rate}. We assume that there is a line of sight (LoS) link between each RIS and all users as well as all the BSs. Specifically, the small-scale fading of the RIS-related links follows a Rician distribution with a Rician factor of $3$. Moreover, we assume that the link between each BS and each user is non-LoS (NLoS), and follows a Rayleigh distribution. Additionally, we assume that all the users have the same priority, which means that all the users have the same weight, i.e., $\alpha_{lk} = 1$ (or $\lambda_{lk}=1$) for all $l$, $k$. In other words, we consider the maximization of the minimum rate (or EE), which we refer to as the max-min rate (or EE). Furthermore, the power budget of the BSs is set to $P$. We also assume that $\epsilon_c = \epsilon_p = \epsilon/2$, where $\epsilon$ is the maximum acceptable bit error rate, and $n$ is the packet length. The other parameters, including the large-scale fading, are selected based on \cite{soleymani2022improper}. 

To make the numerical results consistent, we define a scenario for the max-min rate numerical results, referred to as "{\bf Scenario 1}" in which $N_{{BS}}=2$, $N_{u}=2$, $P=10$ dB, $K=3$, $L=2$,  $M=2$, $\epsilon=10^{-5}$,  $n=256$ bits, and $N_{{RIS}}=20$. Additionally, we define "{\bf Scenario 2}" for the max-min EE numerical results in which $N_{BS} = 2$, $N_u = 2$, $K = 3$,  $P=10$ dB, $L = 1$, $M = 1$, $\epsilon=10^{-5}$, $n=256$ bits, and $N_{RIS} = 20$. In the numerical results, we mainly consider the parameters from {\bf Scenario 1}/{\bf Scenario 2} for the SE/EE results, unless we explicitly mention otherwise. Obviously, if we refer to {\bf Scenario 1} while considering the impact of a specific parameter (e.g., packet length), all the other parameters are chosen based on the values in {\bf Scenario 1}, except the specific parameter (i.e., packet length), which varies as depicted in the corresponding figure.

Note that to the best of our knowledge, there is no work on RSMA and/or NOMA in MU-MIMO RIS-aided systems with FBL coding and supporting multiple-stream data transmission per user. Thus, we compare the performance of the proposed RSMA schemes to the scheme in \cite{soleymani2024optimization}, which uses TIN, and to the NOMA-based schemes in \cite{soleymani2022noma, soleymani2023noma}, which is based on the Shannon rate. The legends in the figures are defined as:
\begin{itemize}
\item {\bf RIS-RS}/{\bf RIS-RS$_I$} (or {\bf RIS-TIN}):  The RSMA algorithms (or the TIN algorithms in \cite{soleymani2024optimization}) conceived for MIMO RIS-aided URLLC system with $\mathcal{F}_U$/$\mathcal{F}_I$ (or $\mathcal{F}_U$).

\item {\bf No-RIS-RS} (or {\bf No-RIS-TIN}):  The RSMA design (or the TIN algorithms in \cite{soleymani2024optimization}) for MIMO URLLC systems that do not use RIS.

\item {\bf RIS-Rand-RS} (or {\bf RIS-Rand-TIN}):  The RSMA scheme (or the scheme with TIN in \cite{soleymani2024optimization}) for MIMO RIS-aided URLLC systems that utilize random RIS coefficients.

\item {\bf  RIS-Sh}:  The RSMA algorithms for MIMO RIS-aided systems with the asymptotic Shannon rate and $\mathcal{F}_U$.

\item  {{\bf  SS-RIS-RS} (or {\bf  SS-RIS-TIN}):  The single-stream RSMA algorithms (or the single-stream TIN algorithms in \cite{soleymani2024optimization}) conceived for MIMO RIS-aided URLLC system with $\mathcal{F}_U$.}

\item {\bf  NOMA-Sh} (or {\bf  NOMA}):  The NOMA-based scheme proposed in \cite{soleymani2022noma, soleymani2023noma} employing proper Gaussian signaling and
perfect devices for MIMO RIS-aided systems with the Shannon (or FBL) rate and $\mathcal{F}_U$.
\end{itemize}

\subsection{Maximization of the Minimum Rate}\label{sec-iv-a}
We evaluate the max-min rate of the proposed RSMA schemes in this subsection. We also examine how RSMA can influence the performance of RISs and vice versa. To this end, we investigate the impact of different system parameters like the BSs’ power budget, the packet length, the reliability constraint, and the number of users per cell.

\begin{figure}[t]
    \centering
       \includegraphics[width=.44\textwidth]{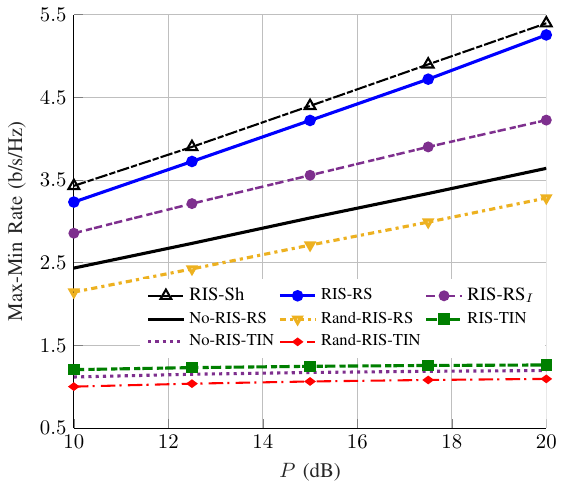}
    \caption{Average max-min rate versus power budget at the BSs for {\bf Scenario 1}. 
    }
	\label{Fig-rr1} 
\end{figure}

\subsubsection{Impact of power budget}
Fig. \ref{Fig-rr1} depicts the average max-min rate versus the power budget at the BSs for {\bf Scenario 1}. This may not be considered as a system of high user load, since the total number of antennas is higher than the number of users. However, the number of BS antennas is smaller than the number of users per cell. In this figure, RSMA substantially increases the average max-min rate. Indeed, the RSMA scheme for systems operating without RIS substantially outperforms the TIN scheme proposed in \cite{soleymani2024optimization} for the RIS-aided systems. Additionally, RSMA with RIS significantly outperforms the other schemes, when the RIS elements are optimized based on our proposed scheme.

\begin{figure}%[t]
    \centering
    \begin{subfigure}%[t]
{0.24\textwidth}
        \centering
           \includegraphics[width=\textwidth]{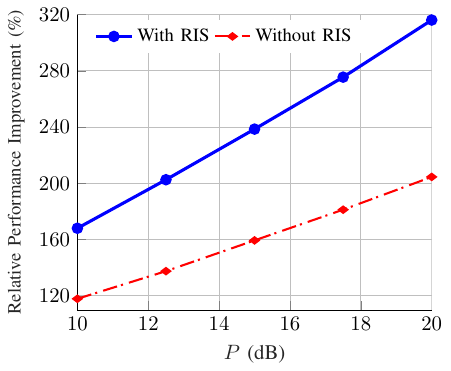}
        \caption{Benefits of RSMA over
SDMA.}
    \end{subfigure}
\begin{subfigure}%[t]
{0.24\textwidth}
        \centering
       \includegraphics[width=\textwidth]{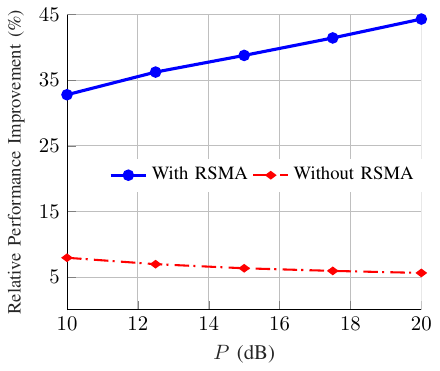}
        \caption{Benefits of RIS over systems
operating without RISs.}
    \end{subfigure}%
    \caption{Average max-min rate gains of utilizing RSMA and RISs
versus the power budget at the BSs for {\bf Scenario 1}. }
	\label{Fig-rr2} 
\end{figure}
Fig. \ref{Fig-rr2} shows the average max-min rate gains of utilizing RSMA and RISs versus the power budget at the BSs for {\bf Scenario 1}. The benefits of RSMA are calculated as the percentage of improvements in the max-min rate over the SDMA schemes, which utilize TIN. Specifically, the RSMA benefits associated with (without) RIS are derived by comparing the average max-min rate of the RIS-RS (No-RIS-RS) scheme with the RIS-TIN (No-RIS-TIN) scheme. Moreover, the benefits of RIS with RSMA (or TIN) are obtained by comparing the max-min rate of the RIS-RS (RIS-TIN) scheme to that of the No-RIS-RS (No-RIS-TIN) scheme. Interestingly, RISs boost the RSMA gains in this example. Moreover, the gains of RSMA scale with the power budget at the BSs. This happens, since increasing the power budget may not lead to a significant rate increment, when the system is interference limited. However, when a powerful interference-management technique is used, the interference is mitigated, and hence, the interference limits performance to a lesser extent especially in the high SNR regime. This is also in line with previous results in \cite{soleymani2022improper, soleymani2019improper, soleymani2020improper}, where it was shown that the benefits of IGS increase with the transmitters' power budget $P$.

Another interesting observation in Fig. \ref{Fig-rr2} is that RSMA may augment the benefits of RISs. In this example, the gain with RISs for TIN is almost negligible (less than 10\%). However, the RIS gain significantly increases when RSMA is employed. Additionally, the RIS gain decreases with the BS power budget when TIN is used. This is in stark contrast to the case when RSMA is employed, as described before. Note that, in this figure, $N_{RIS} = 20$, which is relatively low. It is known that the RIS gains become more significant when $N_{RIS}$ grows.

\subsubsection{Impact of the packet length}

\begin{figure}[t]
    \centering
\includegraphics[width=.44\textwidth]{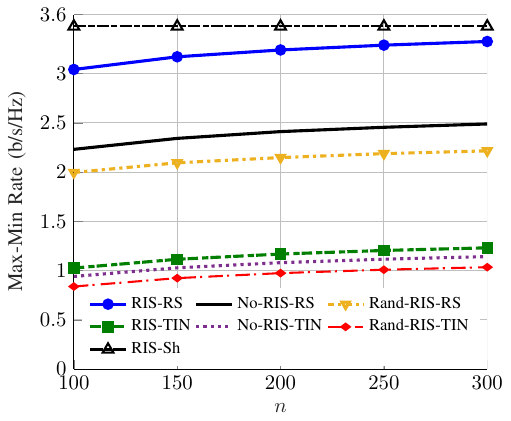}
    \caption{Average max-min rate versus the packet length for {\bf Scenario 1}. 
    }%%\vspace{-.5cm}
	\label{Fig-rr3} 
\end{figure} 
Fig. \ref{Fig-rr3} illustrates the average max-min rate versus the packet length for {\bf Scenario 1}. In this example, RSMA for RIS-aided systems with optimized RIS elements substantially outperforms all the other algorithms. Moreover, RSMA provides much more significant gains compared to RISs, and all the RSMA schemes outperform the TIN schemes both with and without an RIS. Additionally, the max-min rate grows with $n$, and it is expected to converge to the Shannon rate in the asymptotic case.

\begin{figure}[t]
    \centering
    \begin{subfigure}[t]{0.24\textwidth}
        \centering
           \includegraphics[width=\textwidth]{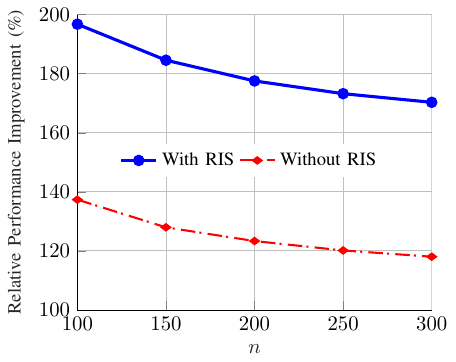}
        \caption{Benefits of RSMA over SDMA.}
    \end{subfigure}
\begin{subfigure}[t]{0.24\textwidth}
        \centering
       \includegraphics[width=\textwidth]{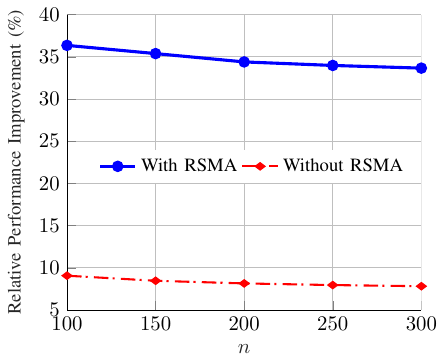}
        \caption{Benefits of RIS over systems
operating without RISs.}
    \end{subfigure}%
    \caption{Average max-min rate benefits of utilizing RSMA and RISs
versus $n$ for {\bf Scenario 1}. }%\vspace{-.5cm}
	\label{Fig-rr4} 
\end{figure}
Fig. \ref{Fig-rr4} shows the average max-min rate gains of utilizing RSMA and RISs versus $n$ for {\bf Scenario 1}. In this example, the RISs significantly amplify the RSMA gains. Moreover, the RSMA gains are much more significant when the packets are shorter. As described in Section \ref{sec-ii-d}, the latency constraint is related to the packet length, and shorter packets have to be utilized when a lower latency has to be achieved. As a result, RSMA is more beneficial in URLLC systems. We also observe that the RIS gains are less than 10\% when TIN is used, but the gains considerably increase when RSMA is employed. Specifically, RSMA enhances the gains of RISs by more than a factor of $3$ in this example. Furthermore, the RIS gains are reduced when $n$ grows, which shows that RISs provide higher gains in URLLC systems.

\subsubsection{Impact of  the reliability constraint}

\begin{figure}[t]
    \centering
\includegraphics[width=.44\textwidth]{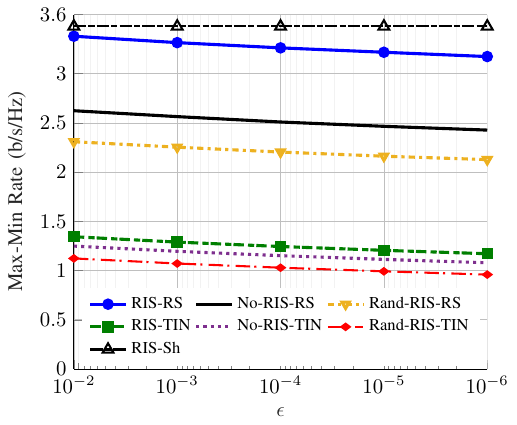}
    \caption{Average max-min rate versus $\epsilon$ for {\bf Scenario 1}.
    }%%\vspace{-.5cm}
	\label{Fig-rr5} 
\end{figure} 
Fig. \ref{Fig-rr5} depicts the average max-min rate versus the maximum tolerable bit error rate for {\bf Scenario 1}. Again, RSMA substantially outperforms the TIN schemes both with and without RISs. Moreover, the proposed RSMA scheme using RISs significantly outperforms the other schemes, if the RIS elements are optimized by the proposed algorithms. Additionally, the max-min rate increases with $\epsilon$. This means that the lower the maximum tolerable bit error rate is, the lower the data transmission rate has to be.

\begin{figure}[t]
    \centering
    \begin{subfigure}[t]{0.23\textwidth}%{0.42\textwidth}
        \centering
           \includegraphics[width=\textwidth]{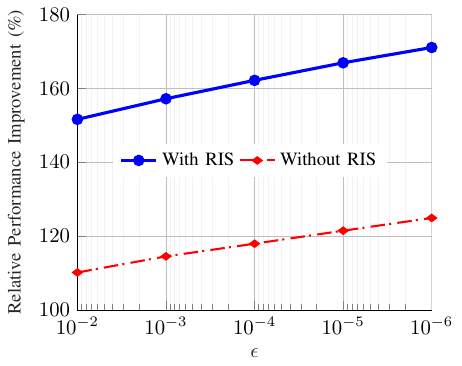}
        \caption{Benefits of RSMA over SDMA.}
    \end{subfigure}
~~
\begin{subfigure}[t]{0.23\textwidth}%{0.42\textwidth}
        \centering
       \includegraphics[width=\textwidth]{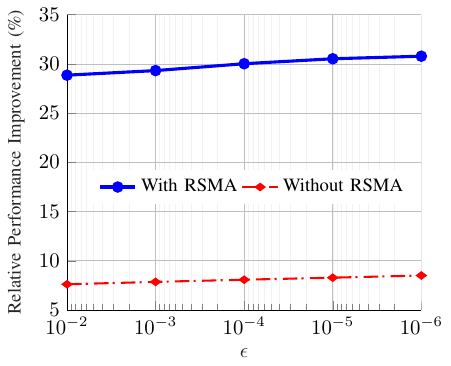}
        \caption{Benefits of RIS over systems operating without RISs.}
    \end{subfigure}%
    \caption{Average max-min rate benefits of utilizing RSMA and RISs
versus $\epsilon$ for {\bf Scenario 1}. }%\vspace{-.5cm}
	\label{Fig-rr6} 
\end{figure}
Fig. \ref{Fig-rr6} shows the average max-min rate benefits of utilizing RSMA and RISs versus $n$ for {\bf Scenario 1}. The benefits of RSMA increase with $\epsilon^{-1}$. Thus, the lower the maximum tolerable bit error rate is, the higher improvements along with RSMA. Moreover, RISs noticeably improve the gains of RSMA. Additionally, we can observe that RSMA drastically increases the gains along with RISs. Furthermore, in this example, the gains associated with RISs slightly increase, when $\epsilon$ is reduced.

\subsubsection{Impact of the number of users per cell}
\begin{figure}[t]
    \centering
      \includegraphics[width=.34\textwidth]{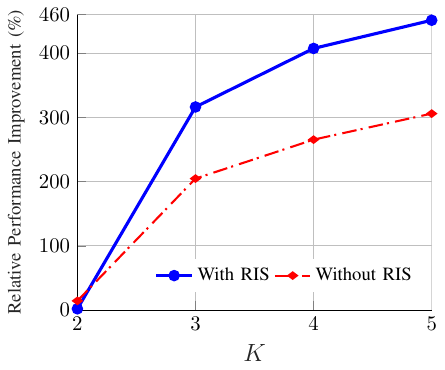}
    \caption{Average max-min rate benefits of utilizing RSMA versus $K$ for {\bf Scenario 1}.    }%%\vspace{-.5cm}
	\label{Fig-rr7} 
\end{figure}
Fig. \ref{Fig-rr7} shows the benefits of utilizing RSMA versus $K$ for {\bf Scenario 1}. This figure reveals that the higher the number of users is, the higher gain RSMA can provide. Indeed, the amount of interference escalates, when there are more users in the system, and an interference-management technique becomes more beneficial when the interference level is higher. Therefore, the gains achieved by RSMA increase with $K$.

 Another interesting observation gleaked from Fig. \ref{Fig-rr7} is the impact of RISs on the benefits of RSMA. When the number of BS antennas is lower than the number of users per cell, RISs increase the RSMA gains. However, when $K = N_{BS} = 2$, RISs reduce the RSMA gain and make it almost negligible. In this example, the RSMA gain for $K = 2$ is around 14\% without RISs, which is reduced to around 1.7\% when RISs are used. Note that RISs also have interference-management capabilities in certain scenarios \cite{santamaria2023interference}. Hence, RISs can adequately mitigate the impact of interference when the interference is not strong. By contrast, the RIS has to be supported by more powerful interference-management techniques when the interference is strong. In this case, RISs can improve the coverage, while RSMA is responsible for interference management.

 \subsubsection{Impact of the number of receive antennas} 
 \begin{figure}[t]
    \centering
\includegraphics[width=.4\textwidth]{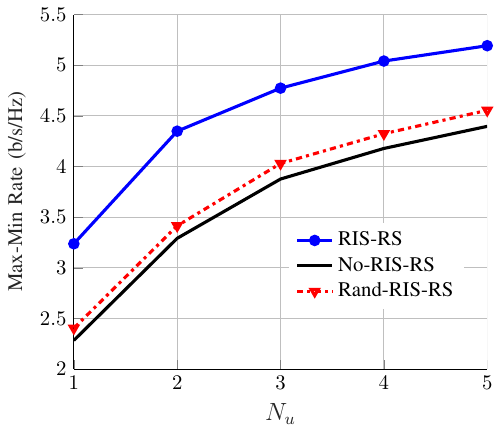}
    \caption{Average max-min rate versus $N_u$ for $N_{BS}=2$, $P = 10$ dB, $L = 1$, $M = 1$, $\epsilon=10^{-5}$,  $n_t=256$ bits, and $N_{{RIS}}=20$.   }
	\label{Fig-rrn1} 
\end{figure} 
 Fig. \ref{Fig-rrn1} shows that the average max-min rate increases with $N_u$, and the highest gain is achieved when $N_u$ is increased from 1 to 2, i.e. when switching from a MISO system to a MIMO one. Indeed, Fig. \ref{Fig-rrn1} illustrates that MIMO systems significantly outperform MISO systems even when there are only two transmit antennas, demonstrating the superiority of our RSMA schemes proposed for MIMO systems over the schemes in \cite{soleymani2023optimization}. Moreover, the gain of MIMO systems scales with the number of receive antennas.

 \begin{figure}[t]
    \centering
\includegraphics[width=.44\textwidth]{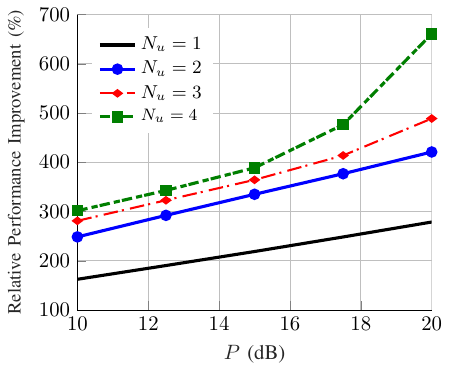}
    \caption{Comparison between multiple-stream and single-stream date transmission for $N_{{BS}}=2$, $K=3$, $L=1$,  $M=1$, $\epsilon=10^{-5}$,  $n=256$ bits and $N_{{RIS}}=20$. 
    }
	\label{Fig-rrn2} 
\end{figure} 
 Fig. \ref{Fig-rrn2} shows the gains achieved by utilizing RSMA in RIS-aided systems versus the power budget at the BS for different $N_u$. Interestingly, the benefits of using RSMA increase with $N_u$, and switching from MISO to MIMO enhances the improvements obtained by using the proposed RSMA scheme in RIS-assisted systems with high user loads. Additionally, the highest relative gain is achieved when $N_u$ increases from 1 to 2, especially at lower SNRs. In Fig. \ref{Fig-rr7}, we showed that RSMA does not provide a considerable gain in MIMO RIS-aided systems when $N_{BS}\geq K$. This means that increasing the number of transmit antennas decreases the RSMA benefits, since the BS can easily manage interference when $N_{BS}\geq K$. However, the number of receive antennas has a different impact on the performance of RSMA when $N_{BS}< K$. Indeed, increasing $N_u$ enhances the effective SINR at the users, but it cannot substantially contribute to interference management when the user load is high, i.e., when $N_{BS}< K$ since the maximum number of streams that the BS can support is still lower than the number of users. Note that increasing the number of antennas either at the receiver side or at the transmitter side enhances system performance, but each could lead to a different impact on the gains achieved by RSMA.

\subsubsection{Comparison with NOMA and TIN} 
\begin{figure}[t]
    \centering
    \begin{subfigure}[t]{0.22\textwidth}
        \centering
           \includegraphics[width=\textwidth]{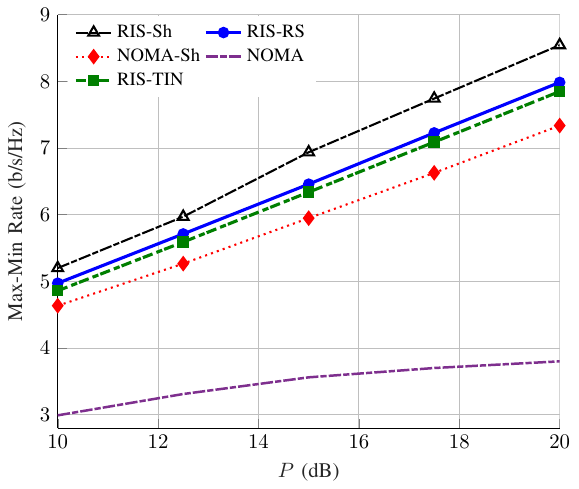}
        \caption{$K=2$.}
    \end{subfigure}
~~
\begin{subfigure}[t]{0.22\textwidth}
        \centering
       \includegraphics[width=\textwidth]{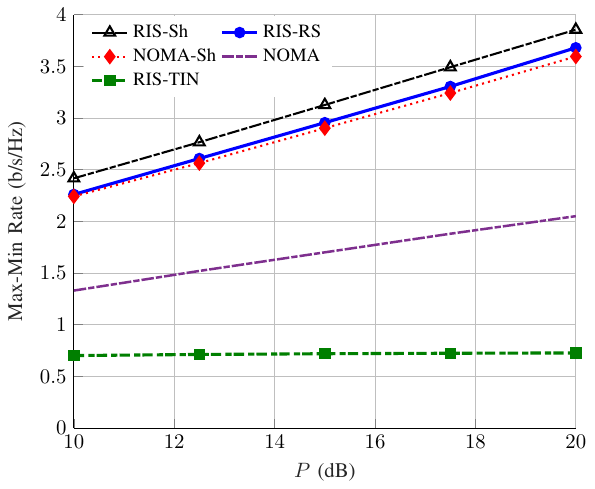}
        \caption{$K=4$.}
    \end{subfigure}%
    \caption{Average max-min rate versus the power budget at the BSs for {\bf Scenario 1}.
}%\vspace{-.5cm}
	\label{Fig-rr8} 
\end{figure}
Fig. \ref{Fig-rr8} shows the average max-min rate versus the power budget at the BSs for {\bf Scenario 1}. Note that the NOMA-based schemes in \cite{soleymani2022noma,soleymani2023noma} consider only the Shannon rates. Hence, we provide the average results for the Shannon and FBL rates achieved by the schemes in \cite{soleymani2022noma, soleymani2023noma}.  {Interestingly, the TIN scheme with FBL coding outperforms the NOMA-based scheme with both the Shannon and FBL rates, when $K = 2$. However, when $K = 4$, the NOMA-based scheme with either FBL rates or Shannon rates noticeably outperforms the TIN scheme. As discussed in the results of Fig. \ref{Fig-rr7}, when $K = 2$, the interference level is low, and TIN performs very close to the RSMA scheme. However, when $K$ increases, the interference becomes more severe, which makes TIN highly suboptimal. Additionally, Fig. \ref{Fig-rr8} illustrates that the RSMA scheme outperforms the other schemes in both cases, i.e., $K = 2$ and $K = 4$. Since RSMA includes both TIN and SIC strategies, it can switch between these strategies, depending on the interference level. Thus, RSMA performs better than the TIN and the NOMA-based schemes.} We can also observe in Fig. \ref{Fig-rr8} that applying the solutions associated with the Shannon rate to the FBL rates drastically decreases the performance. Indeed, the performance gap of the schemes proposed in \cite{soleymani2022noma, soleymani2023noma} for the FBL and Shannon rates is significantly higher than the gap for the RSMA scheme that considers FBL rates. As the schemes in \cite{soleymani2022noma, soleymani2023noma} are designed for Shannon rates, their performance significantly drops by switching from the Shannon rate to the FBL rate, which shows the importance of developing resource allocation schemes considering the FBL rate.

\subsubsection{Comparison with single-stream RSMA} 

\begin{figure}[t]
    \centering
       \includegraphics[width=.44\textwidth]{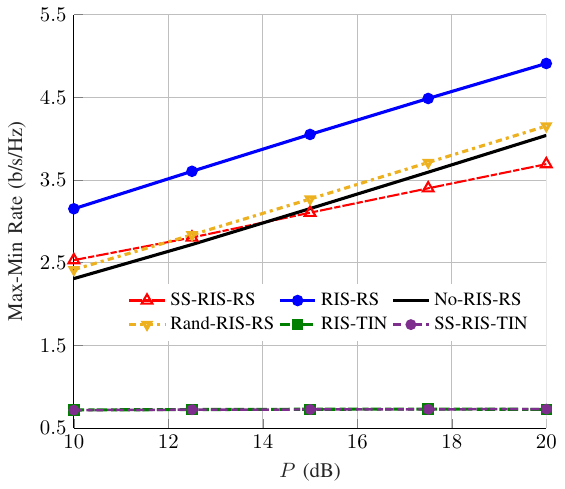}
    \caption{ {Comparison between multiple-stream and single-stream date transmission for $N_{{BS}}=2$, $N_{u}=2$, $K=4$, $L=1$,  $M=1$, $\epsilon=10^{-5}$,  $n=256$ bits and $N_{{RIS}}=20$.}
    }
	\label{Fig-rrn3} 
\end{figure} 
 In Fig. \ref{Fig-rrn3}, we compare the performance of our RSMA schemes proposed for single- and multi-stream data transmission per user. In this figure, the maximum number of streams per user is $2$. However, our RSMA scheme supporting multiple-stream data transmission significantly outperforms the one using single-stream data transmission, especially at higher SNRs. Indeed, the RSMA scheme conceived for systems operating without RIS outperforms single-stream RSMA in RIS-aided systems. In other words, the additional gains achieved by using RIS would be eroded if we employed single-stream data transmission. Note that the single-stream scheme is an enhanced version of the RSMA schemes in \cite{soleymani2023optimization} developed for MISO systems.  

\subsection{Energy Efficiency Metrics}\label{sec-iv-b}
Here, we evaluate the performance of the proposed RSMA scheme in terms of the max-min EE. Moreover, we investigate how RSMA and RISs influence each other from an EE perspective. Specifically, we evaluate the impact of $p_c$, $n$, and $\epsilon$ on the performance of RSMA and RISs. In this subsection, we adopt $P = 10$ dB for all figures. Moreover, for a fair comparison between the RIS-aided systems and the systems without an RIS, we consider a lower static power for the systems operating without RISs. Specifically, we assume that 1 Watt is required to operate each RIS. Thus, $p_{c,No-RIS} = p_c -1/K$, where $p_c$ is the static power consumed by the system, which is defined as in \eqref{(15)}. Even with this assumption, our results show that RISs can improve
the EE, as discussed below. 
\begin{figure}[t]
    \centering
           \includegraphics[width=.44\textwidth]{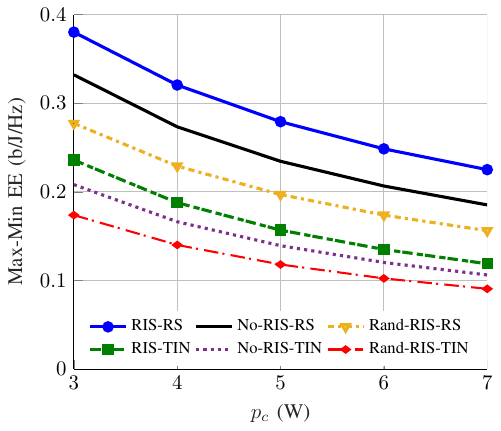}
    \caption{Average max-min EE versus $p_c$ for {\bf Scenario 1}. 
    }
	\label{Fig-ee1} 
\end{figure}
\subsubsection{Impact of $p_c$} 
We illustrate the average max-min EE versus $p_c$ for {\bf Scenario 1} in Fig. \ref{Fig-ee1}. This example reveals that each RSMA scheme significantly outperforms all the TIN schemes both with and without RIS. Moreover, an RIS can improve the performance substantially only when its coefficients are accurately optimized. Additionally, the RSMA scheme of RIS-aided systems performs the best among the schemes considered.

\begin{figure}[t]
    \centering
    \begin{subfigure}[t]{0.24\textwidth}
        \centering
           \includegraphics[width=\textwidth]{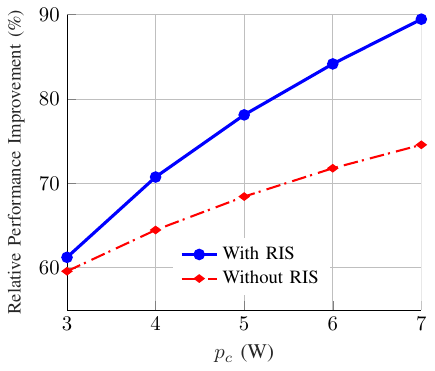}
        \caption{Benefits of RSMA over SDMA.}
    \end{subfigure}
\begin{subfigure}[t]{0.24\textwidth}
        \centering
       \includegraphics[width=\textwidth]{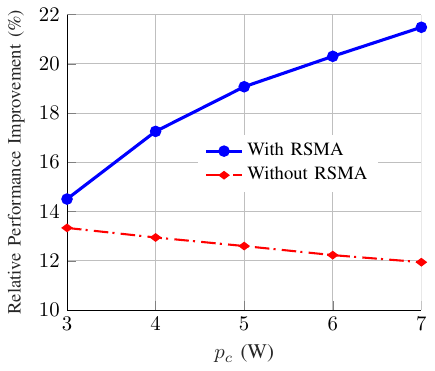}
        \caption{Benefits of RIS over systems operating without RISs.}
    \end{subfigure}%
    \caption{Average EE benefits of utilizing RSMA and RISs versus $p_c$ for {\bf Scenario 1}. 
    }
	\label{Fig-ee2} 
\end{figure}
Fig. \ref{Fig-ee2} shows the EE gains of RSMA and RISs versus $p_c$ for {\bf Scenario 1}. Again, in this example, we can observe that RSMA and RISs mutually enhance each other's gains. Hence, they are mutually beneficial technologies. Additionally,
the gain of RSMA is much higher than the gain of RISs in this example.

\subsubsection{Impact of the reliability constraint}
\begin{figure}[t]
        \centering
           \includegraphics[width=.44\textwidth]{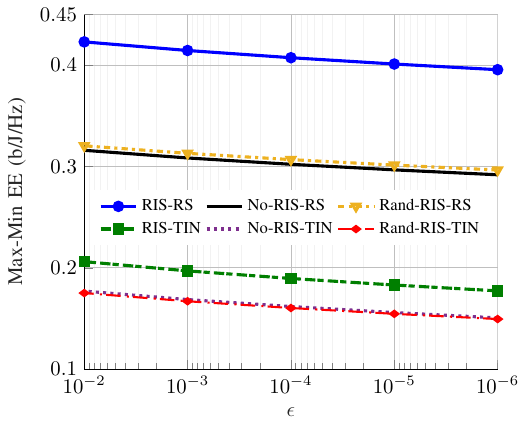}
    \caption{Average max-min EE  versus $\epsilon$ for {\bf Scenario 2}. 
    }
	\label{Fig-ee3} 
\end{figure}
Fig. \ref{Fig-ee3}  illustrates
the average max-min EE versus $\epsilon$ for {\bf Scenario 2}. This figure shows that RSMA with RISs can significantly improve the EE. Moreover, similar to the other examples of this section, all the RSMA designs outperform each TIN scheme both with and without employing an RIS. Additionally, the RSMA scheme of RIS-aided systems substantially outperforms all the other
schemes.

\begin{figure}[t]
    \centering
    \begin{subfigure}[t]{0.24\textwidth}
        \centering
           \includegraphics[width=\textwidth]{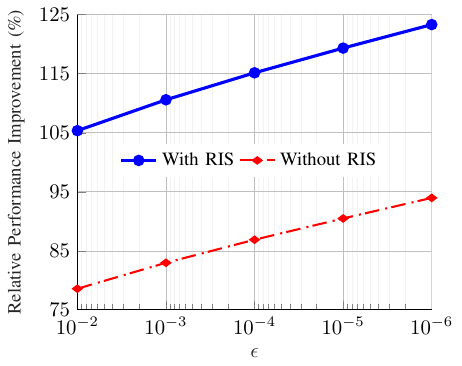}
        \caption{Benefits of RSMA over SDMA.}
    \end{subfigure}
\begin{subfigure}[t]{0.24\textwidth}
        \centering
       \includegraphics[width=\textwidth]{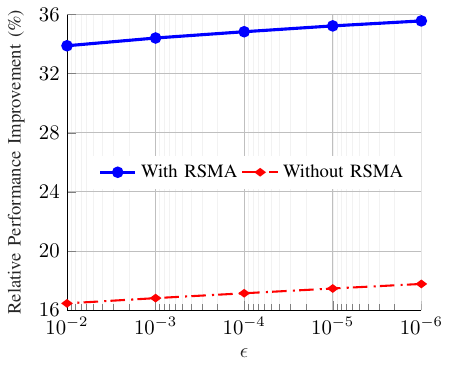}
        \caption{Benefits of RIS over systems operating without RISs.}
    \end{subfigure}%
    \caption{Average EE benefits of utilizing RSMA and RISs versus $\epsilon$ for {\bf Scenario 2}. 
    }%%\vspace{-.5cm}
	\label{Fig-ee4} 
\end{figure}
Fig. \ref{Fig-ee4} shows the gains of employing RSMA and RISs in
terms of the max-min EE versus $\epsilon$ for {\bf Scenario 2}. Here, the gain along with RSMA is more significant in RIS-aided systems than in systems operating without an RIS. Moreover, the gains of RSMA and/or RISs are enhanced when $\epsilon$ is reduced. Thus, RSMA and RISs are more beneficial when highly reliable communication is needed. Additionally, we can observe that RSMA enhances the gains of RISs and vice versa, which implies again that they are mutually beneficial technologies.

\subsubsection{Impact of  the packet length}
\begin{figure}[t]
    \centering
        \includegraphics[width=.44\textwidth]{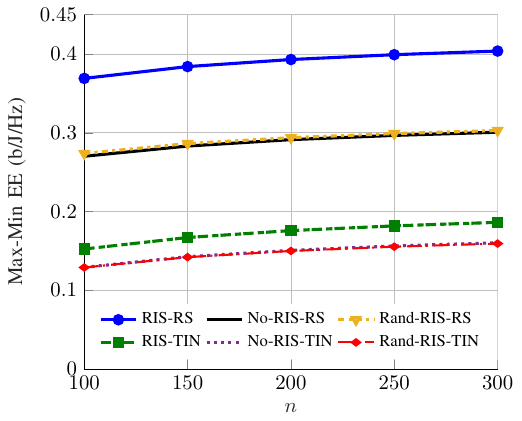}
    \caption{Average max-min EE versus versus $n$ for {\bf Scenario 2}. 
    }%%\vspace{-.5cm}
	\label{Fig-ee5} 
\end{figure}
Fig. \ref{Fig-ee5} shows the average
max-min EE versus the packet length in bits for {\bf Scenario 2}. In this example, RSMA provides higher benefits compared to an RIS. Specifically, a system without an RIS using RSMA outperforms an RIS-assisted system with TIN. Additionally, an RIS only improves the performance if its coefficients are appropriately optimized.

\begin{figure}[t]
    \centering
    \begin{subfigure}[t]{0.24\textwidth}
        \centering
           \includegraphics[width=\textwidth]{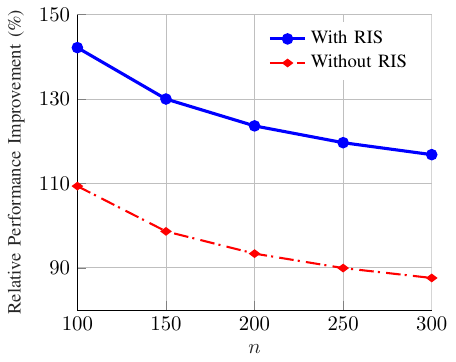}
        \caption{Benefits of RSMA over SDMA.}
    \end{subfigure}
\begin{subfigure}[t]{0.24\textwidth}
        \centering
       \includegraphics[width=\textwidth]{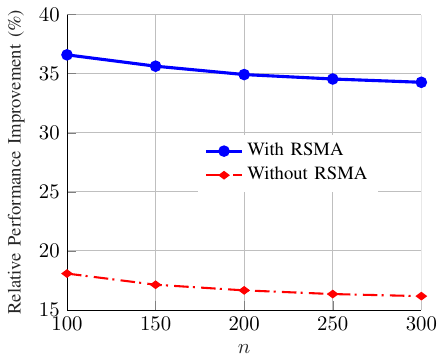}
        \caption{Benefits of RIS over systems operating without RISs.}
    \end{subfigure}%
    \caption{Average EE benefits of utilizing RSMA and RISs versus $n$ for {\bf Scenario 2}. 
    }%%\vspace{-.6cm}
	\label{Fig-ee6} 
\end{figure}
Fig. \ref{Fig-ee6} shows the benefits of employing RSMA and RISs in terms of the max-min EE versus $n$ for {\bf Scenario 2}. As it can be observed, RSMA provides higher gains in an RIS-aided system than in a system operating without RISs, and the RSMA gains are higher when the packet length is reduced. Similarly, the RIS gains increase when we use RSMA in this example, since the system is interference-limited, and an RIS alone cannot completely eliminate the interference.

%%\vspace{-.5cm}
\section{Summary and Conclusions}\label{sec-v}
This paper developed RSMA algorithms for maximizing the minimum weighted rate and/or EE of the users in a multicell RIS-aided URLLC MIMO BC. The main findings of this study are summarized below:
\begin{itemize}
\item RSMA and RISs substantially enhance both the SE and EE of MU-MIMO URLLC systems. The SE gains of RISs and RSMA are higher than their EE benefits, since the optimal EE usually favors a lower transmit power than the optimal SE. Moreover, the benefits of RSMA and RISs are increasing functions of the BSs transmission
power.

\item RISs impact the performance of RSMA differently, depending on the user load. When the number of users is higher than the number of TAs at the BSs, RISs improve the RSMA gains and vice versa. However, when the user load is low, RISs erase the RSMA gains.

\item  RISs and RSMA can provide higher gains when the reliability and latency constraints are more demanding. Particularly, the benefits of RSMA and RIS are enhanced when the maximum tolerable packet error rate and packet length are reduced.

\item RSMA provides higher SE and EE gains than RISs when the user load is higher than one. Additionally, in the scenarios considered, RSMA always outperformed SDMA both with and without RISs.

\end{itemize}
%%\vspace{-.4cm}
\appendices
\section{Proof of Lemma \ref{lem-4}}\label{app-a}
%%\vspace{-.1cm}
The rates $r_{c,lk}$ and $r_{p,lk}$ comprise a term associated with the first-order Shannon rate and a term with the channel dispersion
expression. Since both rates $r_{c,lk}$ and $r_{p,lk}$ have a similar structure, we prove that the lower bound holds for $r_{c,lk}$. It is straightforward to show that \eqref{(22)} holds based on the proof of the lower bound in \eqref{(23)}. To calculate the concave lower bounds, we derive a lower bound for each part separately. Specifically, we utilize Lemma \ref{lem-2} to compute a lower
bound for the Shannon rate part as
\begin{multline}\label{(37)}
    \ln\left|{\bf I} + {\bf D}^{-1}_{c,lk} {\bf S}_{c,lk}  \right|
    \geq \ln\left|{\bf I} + \bar{\bf D}^{-1}_{c,lk} \bar{\bf S}_{c,lk}  \right|
    - \text{Tr}\left(  \bar{\bf D}^{-1}_{c,lk} \bar{\bf S}_{c,lk} \right)
    \\
    + 2\mathfrak{R}\left\{\bar{\bf W}_{l}^H\bar{\bf H}_{lk,l}^H \bar{\bf D}^{-1}_{c,lk} {\bf H}_{lk,l} {\bf W}_{l} \right\}
    \\
    -\!\text{Tr}\!\!\left(\!\! ( \bar{\bf D}^{-1}_{c,lk} \!\!-\! (\bar{\bf S}_{c,lk}\! +\! \bar{\bf D}^{-1}_{c,lk})^{-1}) 
    (\bar{\bf H}_{lk,l} {\bf W}_{l} {\bf W}_{l}^H \bar{\bf H}_{lk,l}^H \!+\! {\bf D}_{c,lk})\!\!
    \right)\!.
\end{multline}
The part of $r_{c,lk}$, which is related to the channel dispersion term, is 
$-Q^{-1}(\epsilon_c) \sqrt{\frac{\zeta_{c,lk}}{n}}$, 
where 
$\zeta_{c,lk} = 2\text{Tr} \left({\bf S}_{c,lk}({\bf S}_{c,lk}+{\bf D}_{c,lk})^{-1}\right)$ 
is the channel dispersion for decoding the common message at U$_{lk}$, and $\epsilon_c$ and n are constant.
Thus, deriving a concave lower bound for $-Q^{-1}(\epsilon_c)
\sqrt{\frac{\zeta_{c,lk}}{n}}$ is equivalent to finding a convex upper bound for $\sqrt{\zeta_{c,lk}}$. To obtain a convex upper bound for $\sqrt{\zeta_{c,lk}}$, we first employ the inequality below:
\begin{equation}\label{(38)}
    \sqrt{\zeta_{c,lk}}\leq \frac{\sqrt{\bar{\zeta}_{c,lk}}}{2}+\frac{\zeta_{c,lk}}{2\sqrt{\bar{\zeta}_{c,lk}}},
\end{equation}
where $\bar{\zeta}_{c,lk}$ is defined as in Lemma \ref{lem-4}. Unfortunately, $\zeta_{c,lk}$ is not a convex function, but we can derive a convex upper bound for $\zeta_{c,lk}$ upon utilizing the bound in Lemma \ref{lem-3}. To this end, we rewrite $\zeta_{c,lk}$ as
\begin{equation}\label{(39)}
    \zeta_{c,lk}=2\text{Tr}\left({\bf I}- {\bf D}_{c,lk}({\bf S}_{c,lk}+{\bf D}_{c,lk})^{-1} \right).
\end{equation}
Now, we want to obtain a concave lower bound for
$\text{Tr}\left({\bf D}_{c,lk}({\bf S}_{c,lk}+{\bf D}_{c,lk})^{-1} \right)$, which can be derived by utilizing Lemma \ref{lem-3} as
\begin{multline}\label{(40)}
   \text{Tr}\left({\bf D}_{c,lk}({\bf S}_{c,lk}+{\bf D}_{c,lk})^{-1} \right)\geq  
   \\
   2\sum_{ij}\mathfrak{R}\left\{\text{Tr}\left((\bar{\bf S}_{c,lk} + \bar{\bf D}_{c,lk})^{-1}  
   \bar{\bf H}_{lk,i} \bar{\bf W}_{ij} {\bf W}_{ij}^H \bar{\bf H}_{lk,i}^H
   \right)\right\}
   \\
   + 2\sum_{i\neq l}\mathfrak{R}\left\{\text{Tr}\left((\bar{\bf S}_{c,lk} + \bar{\bf D}_{c,lk})^{-1}   
   \bar{\bf H}_{lk,i} \bar{\bf W}_{i} {\bf W}_{i}^H \bar{\bf H}_{lk,i}^H
   \right)\right\}
   \\
   -\!\text{Tr}\!\left[\! (\bar{\bf S}_{c,lk}\! + \!\bar{\bf D}_{c,lk})^{-1} 
   \bar{\bf D}_{c,lk}
   (\bar{\bf S}_{c,lk}\! +\! \bar{\bf D}_{c,lk})^{-1} 
    ({\bf S}_{c,lk}\! +\! {\bf D}_{c,lk})
   \!\right]\!\!.
\end{multline}
From \eqref{(38)}, \eqref{(39)}, and \eqref{(40)}, we have
\begin{multline}\label{(41)}
    -Q^{-1}(\epsilon_c)
\sqrt{\frac{\zeta_{c,lk}}{n}}\!\geq \!
-Q^{-1}(\epsilon_c)
\sqrt{\frac{\bar{\zeta}_{c,lk}}{4n}}
-Q^{-1}(\epsilon_c)
\frac{\text{Tr}({\bf I})}{\sqrt{n\bar{\zeta}_{c,lk}}}
\\
+\!\frac{2Q^{-1}(\epsilon_c)}{\sqrt{n\bar{\zeta}_{c,lk}}}
\!\!\sum_{ij}\mathfrak{R}\!\left\{\!\text{Tr}\left((\bar{\bf S}_{c,lk} \!+\! \bar{\bf D}_{c,lk})^{-1}  
   \bar{\bf H}_{lk,i} \bar{\bf W}_{ij} {\bf W}_{ij}^H \bar{\bf H}_{lk,i}^H
   \!\right)\!\right\}
\\
+\!\frac{2Q^{-1}(\epsilon_c)}{\sqrt{n\bar{\zeta}_{c,lk}}}
\!\!\sum_{i\neq l}\!\mathfrak{R}\!\left\{\!\text{Tr}\left((\bar{\bf S}_{c,lk} \!+\! \bar{\bf D}_{c,lk})^{-1}   
   \bar{\bf H}_{lk,i} \bar{\bf W}_{i} {\bf W}_{i}^H \bar{\bf H}_{lk,i}^H
   \!\right)\!\right\}
\\
-\frac{Q^{-1}(\epsilon_c)}{\sqrt{n\bar{\zeta}_{c,lk}}}
\text{Tr}\!\left[\! (\bar{\bf S}_{c,lk}\! + \!\bar{\bf D}_{c,lk})^{-1} 
   \bar{\bf D}_{c,lk}
   (\bar{\bf S}_{c,lk}\! +\! \bar{\bf D}_{c,lk})^{-1} 
   \right.   \\   \left. \times
    ({\bf S}_{c,lk}\! +\! {\bf D}_{c,lk})
   \right].
\end{multline}
Upon using \eqref{(37)} and \eqref{(41)}, we can readily obtain \eqref{(23)}.
%%\vspace{-.4cm}
\bibliographystyle{IEEEtran}
%\bibliography{ref}
\bibliography{ref2}

% Generated by IEEEtran.bst, version: 1.14 (2015/08/26)
\begin{thebibliography}{10}
\providecommand{\url}[1]{#1}
\csname url@samestyle\endcsname
\providecommand{\newblock}{\relax}
\providecommand{\bibinfo}[2]{#2}
\providecommand{\BIBentrySTDinterwordspacing}{\spaceskip=0pt\relax}
\providecommand{\BIBentryALTinterwordstretchfactor}{4}
\providecommand{\BIBentryALTinterwordspacing}{\spaceskip=\fontdimen2\font plus
\BIBentryALTinterwordstretchfactor\fontdimen3\font minus \fontdimen4\font\relax}
\providecommand{\BIBforeignlanguage}[2]{{%
\expandafter\ifx\csname l@#1\endcsname\relax
\typeout{** WARNING: IEEEtran.bst: No hyphenation pattern has been}%
\typeout{** loaded for the language `#1'. Using the pattern for}%
\typeout{** the default language instead.}%
\else
\language=\csname l@#1\endcsname
\fi
#2}}
\providecommand{\BIBdecl}{\relax}
\BIBdecl

\bibitem{wang2023road}
C.-X. Wang \emph{et~al.}, ``On the road to {6G}: Visions, requirements, key technologies and testbeds,'' \emph{IEEE Commun. Surv. Tutor.}, vol.~25, no.~2, pp. 905--974, 2023.

\bibitem{gong2022holographic}
T.~Gong \emph{et~al.}, ``Holographic {MIMO} communications: Theoretical foundations, enabling technologies, and future directions,'' \emph{IEEE Commun. Surv. Tutor.}, vol.~26, no.~1, pp. 196--257, 2024.

\bibitem{mao2022rate}
Y.~Mao, O.~Dizdar, B.~Clerckx, R.~Schober, P.~Popovski, and H.~V. Poor, ``Rate-splitting multiple access: Fundamentals, survey, and future research trends,'' \emph{IEEE Commun. Surv. Tutor.}, vol.~24, no.~4, pp. 2073--2126, 2022.

\bibitem{wu2021intelligent}
Q.~Wu, S.~Zhang, B.~Zheng, C.~You, and R.~Zhang, ``Intelligent reflecting surface aided wireless communications: A tutorial,'' \emph{IEEE Trans. Commun.}, vol.~69, no.~5, pp. 3313--3351, 2021.

\bibitem{di2020smart}
M.~Di~Renzo, A.~Zappone, M.~Debbah, M.-S. Alouini, C.~Yuen, J.~De~Rosny, and S.~Tretyakov, ``Smart radio environments empowered by reconfigurable intelligent surfaces: How it works, state of research, and the road ahead,'' \emph{IEEE J. Sel. Areas Commun.}, vol.~38, no.~11, pp. 2450--2525, 2020.

\bibitem{geng2015optimality}
C.~Geng, N.~Naderializadeh, A.~S. Avestimehr, and S.~A. Jafar, ``On the optimality of treating interference as noise,'' \emph{IEEE Trans. Inf. Theory}, vol.~61, no.~4, pp. 1753--1767, 2015.

\bibitem{annapureddy2009gaussian}
V.~S. Annapureddy and V.~V. Veeravalli, ``Gaussian interference networks: Sum capacity in the low-interference regime and new outer bounds on the capacity region,'' \emph{IEEE Trans. Inf. Theory}, vol.~55, no.~7, pp. 3032--3050, 2009.

\bibitem{sato1981capacity}
H.~Sato, ``The capacity of the {G}aussian interference channel under strong interference (corresp.),'' \emph{IEEE Trans. Inf. theory}, vol.~27, no.~6, pp. 786--788, 1981.

\bibitem{clerckx2016rate}
B.~Clerckx, H.~Joudeh, C.~Hao, M.~Dai, and B.~Rassouli, ``Rate splitting for {MIMO} wireless networks: A promising {PHY}-layer strategy for {LTE} evolution,'' \emph{IEEE Commun. Mag.}, vol.~54, no.~5, pp. 98--105, 2016.

\bibitem{mishra2022rate}
A.~Mishra, Y.~Mao, O.~Dizdar, and B.~Clerckx, ``Rate-splitting multiple access for downlink multiuser {MIMO}: Precoder optimization and {PHY}-layer design,'' \emph{IEEE Trans. Commun.}, vol.~70, no.~2, pp. 874--890, 2022.

\bibitem{joudeh2016sum}
H.~Joudeh and B.~Clerckx, ``Sum-rate maximization for linearly precoded downlink multiuser {MISO} systems with partial {CSIT}: A rate-splitting approach,'' \emph{IEEE Trans. Commun.}, vol.~64, no.~11, pp. 4847--4861, 2016.

\bibitem{soleymani2022rate}
M.~Soleymani, I.~Santamaria, and E.~Jorswieck, ``Rate splitting in {MIMO} {RIS}-assisted systems with hardware impairments and improper signaling,'' \emph{IEEE Trans. Veh. Technol.}, vol.~72, no.~4, pp. 4580--4597, April 2023.

\bibitem{zhou2021rate}
G.~Zhou, Y.~Mao, and B.~Clerckx, ``Rate-splitting multiple access for multi-antenna downlink communication systems: Spectral and energy efficiency tradeoff,'' \emph{IEEE Trans. Wireless Commun.}, vol.~21, no.~7, pp. 4816--4828, 2021.

\bibitem{mao2021rate}
Y.~Mao, E.~Piovano, and B.~Clerckx, ``Rate-splitting multiple access for overloaded cellular internet of things,'' \emph{IEEE Trans. Commun.}, vol.~69, no.~7, pp. 4504--4519, 2021.

\bibitem{dizdar2021rate}
O.~Dizdar, Y.~Mao, and B.~Clerckx, ``Rate-splitting multiple access to mitigate the curse of mobility in (massive) {MIMO} networks,'' \emph{IEEE Trans. Commun.}, vol.~69, no.~10, pp. 6765--6780, 2021.

\bibitem{soleymani2023rate}
M.~Soleymani, I.~Santamaria, and E.~Jorswieck, ``Rate region of {MIMO} {RIS}-assisted broadcast channels with rate splitting and improper signaling,'' \emph{WSA 2023; 26th International ITG Workshop on Smart Antennas}, 2023.

\bibitem{li2020rate}
Z.~Li, C.~Ye, Y.~Cui, S.~Yang, and S.~Shamai, ``Rate splitting for multi-antenna downlink: Precoder design and practical implementation,'' \emph{IEEE J. Sel. Areas Commun.}, vol.~38, no.~8, pp. 1910--1924, 2020.

\bibitem{flores2021tomlinson}
A.~R. Flores, R.~C. De~Lamare, and B.~Clerckx, ``Tomlinson-{H}arashima precoded rate-splitting with stream combiners for {MU-MIMO} systems,'' \emph{IEEE Trans. Commun.}, vol.~69, no.~6, pp. 3833--3845, 2021.

\bibitem{soleymani2023energy}
M.~Soleymani, I.~Santamaria, and E.~Jorswieck, ``Energy-efficient rate splitting for {MIMO} {STAR-RIS}-assisted broadcast channels with {I/Q} imbalance,'' \emph{Proc. IEEE Eu. Signal Process. Conf. (EUSIPCO)}, pp. 1504--1508, 2023.

\bibitem{mao2018rate}
Y.~Mao, B.~Clerckx, and V.~O. Li, ``Rate-splitting multiple access for downlink communication systems: Bridging, generalizing, and outperforming {SDMA} and {NOMA},'' \emph{EURASIP J. Wirel. Commun. Netw.}, vol. 2018, pp. 1--54, 2018.

\bibitem{huang2019reconfigurable}
C.~Huang, A.~Zappone, G.~C. Alexandropoulos, M.~Debbah, and C.~Yuen, ``Reconfigurable intelligent surfaces for energy efficiency in wireless communication,'' \emph{IEEE Trans. Wireless Commun.}, vol.~18, no.~8, pp. 4157--4170, 2019.

\bibitem{wu2019intelligent}
Q.~Wu and R.~Zhang, ``Intelligent reflecting surface enhanced wireless network via joint active and passive beamforming,'' \emph{IEEE Trans. Wireless Commun.}, vol.~18, no.~11, pp. 5394--5409, 2019.

\bibitem{soleymani2024energy}
M.~Soleymani, I.~Santamaria, E.~Jorswieck, M.~Di~Renzo, and J.~Guti{\'e}rrez, ``Energy efficiency comparison of {RIS} architectures in {MISO} broadcast channels,'' in \emph{Int. Workshop Signal Process. Adv. Wireless Commun. (SPAWC)}.\hskip 1em plus 0.5em minus 0.4em\relax IEEE, 2024, pp. 701--705.

\bibitem{soleymani2022noma}
M.~Soleymani, I.~Santamaria, E.~Jorswieck, and S.~Rezvani, ``{NOMA}-based improper signaling for multicell {MISO} {RIS}-assisted broadcast channels,'' \emph{IEEE Trans. Signal Process.}, vol.~71, pp. 963--978, March 2023.

\bibitem{jiang2021achievable}
M.~Jiang, Y.~Li, G.~Zhang, and M.~Cui, ``Achievable rate region maximization in intelligent reflecting surfaces-assisted interference channel,'' \emph{IEEE Trans. Veh. Technol.}, vol.~70, no.~12, pp. 13\,406--13\,412, 2021.

\bibitem{bafghi2022degrees}
A.~H.~A. Bafghi, V.~Jamali, M.~Nasiri-Kenari, and R.~Schober, ``Degrees of freedom of the {K}-user interference channel assisted by active and passive {IRSs},'' \emph{IEEE Trans. Commun.}, vol.~70, no.~5, pp. 3063--3080, 2022.

\bibitem{xu2023enhanced}
X.~Xu, H.~Ren, J.~Bao, W.-P. Zhu, and Z.~Liu, ``An enhanced interference alignment strategy with {MIL} criterion and {RCG} algorithm for {IRS}-assisted multiuser {MIMO},'' \emph{IEEE Commun. Let.}, vol.~27, no.~3, pp. 1001--1005, 2023.

\bibitem{liu2024ris}
W.~Liu, A.~Li, Y.~Xu, Z.~Zhang, Y.~Zhang, and B.~Jiang, ``{RIS}-aided interference subspace alignment for {K}-user {MIMO} interference networks,'' \emph{IEEE Commun. Lett.}, vol.~28, no.~8, pp. 1919--1923, 2024.

\bibitem{santamaria2023interference}
I.~Santamaria, M.~Soleymani, E.~Jorswieck, and J.~Gutierrez, ``Interference leakage minimization in {RIS}-assisted {MIMO} interference channels,'' \emph{Proc. IEEE Int. Conf. on Acoust., Speech and Signal Processing (ICASSP)}, 2023.

\bibitem{huang2020achievable}
W.~Huang, Y.~Zeng, and Y.~Huang, ``Achievable rate region of {MISO} interference channel aided by intelligent reflecting surface,'' \emph{IEEE Trans. Veh. Technol.}, vol.~69, no.~12, pp. 16\,264--16\,269, 2020.

\bibitem{you2021reconfigurable}
L.~You, J.~Xiong, Y.~Huang, D.~W.~K. Ng, C.~Pan, W.~Wang, and X.~Gao, ``Reconfigurable intelligent surfaces-assisted multiuser {MIMO} uplink transmission with partial {CSI},'' \emph{IEEE Trans. Wireless Commun.}, vol.~20, no.~9, pp. 5613--5627, 2021.

\bibitem{fotock2023energy}
R.~K. Fotock, A.~Zappone, and M.~Di~Renzo, ``Energy efficiency optimization in {RIS}-aided wireless networks: Active versus nearly-passive {RIS} with global reflection constraints,'' \emph{IEEE Trans. Commun.}, vol.~72, no.~1, pp. 257--272, 2024.

\bibitem{zhang2020intelligent}
L.~Zhang, Y.~Wang, W.~Tao, Z.~Jia, T.~Song, and C.~Pan, ``Intelligent reflecting surface aided {MIMO} cognitive radio systems,'' \emph{IEEE Trans. Veh. Technol.}, vol.~69, no.~10, pp. 11\,445--11\,457, 2020.

\bibitem{soleymani2023optimization}
M.~Soleymani, I.~Santamaria, E.~Jorswieck, and B.~Clerckx, ``Optimization of rate-splitting multiple access in beyond diagonal {RIS}-assisted {URLLC} systems,'' \emph{IEEE Trans. Wireless Commun.}, vol.~23, no.~5, pp. 5063--5078, 2024.

\bibitem{soleymani2023spectral}
M.~Soleymani, I.~Santamaria, and E.~Jorswieck, ``Spectral and energy efficiency maximization of {MISO} {STAR-RIS}-assisted {URLLC} systems,'' \emph{IEEE Access}, vol.~11, pp. 70\,833--70\,852, 2023.

\bibitem{soleymani2024optimization}
M.~Soleymani, I.~Santamaria, E.~Jorswieck, R.~Schober, and L.~Hanzo, ``Optimization of the downlink spectral-and energy-efficiency of {RIS}-aided multi-user {URLLC} {MIMO} systems,'' \emph{IEEE Trans. Commun.}, 2024.

\bibitem{fang2023improper}
J.~Fang, C.~Zhang, Q.~Wu, and A.~Li, ``Improper {G}aussian signaling for {IRS} assisted multiuser {SWIPT} systems with hardware impairments,'' \emph{IEEE Trans. Veh. Technol.}, vol.~72, no.~10, pp. 13\,024--13\,038, 2023.

\bibitem{soleymani2022improper}
M.~Soleymani, I.~Santamaria, and P.~J. Schreier, ``Improper signaling for multicell {MIMO} {RIS}-assisted broadcast channels with {I/Q} imbalance,'' \emph{IEEE Trans. Green Commun. Netw.}, vol.~6, no.~2, pp. 723--738, 2022.

\bibitem{yu2021maximizing}
H.~Yu, H.~D. Tuan, E.~Dutkiewicz, H.~V. Poor, and L.~Hanzo, ``Maximizing the geometric mean of user-rates to improve rate-fairness: Proper vs. improper {G}aussian signaling,'' \emph{IEEE Transactions on Wireless Communications}, vol.~21, no.~1, pp. 295--309, 2022.

\bibitem{liu2023optimization}
L.~Liu, B.~Li, W.~Wu, H.~Wang, and R.~Song, ``Optimization for multi-cell {NOMA} systems assisted by multi-{RIS} with inter-{RIS} reflection,'' \emph{IEEE Commun. Lett.}, vol.~28, no.~1, pp. 123--127, 2024.

\bibitem{li2023ris}
J.~Li, Z.~Song, T.~Hou, J.~Gao, A.~Li, and Z.~Tang, ``An {RIS}-aided interference mitigation based design for {MIMO-NOMA} in cellular networks,'' \emph{IEEE Trans. Green Commun. Netw.}, vol.~8, no.~1, pp. 317--329, 2024.

\bibitem{soleymani2023noma}
M.~Soleymani, I.~Santamaria, and E.~Jorswieck, ``{NOMA}-based improper signaling for {MIMO} {STAR-RIS}-assisted broadcast channels with hardware impairments,'' \emph{IEEE Global Commun. Conf. (GLOBECOM)}, pp. 673--678, 2023.

\bibitem{katwe2022ratetwc}
M.~Katwe, K.~Singh, B.~Clerckx, and C.-P. Li, ``Rate splitting multiple access for sum-rate maximization in {IRS} aided uplink communications,'' \emph{IEEE Trans. Wireless Commun.}, vol.~22, no.~4, pp. 2246--2261, 2023.

\bibitem{niu2023active}
H.~Niu, Z.~Lin, K.~An, J.~Wang, G.~Zheng, N.~Al-Dhahir, and K.-K. Wong, ``Active {RIS} assisted rate-splitting multiple access network: Spectral and energy efficiency tradeoff,'' \emph{IEEE J. Sel. Areas Commun.}, vol.~41, no.~5, pp. 1452--1467, 2023.

\bibitem{zhang2023joint}
T.~Zhang and S.~Mao, ``Joint beamforming design in reconfigurable intelligent surface-assisted rate splitting networks,'' \emph{EEE Trans. Wireless Commun.}, vol.~23, no.~1, pp. 263--275, 2024.

\bibitem{wang2023flexible}
Y.~Wang, V.~W. Wong, and J.~Wang, ``Flexible rate-splitting multiple access with finite blocklength,'' \emph{IEEE J. Sel. Areas Commun.}, vol.~41, no.~5, pp. 1398--1412, May 2023.

\bibitem{xu2023max}
Y.~Xu, Y.~Mao, O.~Dizdar, and B.~Clerckx, ``Max-min fairness of rate-splitting multiple access with finite blocklength communications,'' \emph{IEEE Trans. Veh. Technol.}, vol.~72, no.~5, pp. 6816--6821, 2023.

\bibitem{xu2022rate}
------, ``Rate-splitting multiple access with finite blocklength for short-packet and low-latency downlink communications,'' \emph{IEEE Trans. Veh. Technol.}, 2022.

\bibitem{singh2023rsma}
S.~K. Singh, K.~Agrawal, K.~Singh, B.~Clerckx, and C.-P. Li, ``{RSMA} for hybrid {RIS-UAV}-aided full-duplex communications with finite blocklength codes under imperfect {SIC},'' \emph{IEEE Trans. Wireless Commun.}, vol.~22, no.~9, pp. 5957--5975, 2023.

\bibitem{katwe2022rate}
M.~Katwe, K.~Singh, B.~Clerckx, and C.-P. Li, ``Rate splitting multiple access for energy efficient {RIS}-aided multi-user short-packet communications,'' in \emph{IEEE Globecom Workshops (GC Wkshps)}.\hskip 1em plus 0.5em minus 0.4em\relax IEEE, 2022, pp. 644--649.

\bibitem{dhok2022rate}
S.~Dhok and P.~K. Sharma, ``Rate-splitting multiple access with {STAR} {RIS} over spatially-correlated channels,'' \emph{IEEE Trans. Commun.}, vol.~70, no.~10, pp. 6410--6424, 2022.

\bibitem{pala2023spectral}
S.~Pala, M.~Katwe, K.~Singh, B.~Clerckx, and C.-P. Li, ``Spectral-efficient {RIS}-aided {RSMA} {URLLC}: Toward mobile broadband reliable low latency communication ({mBRLLC}) system,'' \emph{IEEE Trans. Wireless Commun.}, vol.~23, no.~4, pp. 3507--3524, 2024.

\bibitem{katwe2024rsma}
M.~V. Katwe, R.~Deshpande, K.~Singh, M.-L. Ku, and B.~Clerckx, ``{RSMA}-enabled aerial {RIS}-aided {MU}-{MIMO} system for improved spectral-efficient {URLLC},'' \emph{IEEE Trans. Veh. Technol.}, vol.~74, no.~2, pp. 3110--3127, 2025.

\bibitem{jorswieck2025urllc}
E.~Jorswieck, M.~Soleymani, I.~Santamaria, and J.~Guti{\'e}rrez, ``{URLLC} networks enabled by {STAR}-{RIS}, rate splitting, and multiple antennas,'' in \emph{Accepted at IEEE International Conference on Mobile and Miniaturized Terahertz Systems (ICMMTS 2025)}.\hskip 1em plus 0.5em minus 0.4em\relax IEEE, 2025.

\bibitem{scarlett2016dispersion}
J.~Scarlett, V.~Y. Tan, and G.~Durisi, ``The dispersion of nearest-neighbor decoding for additive non-{G}aussian channels,'' \emph{IEEE Trans. Inf. Theory}, vol.~63, no.~1, pp. 81--92, 2016.

\bibitem{pan2020multicell}
C.~Pan, H.~Ren, K.~Wang, W.~Xu, M.~Elkashlan, A.~Nallanathan, and L.~Hanzo, ``Multicell {MIMO} communications relying on intelligent reflecting surfaces,'' \emph{IEEE Trans. Wireless Commun.}, vol.~19, no.~8, pp. 5218--5233, 2020.

\bibitem{nasir2020resource}
A.~A. Nasir, H.~D. Tuan, H.~H. Nguyen, M.~Debbah, and H.~V. Poor, ``Resource allocation and beamforming design in the short blocklength regime for {URLLC},'' \emph{IEEE Trans. on Wireless Commun.}, vol.~20, no.~2, pp. 1321--1335, Feb. 2020.

\bibitem{pala2022joint}
S.~Pala, K.~Singh, M.~Katwe, and C.-P. Li, ``Joint optimization of {URLLC} parameters and beamforming design for multi-{RIS}-aided {MU-MISO} {URLLC} system,'' \emph{IEEE Wireless Commun. Lett.}, vol.~12, no.~1, pp. 148--152, 2022.

\bibitem{ghanem2020resource}
W.~R. Ghanem, V.~Jamali, Y.~Sun, and R.~Schober, ``Resource allocation for multi-user downlink {MISO OFDMA-URLLC} systems,'' \emph{IEEE Trans. Commun.}, vol.~68, no.~11, pp. 7184--7200, Nov. 2020.

\bibitem{polyanskiy2010}
Y.~Polyanskiy, \emph{Channel coding: Non-asymptotic fundamental limits}.\hskip 1em plus 0.5em minus 0.4em\relax Princeton University, 2010.

\bibitem{polyanskiy2010channel}
Y.~Polyanskiy, H.~V. Poor, and S.~Verd{\'u}, ``Channel coding rate in the finite blocklength regime,'' \emph{IEEE Trans. Inf. Theory}, vol.~56, no.~5, pp. 2307--2359, 2010.

\bibitem{clerckx2019rate}
B.~Clerckx, Y.~Mao, R.~Schober, and H.~V. Poor, ``Rate-splitting unifying {SDMA}, {OMA}, {NOMA}, and multicasting in {MISO} broadcast channel: A simple two-user rate analysis,'' \emph{IEEE Wireless Commun. Lett.}, vol.~9, no.~3, pp. 349--353, 2019.

\bibitem{zappone2015energy}
A.~Zappone and E.~Jorswieck, ``Energy efficiency in wireless networks via fractional programming theory,'' \emph{Found Trends{$^{\textregistered}$} in Commun. Inf. Theory}, vol.~11, no. 3-4, pp. 185--396, 2015.

\bibitem{liu2023energy}
B.~Liu, P.~Zhu, J.~Li, D.~Wang, and X.~You, ``Energy-efficient optimization in distributed massive {MIMO} systems for slicing {eMBB} and {URLLC} services,'' \emph{IEEE Trans. Veh. Technol.}, vol.~72, no.~8, pp. 10\,473--10\,487, 2023.

\bibitem{he2021beamforming}
S.~He, Z.~An, J.~Zhu, J.~Zhang, Y.~Huang, and Y.~Zhang, ``Beamforming design for multiuser {uRLLC} with finite blocklength transmission,'' \emph{IEEE Trans. Wireless Commun.}, vol.~20, no.~12, pp. 8096--8109, Dec. 2021.

\bibitem{zeng2013transmit}
Y.~Zeng, C.~M. Yetis, E.~Gunawan, Y.~L. Guan, and R.~Zhang, ``Transmit optimization with improper {G}aussian signaling for interference channels,'' \emph{IEEE Trans. Signal Process.}, vol.~61, no.~11, pp. 2899--2913, 2013.

\bibitem{zhang2010cooperative}
R.~Zhang and S.~Cui, ``Cooperative interference management with {MISO} beamforming,'' \emph{IEEE Trans. Signal Process.}, vol.~58, no.~10, pp. 5450--5458, 2010.

\bibitem{soleymani2019energy}
M.~Soleymani, C.~Lameiro, I.~Santamaria, and P.~J. Schreier, ``{Energy-Efficient} improper signaling for {$K$-User} interference channels,'' in \emph{Proc. IEEE Eu. Signal Process. Conf. (EUSIPCO)}, 2019, pp. 1--5.

\bibitem{buzzi2016survey}
S.~Buzzi, I.~Chih-Lin, T.~E. Klein, H.~V. Poor, C.~Yang, and A.~Zappone, ``A survey of energy-efficient techniques for 5{G} networks and challenges ahead,'' \emph{IEEE J. Sel. Areas Commun.}, vol.~34, no.~4, pp. 697--709, 2016.

\bibitem{soleymani2020rate}
M.~Soleymani, I.~Santamaria, B.~Maham, and P.~J. Schreier, ``Rate region of the {$K$-user MIMO} interference channel with imperfect transmitters,'' in \emph{Proc. IEEE Eu. Signal Process. Conf. (EUSIPCO)}, 2020, pp. 1--5.

\bibitem{zhu2023max}
W.~Zhu, H.~Tuan, E.~Dutkiewicz, H.~Poor, and L.~Hanzo, ``Max-min rate optimization of low-complexity hybrid multi-user beamforming maintaining rate-fairness,'' \emph{IEEE Transa. Wireless Commun.}, 2023.

\bibitem{grant2014cvx}
M.~Grant and S.~Boyd, ``{CVX}: Matlab software for disciplined convex programming, version 2.1,'' 2014.

\bibitem{soleymani2020improper}
M.~Soleymani, I.~Santamaria, and P.~J. Schreier, ``Improper {G}aussian signaling for the {$K$}-user {MIMO} interference channels with hardware impairments,'' \emph{IEEE Trans. Veh. Technol.}, vol.~69, no.~10, pp. 11\,632--11\,645, 2020.

\bibitem{boyd2004convex}
S.~Boyd and L.~Vandenberghe, \emph{Convex {O}ptimization}.\hskip 1em plus 0.5em minus 0.4em\relax Cambridge University Press, 2004.

\bibitem{soleymani2019improper}
M.~Soleymani, C.~Lameiro, I.~Santamaria, and P.~J. Schreier, ``Improper signaling for {SISO} two-user interference channels with additive asymmetric hardware distortion,'' \emph{IEEE Trans. Commun.}, vol.~67, no.~12, pp. 8624--8638, 2019.

\end{thebibliography}

\begin{IEEEbiography}[{\includegraphics[width=1in,height=1.25in,clip,keepaspectratio]{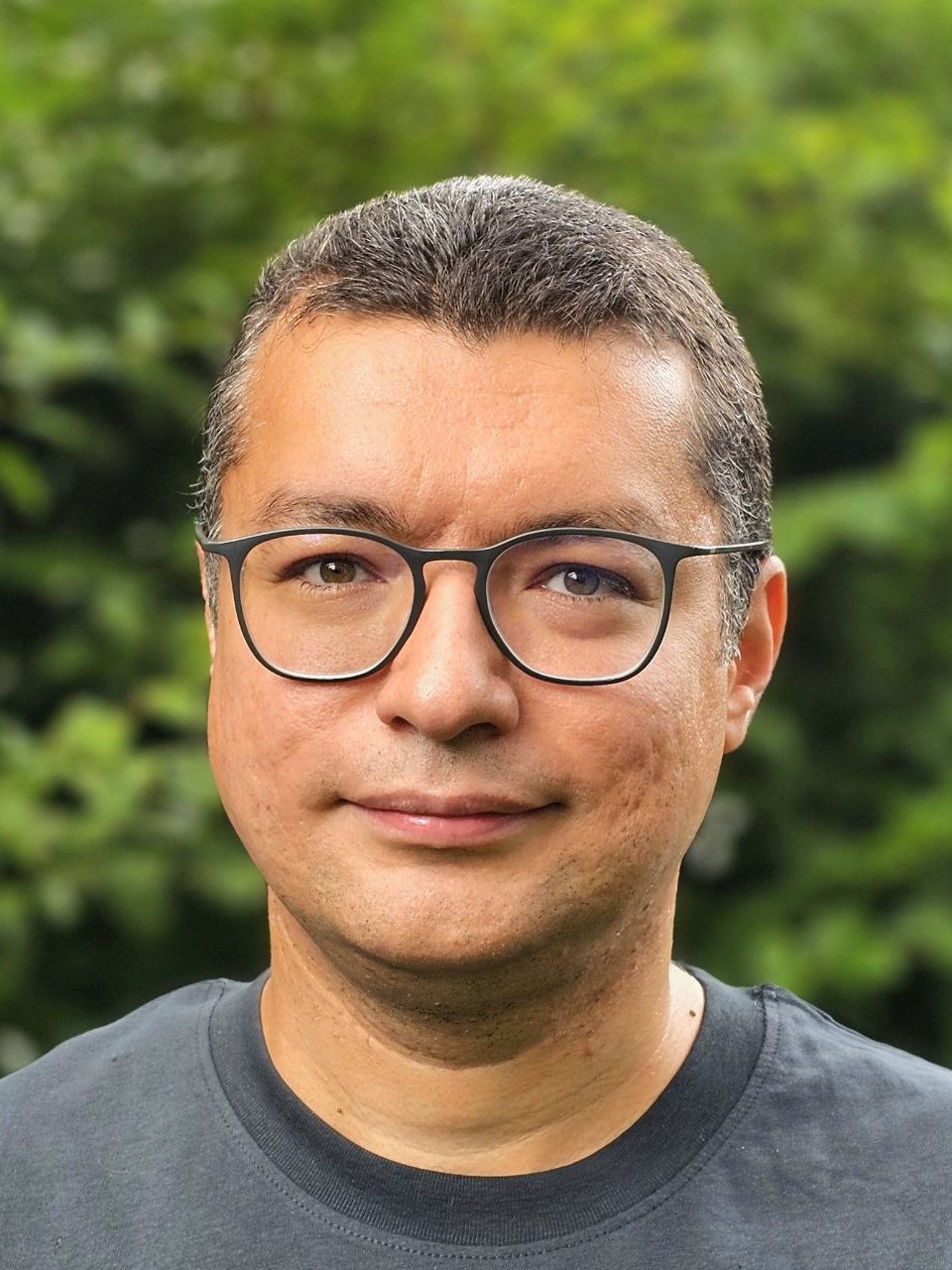}}]
{Mohammad Soleymani} was born in Arak, Iran. He received the B.Sc. degree from Amirkabir University of Technology (Tehran Polytechnic), the M.Sc. degree from Sharif University of Technology, Tehran, Iran, and the Ph.D. degree (with distinction) in electrical engineering from the University of Paderborn, Germany. He is currently an Akademischer Rat a. Z. with the Signal and System Theory Group at the University of Paderborn. He was a Visiting Researcher at the University of Cantabria, Spain. He serves on the editorial boards of ELSEVIER Signal Processing, EURASIP Journal on Wireless Communications and Networking, and Springer Journal of Wireless Personal Communications. His research interests include multi-user MIMO, wireless networking, numerical optimization, ultra-reliable low-latency communications (URLLC), and statistical signal processing.
\end{IEEEbiography}

\begin{IEEEbiography}[{\includegraphics[width=1in,height=1.25in,clip,keepaspectratio]{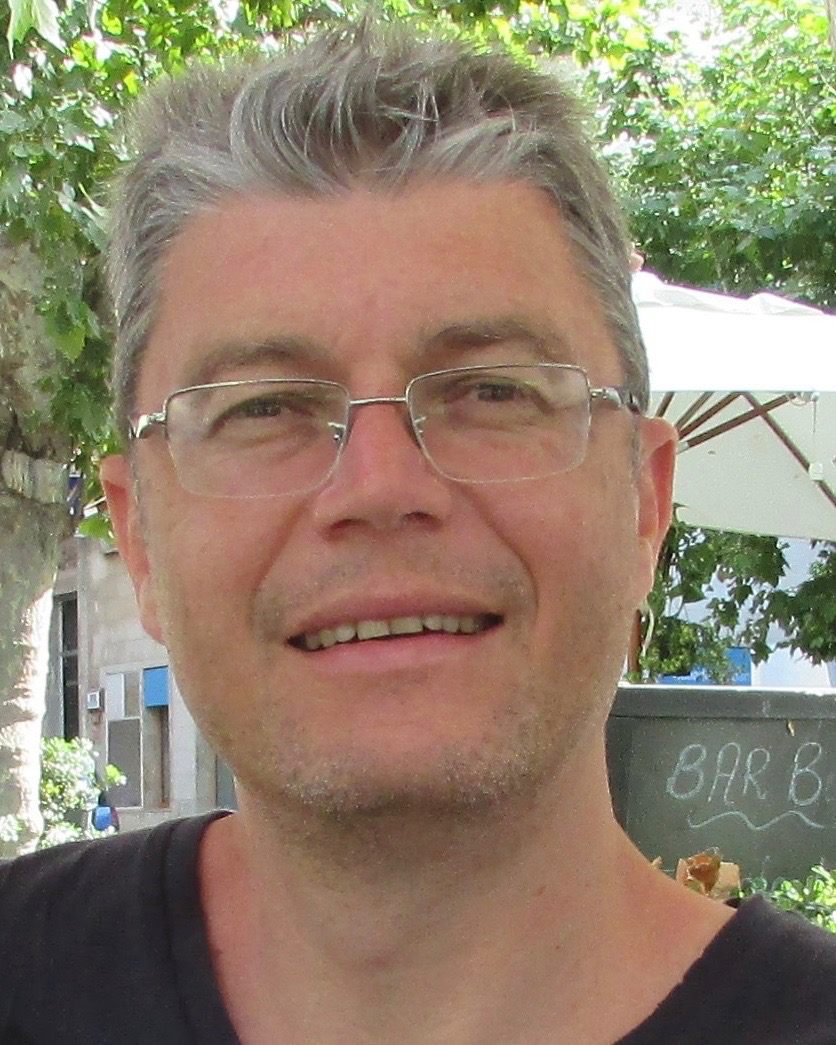}}]{Ignacio Santamaria} (M'96-SM'05) 
received the Telecommunication Engineer degree and the Ph.D. degree in electrical engineering from the Universidad Politecnica de Madrid (UPM), Spain, in 1991 and 1995, respectively. In 1992, he joined the Department of Communications Engineering, Universidad de Cantabria, Spain, where he is Full Professor since 2007. He has co-authored more than 250 publications in refereed journals and international conference papers, and holds two patents. He has co-authored the book D. Ramirez, I. Santamaria, and L.L. Scharf, ``Coherence in Signal Processing and Machine Learning'', Springer, 2022. His current research interests include signal processing algorithms and information-theoretic aspects of multiuser multiantenna wireless communication systems, multivariate statistical techniques and machine learning theories. He has been involved in numerous national and international research projects on these topics.  He has been a visiting researcher at the University of Florida (in 2000 and 2004), at the University of Texas at Austin (in 2009), and at the Colorado State University (in 2015 and 2017). He has been Associate Editor of the IEEE Transactions on Signal Processing (2011-2015), and Senior Area Editor of the IEEE Transactions on Signal Processing (2013-2015). He has been a member of the IEEE Machine Learning for Signal Processing Technical Committee (2009-2014), member of the IEEE Signal Processing Theory and Methods Technical Committee (2020-2022), and member of the IEEE Data Science Initiative (DSI) steering committee (2020-2022). Prof. Santamaria co-authored a paper that received the 2012 IEEE Signal Processing Society Young Author Best Paper Award, and has received the 2022 IHP International Wolfgang Mehr Fellowship Award.
\end{IEEEbiography} 

\begin{IEEEbiography}[{\includegraphics[width=1in,height=1.25in,clip,keepaspectratio]{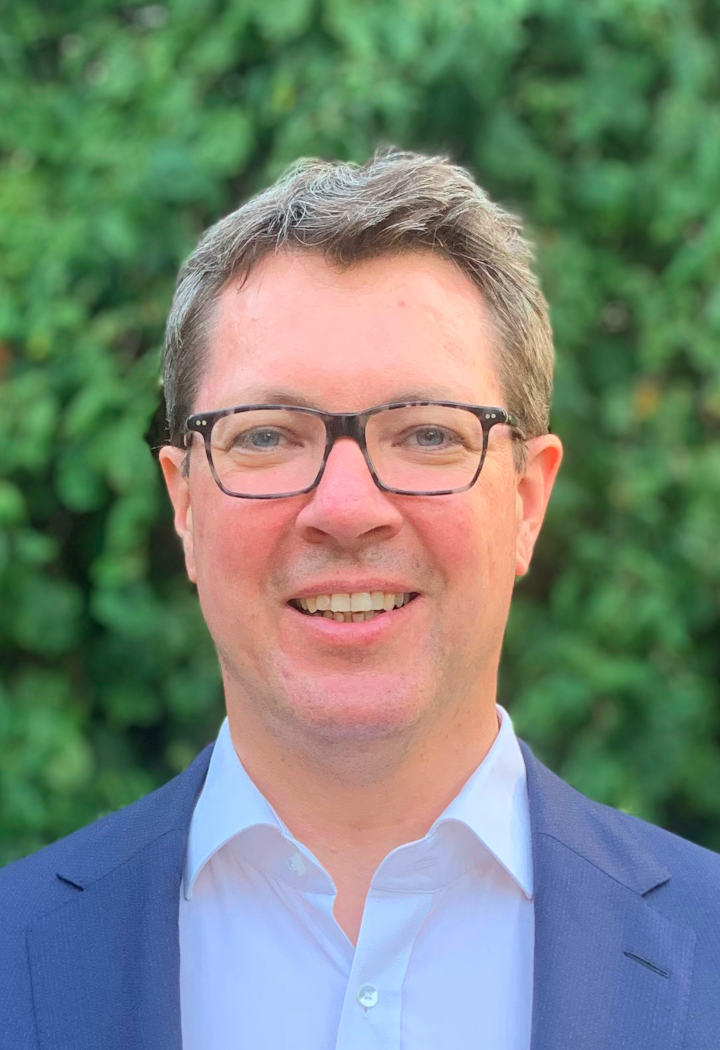}}]
{Eduard Axel Jorswieck} (Fellow, IEEE)  is full professor with the Faculty of Electrical Engineering, Information Technology, Physics of TU Braunschweig. He is the managing director of the Institute for Communications Technology and Full Professor at TU Braunschweig, Germany. From 2008 until 2019, he was the head of the Chair for Communications Theory and Full Professor at Dresden University of Technology, Germany. He is IEEE Fellow. His general interests are in signal processing for communications and networking, applied information theory and communication theory. His research interests include multiple antenna communications, wireless interference networks, reliability and resilience, and physical layer security. He has published 200 journal articles, 19 book chapters, one book, four monographs, and some 350 conference papers. He was a co-recipient of the IEEE Signal Processing Society Best Paper Award. He and his colleagues were also recipients of the Best Paper Awards and the Best Student Paper Awards from the IEEE CAMSAP 2011, IEEE WCSP 2012, IEEE SPAWC 2012, IEEE ICUFN 2018, PETS 2019, and ISWCS 2019, and IEEE ICC 2024. Since 2017, he has been the Editor-in-Chief of the EURASIP Journal on Wireless Communications and Networking. Since 2024, he has been an Editor for IEEE TRANSACTIONS ON INFORMATION THEORY. He was on the editorial boards of the IEEE SIGNAL PROCESSING LETTERS, IEEE TRANSACTIONS ON SIGNAL PROCESSING, the IEEE TRANSACTIONS ON WIRELESS COMMUNICATIONS, IEEE TRANSACTIONS ON INFORMATION FORENSICS AND SECURITY, and IEEE TRANSACTIONS ON COMMUNICATIONS. He received the 2019 outstanding editorial board award from the IEEE Transactions on Information Forensics and Security. 
\end{IEEEbiography}

\begin{IEEEbiography}[{\includegraphics[width=1in,height=1.25in,clip,keepaspectratio]{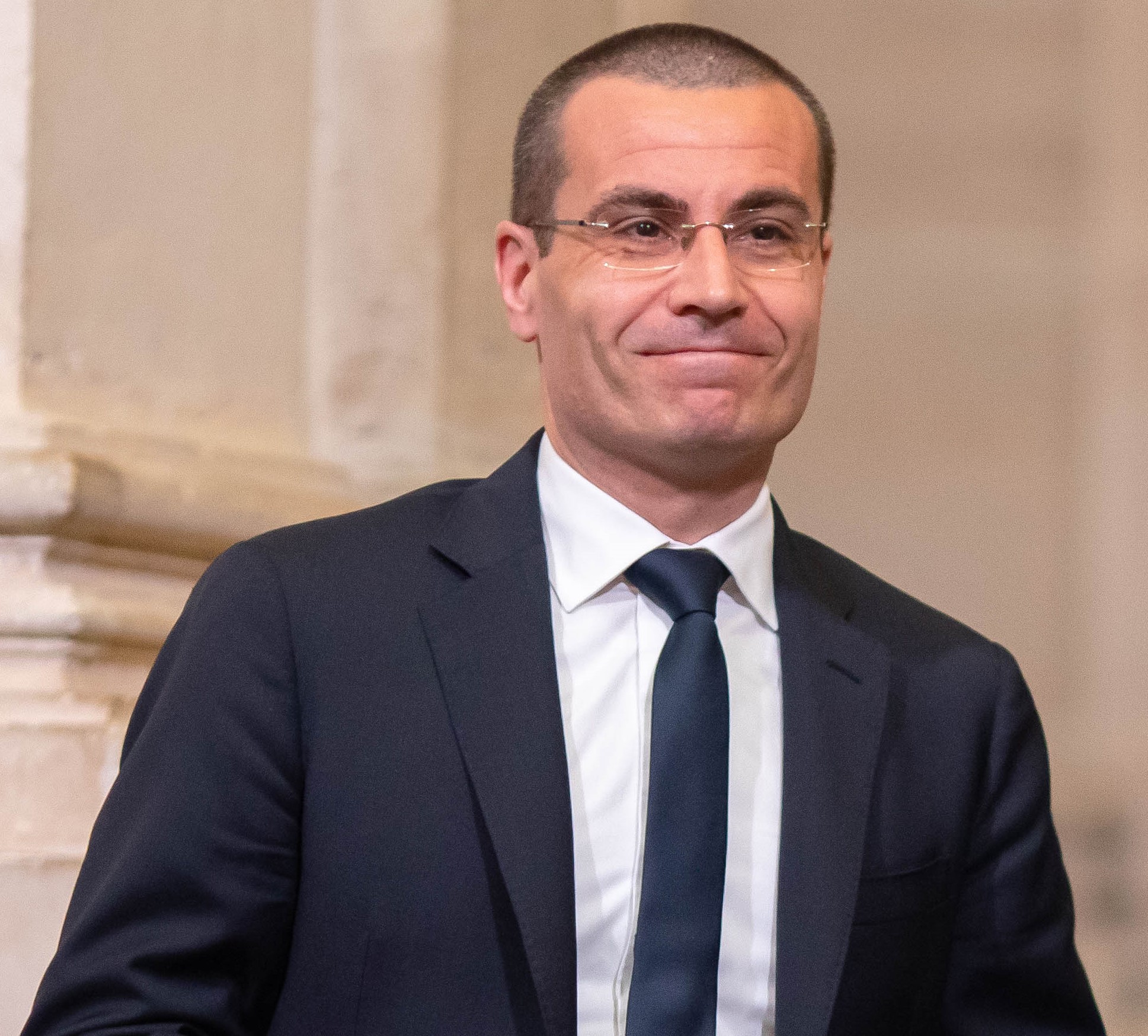}}]
{Marco Di Renzo} (Fellow, IEEE) received the Laurea (cum laude) and Ph.D. degrees in electrical engineering from the University of L’Aquila, Italy, in 2003 and 2007, respectively, and the Habilitation à Diriger des Recherches (Doctor of Science) degree from University Paris-Sud (currently Paris-Saclay University), France, in 2013. Currently, he is a CNRS Research Director (Professor) and the Head of the Intelligent Physical Communications group with the Laboratory of Signals and Systems (L2S) at CNRS \& CentraleSupélec, Paris-Saclay University, Paris, France, as well as a Chair Professor in Telecommunications Engineering with the Centre for Telecommunications Research -- Department of Engineering, King’s College London, London, United Kingdom. He was a France-Nokia Chair of Excellence in ICT at the University of Oulu (Finland), a Tan Chin Tuan Exchange Fellow in Engineering at Nanyang Technological University (Singapore), a Fulbright Fellow at The City University of New York (USA), a Nokia Foundation Visiting Professor at Aalto University (Finland), and a Royal Academy of Engineering Distinguished Visiting Fellow at Queen’s University Belfast (U.K.). He is a Fellow of the IEEE, IET, EURASIP, and AAIA; an Academician of AIIA; an Ordinary Member of the European Academy of Sciences and Arts, an Ordinary Member of the Academia Europaea, an Ordinary Member of the Italian Academy of Technology and Engineering; an Ambassador of the European Association on Antennas and Propagation; and a Highly Cited Researcher. His recent research awards include the Michel Monpetit Prize conferred by the French Academy of Sciences, the IEEE Communications Society Heinrich Hertz Award, and the IEEE Communications Society Marconi Prize Paper Award in Wireless Communications. He served as the Editor-in-Chief of IEEE Communications Letters from 2019 to 2023. His current main roles within the IEEE Communications Society include serving as a Voting Member of the Fellow Evaluation Standing Committee, as the Chair of the Publications Misconduct Ad Hoc Committee, and as the Director of Journals. Also, he is on the Editorial Board of the Proceedings of the IEEE.
\end{IEEEbiography}

\begin{IEEEbiography}[{\includegraphics[width=1in,height=1.25in,clip,keepaspectratio]{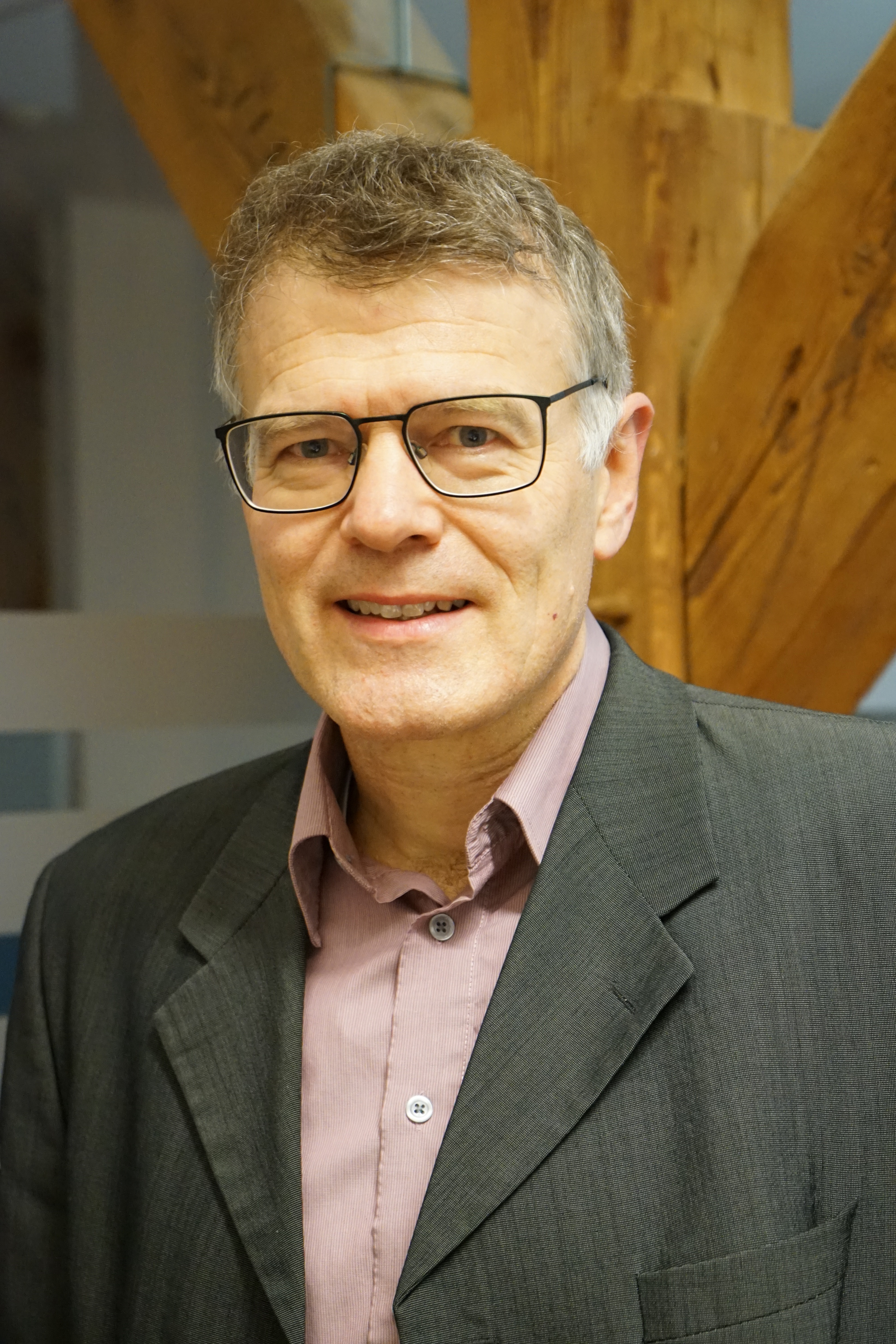}}]
{Robert Schober} (S'98, M'01, SM'08, F'10) received the Diplom (Univ.) and the Ph.D. degrees in electrical engineering from Friedrich-Alexander University of Erlangen-Nuremberg (FAU), Germany, in 1997 and 2000, respectively. From 2002 to 2011, he was a Professor and Canada Research Chair at the University of British Columbia (UBC), Vancouver, Canada. Since January 2012 he is an Alexander von Humboldt Professor and the Chair for Digital Communication at FAU. His research interests fall into the broad areas of Communication Theory, Wireless and Molecular Communications, and Statistical Signal Processing.

Robert received several awards for his work including the 2002 Heinz Maier Leibnitz Award of the German Science Foundation (DFG), the 2004 Innovations Award of the Vodafone Foundation for Research in Mobile Communications, a 2006 UBC Killam Research Prize, a 2007 Wilhelm Friedrich Bessel Research Award of the Alexander von Humboldt Foundation, the 2008 Charles McDowell Award for Excellence in Research from UBC, a 2011 Alexander von Humboldt Professorship, a 2012 NSERC E.W.R. Stacie Fellowship, a 2017 Wireless Communications Recognition Award by the IEEE Wireless Communications Technical Committee, and the 2022 IEEE Vehicular Technology Society Stuart F. Meyer Memorial Award. Furthermore, he received numerous Best Paper Awards for his work including the 2022 ComSoc Stephen O. Rice Prize and the 2023 ComSoc Leonard G. Abraham Prize. Since 2017, he has been listed as a Highly Cited Researcher by the Web of Science. Robert is a Fellow of the Canadian Academy of Engineering, a Fellow of the Engineering Institute of Canada, and a Member of the German National Academy of Science and Engineering.

He served as Editor-in-Chief of the IEEE Transactions on Communications, VP Publications of the IEEE Communication Society (ComSoc), ComSoc Member at Large, and ComSoc Treasurer. Currently, he serves as Senior Editor of the Proceedings of the IEEE and as ComSoc President.

\end{IEEEbiography} 

\begin{IEEEbiography}[{\includegraphics[width=1in,height=1.25in,clip,keepaspectratio]{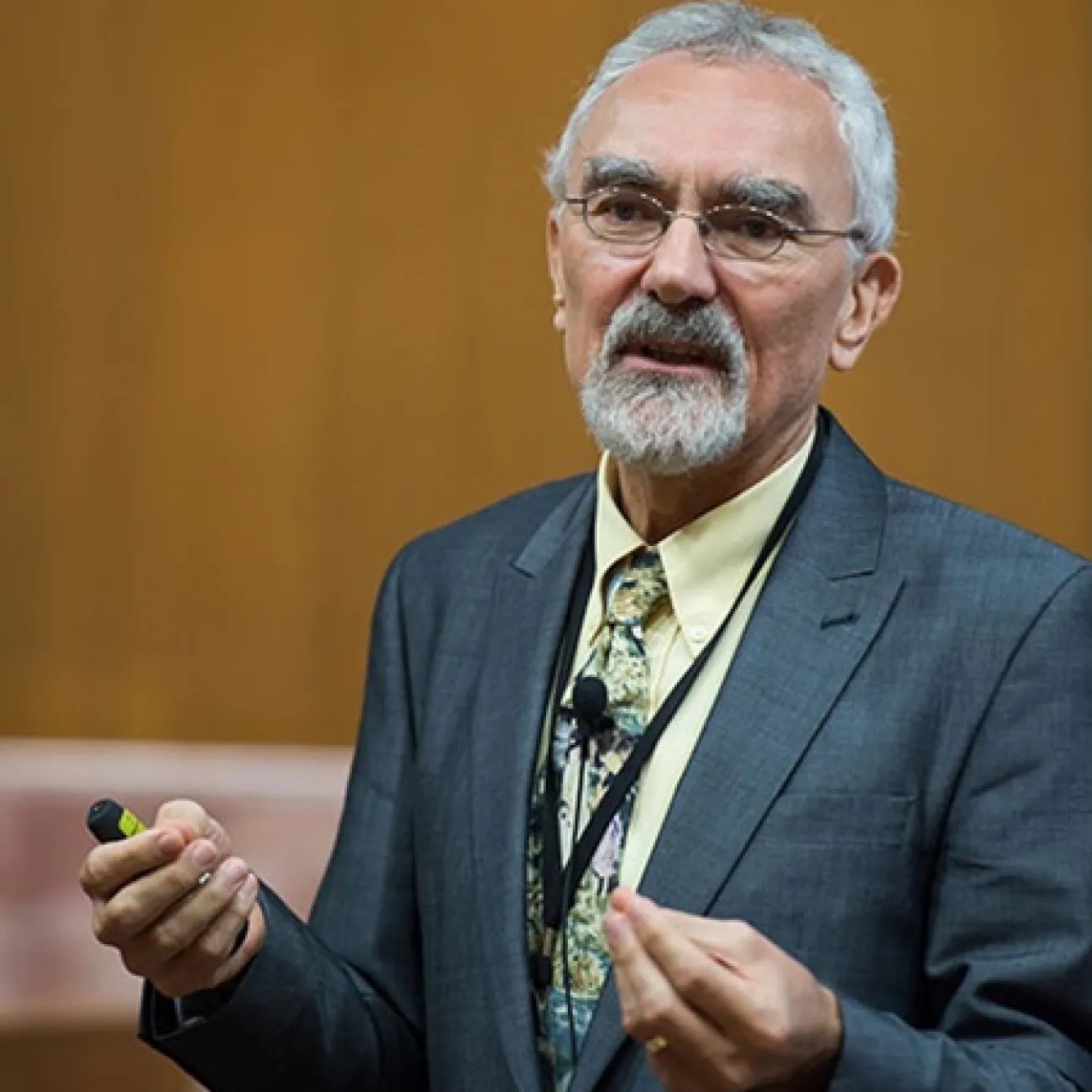}}]
{Lajos Hanzo}  is a Fellow of the Royal Academy of Engineering, FIEEE, FIET, Fellow of EURASIP and a Foreign Member of the Hungarian Academy of Sciences. He coauthored 2000+ contributions at IEEE Xplore and 19 Wiley-IEEE Press monographs. He was bestowed upon the IEEE Eric Sumner Technical Field Award.

\end{IEEEbiography} 

\end{document}